\newcommand\blfootnote[1]{%
  \begingroup
  \renewcommand\thefootnote{}\footnote{#1}%
  \addtocounter{footnote}{-1}%
  \endgroup
}
\title{The Arbitrarily Varying Gaussian Relay Channel with Sender Frequency Division}
\author{\IEEEauthorblockN{Uzi Pereg and Yossef Steinberg}
\IEEEauthorblockA{Department of Electrical Engineering\\
Technion, Haifa 32000, Israel.\\
Email: {\tt uzipereg@campus.technion.ac.il}, {\tt ysteinbe@ee.technion.ac.il}
%\\ \today
 }
}
\else\newcommand{\bibstar}[1]{}\fi
\definecolor{light-gray}{gray}{0.8}
\definecolor{dark-gray}{gray}{0.3}
\newlength{\dhatheight}
\newcommand{\bieee}{\begin{IEEEeqnarray}{rCl}}
\newcommand{\eieee}{\end{IEEEeqnarray}}
\newcommand{\prob}[1]{\Pr\left(#1\right)}
\newcommand{\given}{\mid}
\newcommand{\cprob}[2]{\Pr\left(#1\given #2\right)}
\newcommand{\E}{\mathbb{E}}
\newcommand{\var}{\mathbb{V}\mathrm{ar}}
\newcommand{\eps}{\varepsilon}
\newcommand{\norm}[1]{\left\lVert#1\right\rVert}
\newcommand{\ie}{\emph{i.e.} }
\newcommand{\eg}{\emph{e.g.} }
\newcommand{\cf}{\emph{cf.} }
\newcommand{\xvec}{\mathbf{x}}
\newcommand{\yvec}{\mathbf{y}}
\newcommand{\zvec}{\mathbf{z}}
\newcommand{\fvec}{\mathbf{f}}
\newcommand{\svec}{\mathbf{s}}
\newcommand{\avec}{\mathbf{a}}
\newcommand{\cvec}{\mathbf{c}}
\newcommand{\vvec}{\mathbf{v}}	
\newcommand{\Xvec}{\mathbf{X}}
\newcommand{\Yvec}{\mathbf{Y}}
\newcommand{\Zvec}{\mathbf{Z}}
\newcommand{\Svec}{\mathbf{S}}
\newcommand{\tm}{\widetilde{m}}	
\newcommand{\tM}{\widetilde{M}}
\newcommand{\tY}{\widetilde{Y}}
\newcommand{\tx}{\tilde{x}}%{\widetilde{x}}
\newcommand{\ty}{\tilde{y}}%{\widetilde{y}}
\newcommand{\tf}{\widetilde{f}}
\newcommand{\tg}{\widetilde{g}}
\newcommand{\gnu}{\tg}%{\breve{g}}%{g_p}%
\newcommand{\bR}{\tR}%{\breve{R}}
\newcommand{\oS}{\overline{S}}
\newcommand{\oq}{\overline{q}}
\newcommand{\tR}{\widetilde{R}}
\newcommand{\hm}{\hat{m}}
\newcommand{\hgamma}{\widehat{\gamma}}
\newcommand{\hP}{\hat{P}}
\newcommand{\hM}{\hat{M}}
\newcommand{\Aset}{\mathcal{A}}
\newcommand{\Dset}{\mathcal{D}}
\newcommand{\Fset}{\mathcal{F}}
\newcommand{\Jset}{\mathcal{J}}
\newcommand{\Uset}{\mathcal{U}}
\newcommand{\Qset}{\mathcal{Q}}
\newcommand{\Sset}{\mathcal{S}}
\newcommand{\Wset}{\mathcal{W}}
\newcommand{\Xset}{\mathcal{X}}
\newcommand{\Yset}{\mathcal{Y}}
\newcommand{\Eset}{\mathcal{E}}
\newcommand{\markovC}[1]{% 
\begin{tikzpicture}[#1]%
\draw (0,0.3ex) -- (1ex,0.3ex);%
\draw (0.5ex,0.3ex) circle (0.2ex);
\draw[white] (0.2ex,0) -- (0.5ex,0);%
\end{tikzpicture}%
}
\newcommand{\Cbar}{\markovC{scale=2}}
\theoremstyle{remark}	\newtheorem{theorem}{Theorem}
\theoremstyle{remark}	\newtheorem{lemma}[theorem]{Lemma}
\theoremstyle{remark}	\newtheorem{coro}[theorem]{Corollary}
\theoremstyle{remark} \newtheorem{definition}{Definition}
\theoremstyle{remark} 
\theoremstyle{remark} \newtheorem{example}{Example}
\newcommand{\channel}{W_{Y|X,S}}%{P_{Y|X,S}}								% a state-dependent channel
\newcommand{\avc}{\Wset}																		% an AVC with causal SI
\newcommand{\opC}{\mathbb{C}}																% operatianal capacity
\newcommand{\inC}{\mathsf{C}}															 	% information capacity (capacity single-letter formula)
\newcommand{\inR}{\mathsf{R}}
\newcommand{\pSpace}{\mathcal{P}}														% space of all distributions
\newcommand{\dM}{\mathsf{M}}															 	% size of codebook
\newcommand{\enc}{\mathrm{f}}																				% encoding function
\newcommand{\renc}{\mathrm{f}}																				% encoding function
\newcommand{\dec}{g}																			 	% decoding function
\newcommand{\code}{\mathscr{C}}															% a code										% script font
\newcommand{\gcode}{\mathscr{C}^{\,\Gamma}}									% a random code
\newcommand{\cerr}{P_{e|\svec}^{(n)}}													% probability of error given s^n
\newcommand{\err}{P_e^{(n)}}															% probability of error for q 
\newcommand{\plimit}{\Omega}																			% input constraint
\newcommand{\tset}{\Aset^{\delta}}													% a delta typical set
\newcommand{\qn}{q}
\newcommand{\tQ}{\hat{\Qset}_n}														% set of empirical dist delta close to Q
\newcommand{\encn}{f^n}																			% encoding function for x^n
\newcommand{\rstarC}{																			  % a big star (for random code super-script)
\, \hspace{-0.3cm} \text{ $$ \mbox{  
%\scriptsize( 
\hspace{-0.1cm} 
\small $\star$   
%\hspace{-0.25cm} \scriptsize)  
} $$ }
\hspace{-0.25cm}}
\newcommand{\rCav}{\opC^{\rstarC}\hspace{-0.1cm}(\avc)}%{\opC^{ \rstarC}_{caus} } % random code capacity for the AVC with causal SI
\newcommand{\rc}{W_{Y,Y_1|X,X_1,S}}													% a state-dependent relay channel
\newcommand{\tRYset}{\mathcal{L}}%{\tilde{\Bset}}
\newcommand{\RYcompound}{\tRYset^\Qset} %{\{\Qset,\channel\}}		% a compound channel with causal SI
\newcommand{\avrc}{\tRYset}																			% an AVC with causal SI
\newcommand{\RYopC}{\mathbb{C}}																% operatianal capacity
\newcommand{\RYrCcompound}{\RYopC^{\rstarC}\hspace{-0.1cm}(\RYcompound)}
\newcommand{\RYrCav}{\RYopC^{\rstarC}\hspace{-0.1cm}(\avrc)}  % random code capacity for the AVC with causal SI
\newcommand{\RYCcompound}{\RYopC(\RYcompound)}%{\Cavc^\Qset}			% capacity of the compound channel with causal sI
\newcommand{\RYCavc}{\RYopC(\avrc)}%{\opC_{caus}}								% capacity of the AVBC with causal sI
\newcommand{\prc}{W_{Y,Y_1|X,S}}													% a state-dependent relay channel
\newcommand{\pRYcompound}{\pavrc^\Qset} %{\{\Qset,\channel\}}		% a compound channel with causal SI
\newcommand{\pRYcompoundP}{\pavrc^{\pSpace(\Sset)}} %{\{\Qset,\channel\}} % a compound channel with P(S)
\newcommand{\pavrc}{\tRYset}%_{\text{prim}}}																			% an AVC with causal SI
\newcommand{\pRYrCcompound}{\RYopC^{\rstarC}\hspace{-0.1cm}(\pRYcompound)}
\newcommand{\pRYrCcompoundP}{\RYopC^{\rstarC}\hspace{-0.1cm}(\pRYcompoundP)}
\newcommand{\pRYrCav}{\RYopC^{\rstarC}\hspace{-0.1cm}(\pavrc)}  % random code capacity for the AVC
\newcommand{\pRYrICav}{\inR_{CS}^{\rstarC}}%(\pavrc)} % cutset upper bound
\newcommand{\pRYCcompound}{\RYopC(\pRYcompound)}%{\Cavc^\Qset}			% capacity of the compound channel
\newcommand{\pRYCavc}{\RYopC(\pavrc)}%{\opC_{caus}}								% capacity of the AVBC with causal sI
\newcommand{\pRYICcompound}{\inR_{CS}(\pRYcompound)}%{\Cavc^\Qset}			% formula for the compound channel with causal sI
\newcommand{\pRYdIRcompound}{\inR_{DF}(\pRYcompound)} % Decode-forward lower bound
\newcommand{\pRYdIRavc}{\inR_{DF}^{\rstarC}}%(\pavrc)}
\begin{document}
\maketitle

{}

\begin{abstract} 
We consider the arbitrarily varying Gaussian relay channel with sender frequency division. We determine the random code capacity, and establish lower and upper bounds on the deterministic code capacity. 
It is observed that when the channel input is subject to a low power limit, the deterministic code capacity may be strictly lower than the random code capacity,  and the gap vanishes  as the input becomes less constrained.  

A second model addressed in this paper is the general case of primitive arbitrarily varying relay channels.
We develop lower and upper bounds on the random code capacity, and give conditions under which the deterministic code capacity coincides with the random code capacity, and conditions under which it is lower.
Then, we establish the capacity of the primitive counterpart of the arbitrarily varying Gaussian relay channel with sender frequency division. In this case, the deterministic and random code capacities are the same.
\end{abstract}

\begin{IEEEkeywords}
Arbitrarily varying channel, deterministic code, Gaussian relay channel, Markov block code, orthogonal sender components, partial decode-forward,     random code,  sender frequency division. 
\end{IEEEkeywords}

\blfootnote{
This work was supported by the Israel Science Foundation (grant No. 1285/16).
}

\section{Introduction}
  %%Channels with states have been studied for the past sixty years as seen  \eg in  
%%\cite{Shannon:58p,BBT:59p,BBT:60p,GelfandPinsker:80p} and references in \cite{LapidothNarayan:98p}, 
 %%where a review  can be found. 
%%Among the motivations for this field of research is the adversarial communication model, 
 %%where  a \emph{jammer} selects the channel state in an attempt to disrupt communication.
%%Here, we consider the setting of a state-dependent channel, when side information (SI) is available at the encoder in a causal manner. 
%The arbitrarily varying channel (AVC) was first introduced by Blackwell \etal %, Breiman and Thomasian 
%\cite{BBT:60p} to describe a communication channel with unknown statistics, that may change over time.
%%The AVC is a state dependent discrete memoryless channel, where the channel state changes over time arbitrarily. 
%It  is often described as  communication in the presence of an 
%adversary, or a \emph{jammer}, attempting to disrupt communication. 
%%Among the motivations for this field of research is the adversarial communication model, 
%% where  a \emph{jammer} selects a sequence of channel states in an attempt to disrupt communication. 
%%While the channel is memoryless, 
%%%%when it is %. It is further 
%%it is presumed that the jammer is entitled to randomize the channel state sequence using an arbitrary strategy.
%%%% distribution $q(s^n)$ of any sort. 

Recently, there has been a growing interest in the Gaussian relay channel, % under input constraints, 
as \eg in \cite{ElGamalZahedi:05p,XueSandhu:07p,
%CFWXL:14c,
KolteOzgurElGamal:15p,WuOzgur:18p,%WuOzgur:15c,%WuBarnesOzgur:17a,
WuBarnesOzgur:17c} and references therein. In particular, El Gamal and Zahedi \cite{ElGamalZahedi:05p} introduced the Gaussian relay channel with sender frequency division (SFD), as a special case of a relay channel with orthogonal sender components,  described as follows. % (see also \cite{ElGamalKim:11b}).
The transmitter sends a sequence of pairs $\Xvec=(X_i',X_i'')_{i=1}^n$. At time  $i$, the relay receives the 
symbol $Y_{1,i}$, and transmits $X_{1,i}$ based on past received values 
$Y_{1,1},Y_{1,2},\ldots, Y_{1,i-1}$, and the destination decoder receives $Y_i$, with
the following input-output relation,
\begin{align}
&\Yvec_{1} = \Xvec''+\Zvec \,,\nonumber\\
&\Yvec= \Xvec'+\Xvec_{1}+\Svec \,.
\label{eq:GsfdDef}
\end{align}
The transmitter and the relay are subject to input constraints, $\frac{1}{n}\sum_{i=1}^n (X_i'^2+X_i''^2)\leq\plimit$ and $\frac{1}{n}\sum_{i=1}^n X_{1,i}^2\leq\plimit_1$, respectively. 
El Gamal and Zahedi \cite{ElGamalZahedi:05p} determined the capacity of this channel, under the assumption that $\Zvec$ and $\Svec$  are each independent and identically distributed (i.i.d.) according to a given normal distribution, $\mathcal{N}(0,\sigma^2)$ and $\mathcal{N}(0,\theta^2)$, respectively.
The model is especially relevant when the sender and the relay communicate over different frequency bands
\cite{ElGamalKim:11b}. 
%The relay channel with orthogonal sender components is a special case of the relay channel with orthogonal sender components.
%he capacity of this channel was determined in \cite{ElGamalZahedi:05p}.
%
%The relay channel was first introduced by van der Meulen \cite{vanderMeulen:71p}
 %to describe  point to point communication with the help of a relay, which
 %receives a noisy version of the transmitter signal in a strictly causal manner, and transmits a signal of its own to the destination receiver. The capacity of the relay channel is not known in general, however,
 %Cover and El Gamal established
%the cutset upper bound, the full/partial decode-forward lower bound, and compress-forward lower bound %on the capacity of the relay channel 
%\cite{CoverElGamal:79p}. 
%It was also shown in \cite{CoverElGamal:79p} that for the degraded relay channel, the decode-forward  bound and the cutset bound coincide.
 %El Gamal and Zahedi \cite{ElGamalZahedi:05p} determined the capacity of the relay channel with orthogonal sender components, by showing that the partial decode-forward bound and cutset bound coincide.  
%In a more recent paper \cite{AleksicRazaghiYu:2009p}, Aleksic \etal provided an example %of a binary relay channel,
%where achievability follows from the compress-forward lower bound, whereas
%% showing that
 %the cutset bound is not tight. %in general. Achievability is shown in  \cite{AleksicRazaghiYu:2009p} using the compress-forward lower bound.

 In practice,  channel statistics are not necessarily known in exact, and they may even change over time.
This has motivated the study of various arbitrarily varying networks (see \eg\cite{%LapidothNarayan:98p,
AhlswedeCai:99p,HofBross:06p,GoldfeldCuffPermuter:16p}).
In particular, this is the case with the Gaussian arbitrarily varying channel (AVC)  without a relay, specified by the relation $\Yvec=\Xvec+\Svec+\Zvec$, where $\Svec$ is a state sequence of unknown joint distribution $F_{\Svec}$, not necessarily independent nor stationary,  and the noise sequence $\Zvec$ is 
i.i.d. $\sim\mathcal{N}(0,\sigma^2)$.
The state sequence can be thought of as if generated by an adversary, or a \emph{jammer}, who randomizes the channel states arbitrarily in an attempt to disrupt communication. 
It is assumed that the user and the jammer are subject to input and state constraints, $\frac{1}{n}\sum_{i=1}^n X_{i}^2\leq\plimit$ and $\frac{1}{n}\sum_{i=1}^n S_i^2\leq\Lambda$ with probability $1$, respectively.
 In \cite{HughesNarayan:87p},  Hughes and Narayan showed that the
random code capacity, \ie the capacity achieved with common randomness, %achieved by stochastic-encoder stochastic-decoder coding schemes, 
 is given by $\inC^{\rstarC}_1=\frac{1}{2}\log(1+\frac{\plimit}{\sigma^2+\Lambda})$.
%
%As a special case, 
%%The arbitrarily varying channel (AVC) is an appropriate model to describe such a situation %, as introduced by Blackwell \etal
 %%\cite{BBT:60p}.
 %Hughes and Narayan \cite{HughesNarayan:87p} considered the Gaussian arbitrarily varying channel (AVC)  without a relay, and determined the random code capacity, \ie the capacity achieved by stochastic-encoder stochastic-decoder coding schemes.
%%AVC
%The channel is specified by the relation $\Yvec=\Xvec+\Svec+\Zvec$, where $\Svec$ is a state sequence of unknown distribution $q(\Svec)$, not necessarily independent nor stationary, and the noise sequence $\Zvec$ is 
%i.i.d. $\sim\mathcal{N}(0,\sigma^2)$. It is shown in \cite{HughesNarayan:87p} that the random code capacity of the Gaussian AVC under input constraint $\frac{1}{n}\sum_{i=1}^n X_{i}^2\leq\plimit$ and
%state constraint $\frac{1}{n}\sum_{i=1}^n S_i^2\leq\Lambda$ is given by 
%$\inC^{\rstarC}=\frac{1}{2}\log(1+\frac{\plimit}{\sigma^2+\Lambda})$.
%
%They determined the random code capacity, \ie the capacity achieved
%It was shown in \cite{HughesNarayan:87p} that the random code capacity, \ie the capacity achieved by stochastic-encoder stochastic-decoder coding schemes, is given by $\inC^{\rstarC}=\frac{1}{2}\log(1+\frac{\plimit}{\sigma^2+\Lambda})$.
%
%
Subsequently,  Csisz{\'{a}}r and Narayan \cite{CsiszarNarayan:91p} showed that the deterministic code capacity, also referred to as simply capacity, demonstrates a dichotomy property. That is, either the capacity  coincides with the random code capacity or else, it is zero. Specifically, the capacity is given by 
\begin{align}
\inC_1=\begin{cases}
\inC^{\rstarC}_1 &\text{if $\Lambda<\plimit$}\,,\\
0 &\text{if $\Lambda\geq\plimit$}\,.
\end{cases}
\label{eq:Gavc1}
\end{align}
It is pointed out in \cite{CsiszarNarayan:91p} that this result is \emph{not} a straightforward consequence of the elegant Elimination Technique \cite{Ahlswede:78p}, used by Ahlswede to establish dichotomy for the AVC without constraints. Although the direct part proof by Csisz{\'{a}}r and Narayan  is 
based on a simple minimum-distance decoder \cite{CsiszarNarayan:91p}, the analysis is a lot more involved compared to \cite{Ahlswede:78p}.

In this work, we study a version of the Gaussian arbitrarily varying  relay channel (AVRC), which is a combination of the Gaussian relay channel with SFD and the Gaussian AVC under input and state constraints.
The channel specified by (\ref{eq:GsfdDef}) is now governed by a state sequence $\Svec$  with an arbitrary joint distribution, which could also give probability mass $1$ to some %state sequence 
$\svec\in\mathbb{R}^n$, %not necessarily independent nor stationary, 
and a Gaussian noise sequence $\Zvec$ which is  i.i.d. $\sim\mathcal{N}(0,\sigma^2)$, subject to input and state  constraints, $\frac{1}{n}\sum_{i=1}^n (X_i'^2+X_i''^2)\leq\plimit$, $\frac{1}{n}\sum_{i=1}^n X_{1,i}^2\leq\plimit_1$, and   $\frac{1}{n}\sum_{i=1}^n S_i^2\leq\Lambda$.
The random code capacity is determined following the results in a previous work by the authors \cite{PeregSteinberg:17c4}, which gives  lower and upper bounds on the random code capacity of the general AVRC. Our main contribution in this paper is then to establish lower and upper bounds on the deterministic code capacity, extending the techniques by Csisz{\'{a}}r and Narayan \cite{CsiszarNarayan:91p} to the relay channel. The analysis is thus independent of the results in \cite{PeregSteinberg:17c4}. 
As in the basic scenario in \cite{CsiszarNarayan:91p}, it is observed that when the power limits $\plimit$ and $\plimit_1$ are low, %
   the capacity is below the random code capacity,    and the gap vanishes   as $\plimit$ and $\plimit_1$ increase. 
	
	A second model addressed in this paper is the general case of primitive arbitrarily varying relay channels, where there is a noiseless link between the relay and the receiver of limited capacity \cite{Kim:07c} (see also \cite{Xue:2014p,WuOzgur:18p,AsadiPalacioBausDevroye:18c,MondelliHassaniUrbanke:18c}).
We develop lower and upper bounds on the random code capacity, and give conditions under which the %deterministic code
 capacity coincides with the random code capacity, and conditions under which it is lower.
Then, we establish the capacity of the primitive counterpart of the Gaussian AVRC with SFD, in which case  the deterministic and random code capacities coincide.
 
\section{Definitions}
\label{sec:def}
\subsection{Notation}
\label{sec:notation}
We use the following notation conventions throughout. % this proposal.
%Calligraphic letters $\Xset,\Sset,\Yset,...$ are used for finite sets.
%, and $|\Xset|$ denotes the cardinality of the set $\Xset$. 
Lowercase letters $x,s,y,z,\ldots$  stand for constants and values of random variables, and uppercase letters $X,S,Y,Z,\ldots$ stand for random variables.  
 The distribution of a random variable $X$ is specified by a cumulative distribution function (cdf) 
	$F_X(x)=\prob{X\leq x}$ over the real line $\mathbb{R}$. Alternatively, the distribution may be specified by the probability density function $p(x)$.
 We use $x^k=(x_1,x_{2},\ldots,x_k)$ to denote  a vector in $\mathbb{R}^k$, %, with $j\geq 1$.
and the short notation $\xvec=(x_1,x_{2},\ldots,x_n)$, %is used %instead of $x^n$  
when it is understood from the context that the length of the vector is $n$.  The $\ell^2$-norm of $\xvec$ is denoted  by $\norm{\xvec}$.  
 A random sequence $\Xvec$ and its distribution $F_{\Xvec}(\xvec)=\prob{X_1\leq x_1,\ldots,X_n\leq x_n}$ are defined accordingly. 
For a pair of integers $i$ and $j$, $1\leq i\leq j$, we define the discrete interval $[i:j]=\{i,i+1,\ldots,j \}$.

\subsection{Coding}
\label{subsec:RYcoding}
We give the definitions of  deterministic and random codes below, %and a random code
 %for the AVRC $\avrc$. 
where %Unless mentioned otherwise, 
the term `code'  refers to a deterministic code.  
%
%\begin{definition}%[A code, an achievable rate and capacity]
%\label{def:RYcapacity}
A $(2^{nR},n)$ code for the Gaussian AVRC with SFD
 consists of the following;  a  message set $[1:2^{nR}]$,  where  $2^{nR}$ is assumed to be an integer, 
an encoder $(\fvec',\fvec''):[1:2^{nR}]\rightarrow\mathbb{R}^{2n}$,
a relay encoder $\fvec_1:\mathbb{R}^n\rightarrow\mathbb{R}^n$ where $\enc_{1,i}:\mathbb{R}^{i-1}\rightarrow \mathbb{R}$, % sequence of $n$ relaying functions $\enc_{1,i}:\mathbb{R}^{i-1}\rightarrow \mathbb{R}$,
$i\in [1:n]$, and a  decoding function $\dec: \mathbb{R}^n\rightarrow [1:2^{nR}]$.
The encoder and the relay satisfy the input constraints
%$\frac{1}{n}\sum_{i=1}^n[(\enc_i'(m)^2+(\enc_i''(m)^2]\leq\plimit$ and 
%$\frac{1}{n}\sum_{i=1}^n (\enc_{1,i}(y_1^{i-1}))^2\leq\plimit$ 
$\norm{\fvec'(m)}^2+\norm{\fvec''(m)}^2\leq n\plimit$ and
$\norm{\fvec_1(\yvec_1)}^2\leq n\plimit_1$
for all $m\in [1:2^{nR}]$ and $\yvec_1\in\mathbb{R}^n$.

At time $i\in [1:n]$, given a message $m\in [1:2^{nR}]$,  the encoder transmits $(x_i',x_i'')=(\enc'_i(m),\enc''_i(m))$, and the relay transmits 
%At time $i\in [1:n]$, the relay receives the sequence $(y_{1,1},\ldots,y_{1,i-1})$ and  transmits 
$x_{1,i}=\enc_{1,i}(y_{1,1},\ldots,y_{1,i-1})$. The relay codeword is then given by $\xvec_1=\fvec_1(\yvec_1)\triangleq \left( \enc_{1,i}(y_1^{i-1}) \right)_{i=1}^n$.
The decoder receives the output sequence $\yvec$ and  finds an estimate of the message
 $\hm=g(\yvec)$.
%(see Figure~\ref{fig:RC}). 
We denote the code by $\code=\left(\fvec'(\cdot),\fvec''(\cdot)
,\fvec_1(\cdot),\dec(\cdot) \right)$.
%
%Since the distribution $F_{\Svec}$ can give probability mass $1$ to some state sequence $\svec\in\mathbb{R}^n$, 
%it suffices to consider the condition probability of error given $\Svec=\svec$,
 Define the conditional probability of error given of the code $\code$ given a state sequence 
$\svec\in\mathbb{R}^n$, by 
\begin{align}
\cerr(\code)=\frac{1}{2^{nR}}\sum_{m=1}^{2^{nR}} \int_{\Dset(m,\svec)^c} \frac{1}{(2\pi\sigma^2)^{\nicefrac{n}{2}}}e^{-\norm{\zvec}^2/2\sigma^2} \,d\zvec \,,
\end{align}
 where
\begin{align}
\Dset(m,\svec)= \left\{ \zvec\in\mathbb{R}^n : g\big(\,\fvec'(m)+\fvec_1\big( \fvec''(m)+\zvec \big)+\svec \,\big)=m \right\} \,.
\end{align}
  %
%%\begin{subequations}
%\begin{multline}
%\label{eq:RYcerr}
%\cerr(\code)=
%\frac{1}{2^{nR}}\sum_{m=1}^{2^ {nR}}
%\sum_{ 
%\substack{
%(y^n,y_1^n)\,:\; \dec(y^n)\neq m
%}
%} \\ 
 %\nRC(y^n,y_1^n|\enc(m%,\Svec
%),f_1^n(y_1^n),\Svec) \,.
%\end{multline}
We say that $\code$ is a $(2^{nR},n,\eps)$ code for the Gaussian AVRC if it further satisfies 
%\begin{align}
%\label{eq:RYerr}
 $\cerr(\code)\leq \eps$, %\,,\; \text{
for all $\svec\in\mathbb{R}^n$ with $\norm{\svec}^2\leq%$\frac{1}{n}\sum_{i=1}^n s_i^2\leq
n\Lambda$.
%} \end{align} 
%
A rate $R$ is called achievable if for every $\eps>0$ and sufficiently large $n$, there exists a  $(2^{nR},n,\eps)$ code. The operational capacity $\opC$ is defined as the supremum of %the
 achievable rates. %and it is denoted by $\RYCavc$. 
 We use the term `capacity' referring to this operational meaning, and in some places we call it the deterministic code capacity in order to emphasize that achievability is measured with respect to  deterministic codes.  
%\end{definition} 

Next, we define a random code for which the encoders-decoder
triplet is drawn with shared randomness. %at random and %while the chosen triplet
%revealed to all sides. %The exact definition is given below.

\begin{definition}%[Random code]
\label{def:RYcorrC} 
A $(2^{nR},n)$ random code %for the AVRC $\avrc$
 consists of a collection of 
$(2^{nR},n)$ codes $\{\code_{j}%=(\enc_\gamma,\encn_{1,\gamma},\dec_{\gamma})
\}_{j\in\Jset%\in \mathsc{K}
}$, along with a probability mass function $\pi$ %(\gamma)$
 over the code collection. % $\Gamma$. 
%We denote such a code by $\gcode%=(\mu,\Gamma,\{\code_{\gamma}\}_{\gamma\in\Gamma})
%$.
%
For %Analogously to the deterministic case,
  a $(2^{nR},n,\eps)$ random code, % has the additional requirement
%\begin{multline}
 %$\cerr(\gcode)=$
 $\sum_{j\in\Jset} \pi(j)\cerr(%\sum_{\Svec\in\Sset^n} \qn(\Svec) \cerr(
\code_j)\leq$ $ \eps$, %\,,\;\\ \text{
for all $\svec\in\mathbb{R}^n$ with $\norm{\svec}^2\leq n\Lambda$. %$\qn(\Svec)\in\pSpace(\Sset^n)$. %}  \,. 
%\end{multline}
The capacity achieved by random codes is denoted by $\opC^{\rstarC}$, and it 
 is referred to as the random code capacity.
\end{definition}
%%%%%%%%%%%%%%%%%%%%%%%%%%%%%%%%%%%%%%%%%%%%%%%%%
\section{Main Results}
\label{sec:RYres}
Our results are given below.
%We present our results on the compound relay channel and the AVRC. % with causal SI.  
%For simplicity,  assume\begin{footnote}{The general case is addressed in \cite[Remark 1]{PeregSteinberg:17a2}.}\end{footnote} that $\plimit_1=\plimit$.
We determine the random code capacity and  give bounds %lower and upper bounds, 
on %the random code capacity and
 the deterministic code capacity of the Gaussian AVRC with SFD. % under input and state constraints.
%
%\subsection{Random Code Capacity}
For every $0\leq \alpha,\rho\leq 1$, let
\begin{align}
\mathsf{F}_G(\alpha,\rho)\triangleq %\max_{ 0\leq \alpha,\rho\leq 1} 
\min \bigg\{ \frac{1}{2}\log\left( 1+\frac{\plimit_1+\alpha\plimit+2\rho\sqrt{\alpha\plimit\cdot \plimit_1}}{\Lambda} \right) ,\;  
\frac{1}{2}\log\left( 1+\frac{(1-\alpha)\plimit}{\sigma^2} \right)+\frac{1}{2}\log\left( 1+\frac{(1-\rho^2)\alpha\plimit}{\Lambda} \right) 
\bigg\} \,.
\label{eq:Fg}
\end{align}

\begin{theorem}
\label{theo:RYgavcCr}
The random code capacity of the Gaussian AVRC with SFD, under input constraints $\plimit$ and $\plimit_1$ and state constraint $\Lambda$, is given by 
\begin{align}%{multline}  
\label{eq:RYrCavcGauss} 
\opC^{\rstarC} =  
\max_{ 0\leq \alpha,\rho\leq 1} \mathsf{F}_G(\alpha,\rho)
%\min \bigg\{ \frac{1}{2}\log\left( 1+\frac{(1+\alpha+2\rho\sqrt{\alpha})\plimit}{\Lambda} \right) \,,\;\\  
%\frac{1}{2}\log\left( 1+\frac{(1-\alpha)\plimit}{\sigma^2} \right)+\frac{1}{2}\log\left( 1+\frac{(1+\alpha(1-\rho^2))\plimit}{\Lambda} \right) 
%\bigg\} 
\,.
\end{align}%{multline}
\end{theorem}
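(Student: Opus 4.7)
The plan is to combine the general bounds on the random code capacity of arbitrarily varying relay channels from our earlier work \cite{PeregSteinberg:17c4} with the classical fact that an additive i.i.d.\ Gaussian state is the worst-case jamming strategy against jointly Gaussian user inputs.

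\textbf{Achievability.} Starting from the random-code partial-decode-forward lower bound of \cite{PeregSteinberg:17c4}, I would specialize it to the SFD structure by picking the auxiliary $U=X''$ and taking $(X',X'',X_1)$ jointly Gaussian with $X''$ independent of $(X',X_1)$ (reflecting the orthogonality of the sender components), $\E[X'^2]=\alpha\plimit$, $\E[X''^2]=(1-\alpha)\plimit$, $\E[X_1^2]=\plimit_1$, and correlation coefficient $\rho$ between $X'$ and $X_1$. The random-coding rate is the minimum of two mutual-information terms, one of MAC type at the destination, $I(X',X_1;Y|S)$, and one of the form $I(X'';Y_1)+I(X';Y|X_1,S)$. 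The first is already given by the MAC-type expression in $\mathsf{F}_G$, since $X''$ does not enter $Y$ and the state appears only in $Y$. For the second, the worst case over state distributions satisfying $\E[S^2]\leq\Lambda$ is attained by $S\sim\mathcal{N}(0,\Lambda)$ by the standard Shannon entropy-maximization argument (additive Gaussian noise of a given variance minimizes capacity for an average-power-limited input), yielding exactly $\frac{1}{2}\log(1+(1-\alpha)\plimit/\sigma^2)+\frac{1}{2}\log(1+(1-\rho^2)\alpha\plimit/\Lambda)$. Optimizing over $\alpha,\rho\in[0,1]$ gives the lower bound $\max \mathsf{F}_G(\alpha,\rho)$.

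\textbf{Converse.} For the upper bound, I would consider the specific jammer strategy in which $S_1,\ldots,S_n$ are i.i.d.\ $\mathcal{N}(0,\Lambda-\delta)$ for arbitrarily small $\delta>0$. By the law of large numbers this strategy satisfies the almost-sure constraint $\norm{\svec}^2\leq n\Lambda$ with probability tending to one; a standard truncation (capping $\svec$ to the sphere of radius $\sqrt{n\Lambda}$) enforces the constraint outright without altering the rate in the limit. Under this particular jammer the channel degenerates to the classical memoryless Gaussian relay channel with SFD whose capacity, as determined by El Gamal and Zahedi \cite{ElGamalZahedi:05p}, is $\max_{\alpha,\rho}\mathsf{F}_G(\alpha,\rho)$ with $\Lambda-\delta$ in place of $\Lambda$. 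Since the random code capacity cannot exceed the capacity against this particular jammer, letting $\delta\downarrow 0$ and using continuity in $\Lambda$ yields the matching upper bound.

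\textbf{Main obstacle.} As in the classical Gaussian AVC converse of Hughes and Narayan \cite{HughesNarayan:87p}, the delicate point is reconciling the almost-sure state constraint with the use of unbounded i.i.d.\ Gaussian states in the converse argument. I would address this by the capped-Gaussian device just mentioned, verifying that the truncation has vanishing effect on the induced output distribution at the destination. The SFD structure simplifies matters further: since the state appears only in $Y$ and not in $Y_1$, the worst-case analysis at the relay side reduces to the intrinsic Gaussian noise $\mathcal{N}(0,\sigma^2)$, and only the destination-side mutual-information terms require the Gaussian-worst-case argument.
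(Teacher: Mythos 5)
Your proposal follows essentially the same route as the paper: specialize the random-code partial decode-forward lower bound of \cite{PeregSteinberg:17c4} to jointly Gaussian $(X',X'',X_1)$ with the orthogonal split $\E[X'^2]=\alpha\plimit$, $\E[X''^2]=(1-\alpha)\plimit$, $\E[X_1^2]=\plimit_1$, correlation $\rho$; invoke the Gaussian-is-worst-case fact for the state; and obtain the converse by considering an i.i.d.\ $\mathcal{N}(0,\Lambda-\delta)$ state, at which point the cutset bound becomes El Gamal--Zahedi's capacity formula. The paper packages the Gaussian-is-worst-case step into an explicit lemma (drawing on \cite{HughesNarayan:87p} and \cite{CsiszarNarayan:88p1}) and then evaluates each mutual-information term separately, which is in substance what you describe, modulo a notational slip in writing $I(X',X_1;Y|S)$ rather than $I_q(X',X_1;Y)$ (the receiver does not observe $S$).

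The one place where you take a genuinely different, and weaker, device than the paper is the handling of the almost-sure state constraint in the converse. You propose projecting the i.i.d.\ Gaussian state onto the sphere of radius $\sqrt{n\Lambda}$; but this projection acts on the whole block and destroys memorylessness, so El Gamal--Zahedi's single-letter converse no longer applies directly, and you would have to argue (you acknowledge this but do not do it) that the capped sequence is close enough to the uncapped one. The paper sidesteps this entirely: it does not cap. It observes that a $(2^{nR},n,\eps_n^*)$ code against all constrained sequences has, against the memoryless i.i.d.\ $\overline{S}\sim\mathcal{N}(0,\Lambda-\delta)$ jammer, error at most $\eps_n^*+\Pr\bigl(\tfrac{1}{n}\norm{\overline{\Svec}}^2>\Lambda\bigr)$, both of which vanish. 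One then applies Fano's inequality and the maximum differential entropy bound to the genuinely memoryless Gaussian relay channel. That version of the argument is both cleaner and rigorous; I would replace your truncation step by this one.
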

 The proof of Theorem~\ref{theo:RYgavcCr} is given in Appendix~\ref{app:RYgavcCr}, following the considerations in   \cite{PeregSteinberg:17c4}.
%
%\subsection{Deterministic Code Lower and Upper Bounds}
Next, we give lower and upper bounds on the deterministic code capacity. % of the Gaussian AVRC with SFD under input and state constraints.
Define
\begin{align}
&\begin{array}{lll}
\inR_{G,low}%(\avrc)
\triangleq &\mathrm{max} &\mathsf{F}_G(\alpha,\rho) \\
& \text{$\mathrm{subject}$ $\mathrm{to}$} & 0\leq \alpha,\rho\leq 1\,, \\
&& (1-\rho^2)\alpha\plimit>\Lambda \,,\\
&& \frac{\plimit_1}{\plimit}(\sqrt{\plimit_1}+\rho\sqrt{\alpha\plimit})^2 > \\&&\qquad\qquad \Lambda+(1-\rho^2)\alpha\plimit \,,
\end{array}
\label{eq:RYGlow}
\\%\intertext{
%and }
&\begin{array}{lll}
\inR_{G,up}%(\avrc)
\triangleq &\mathrm{max} &\mathsf{F}_G(\alpha,\rho) \\
& \text{$\mathrm{subject}$ $\mathrm{to}$} & 0\leq \alpha,\rho\leq 1\,, \\
&& \plimit_1+\alpha\plimit+2\rho\sqrt{\alpha\plimit\cdot\plimit_1} \geq\Lambda %\,.\\
\end{array}
\label{eq:RYGup}
\end{align}
It can be seen that $\inR_{G,low}%(\avrc)
\leq \inR_{G,up}%(\avrc)
$,
since
\begin{align}
\plimit_1+\alpha\plimit+2\rho\sqrt{\alpha\plimit\cdot\plimit_1}
=(\sqrt{\plimit_1}+\rho\sqrt{\alpha\plimit})^2+(1-\rho^2)\alpha\plimit
 \geq (1-\rho^2)\alpha\plimit \,.
%=(1-\rho^2)\alpha\plimit+(\sqrt{\plimit_1}+\rho\sqrt{\alpha\plimit})^2 \,.
\end{align}
%For large enough $\plimit$ and $\plimit_1$, we have that $\inR_{G,low}%(\avrc)
%= \inR_{G,up}%(\avrc)
%$.
Observe that if $\plimit_1>\Lambda$, then the random code capacity is given by $\opC^{\rstarC} =\inR_{G,up}$ by Theorem~\ref{theo:RYgavcCr}, as the expressions on the RHS of (\ref{eq:RYrCavcGauss}) and (\ref{eq:RYGup}) coincide.
Furthermore, if $\plimit_1$ is large enough, and $\plimit>\sigma^2>\Lambda$, then 
\begin{align}
\mathsf{F}_G(\alpha,\rho)= 
\frac{1}{2}\log\left( 1+\frac{(1-\alpha)\plimit}{\sigma^2} \right)+\frac{1}{2}\log\left( 1+\frac{(1-\rho^2)\alpha\plimit}{\Lambda} \right) \leq \mathsf{F}_G(1,0)
 \,,
\label{eq:FgLp}
\end{align}
which implies that the bounds coincide, and $%\opC=
\opC^{\rstarC}=\inR_{G,low} =\inR_{G,up}=\frac{1}{2}\log\left( 1+\frac{\plimit}{\Lambda} \right)$.

The deterministic code analysis is based on the following lemma by \cite{CsiszarNarayan:91p}. %is mostly independent of the previous results, and it extends the technique by \cite{CsiszarNarayan:91p}.  
\begin{lemma}[see {\cite[Lemma 1]{CsiszarNarayan:91p}}]
\label{lemm:CN91}
For every $\eps>0$, $8\sqrt{\eps}<\eta<1$, $K>2\eps$, and  $\dM=2^{nR}$, with
 $2\eps\leq R\leq K$, and $n\geq n_0(\eps,\eta,K)$, there exist $\dM$ unit vectors 
$\avec(m)\in \mathbb{R}^n$, $m\in [1:\dM]$,
 such that for every unit vector $\cvec\in\mathbb{R}^n$ and $0\leq \theta,\zeta\leq 1$, 
\begin{align*}
&\big| \big\{  \tm\in [1:\dM] : \langle \avec(\tm),\cvec \rangle \geq \theta   \big\} \big| \leq
2^{n\left( [R+\frac{1}{2}\log(1-\theta^2)]_{+}+\eps \right) } \,,
\intertext{
and if $\theta\geq \eta$ and $\theta^2+\zeta^2>1+\eta-2^{-2R}$, then
}
&\frac{1}{\dM}\big| \big\{  m\in [1:\dM] :  |\langle \avec(\tm),\avec(m) \rangle|\geq \theta  \,,\; \nonumber\\
&\hspace{1cm}
|\langle \avec(\tm),\cvec \rangle| \geq \zeta
\,,\;  
\text{for some $\tm\neq m$}
  \big\} \big| \leq
2^{-n\eps} \,,
\end{align*}
where  $[t]_{+}=\max\{0,t\}$ and  $\langle \cdot,\cdot\rangle$ denotes  inner product.
\end{lemma}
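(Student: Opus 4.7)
The plan is to construct the vectors $\avec(1),\ldots,\avec(\dM)$ probabilistically by drawing them independently and uniformly from the unit sphere $S^{n-1}\subset\mathbb{R}^n$, and to show that both stated inequalities hold simultaneously with positive probability, so that a deterministic realization with the required geometry must exist. The two main ingredients are sharp exponential estimates for the normalized area of spherical caps, together with a covering ($\delta$-net) argument that upgrades bounds holding at a single direction $\cvec$ (and at discrete values of $\theta,\zeta$) to bounds uniform in all parameters.

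For the first inequality, I start from the standard estimate that in the uniform measure $\mu$ on $S^{n-1}$, $\mu(\{\avec:\langle\avec,\cvec\rangle\geq\theta\})\doteq(1-\theta^2)^{n/2}$ up to a polynomial prefactor in $n$. For fixed $\cvec$ and $\theta$, the count $|\{\tm:\langle\avec(\tm),\cvec\rangle\geq\theta\}|$ is a sum of $\dM$ independent Bernoullis of mean $\doteq 2^{n[R+\frac12\log(1-\theta^2)]}$, and a Chernoff bound forces it to be at most $2^{n([R+\frac12\log(1-\theta^2)]_{+}+\eps/2)}$ with doubly exponentially small failure probability. To obtain uniformity in $\cvec$ and $\theta$, I introduce a $\delta$-net $\Nset_\delta\subset S^{n-1}$ of cardinality at most $(3/\delta)^n$ together with a finite grid of values of $\theta$, union-bound over these discrete sets, and extend to arbitrary $\cvec$ via the monotone inclusion $\{\langle\cdot,\cvec\rangle\geq\theta\}\subseteq\{\langle\cdot,\cvec'\rangle\geq\theta-\delta\}$ whenever $\|\cvec-\cvec'\|\leq\delta$. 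Choosing $\delta$ so that $\log(3/\delta)\ll n\eps/4$ absorbs the union-bound overhead into the slack.

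For the second inequality, by the independence of $\avec(m)$ from the other $\avec(\tm)$'s and the rotational invariance of $\mu$, the probability that a given index is bad is at most $(1-\zeta^2)^{n/2}\min\{1,\dM(1-\theta^2)^{n/2}\}$, which also bounds the expected fraction of bad indices. The heart of the argument is to show that the hypotheses $\theta\geq\eta$, $\theta^2+\zeta^2>1+\eta-2^{-2R}$ and $\eta>8\sqrt{\eps}$ force this expectation to be at most $2^{-n\eps}$. I would handle this by a two-case split: if $\theta^2\leq 1-2^{-2R}$, then the sum hypothesis yields $\zeta^2>\eta$, so $(1-\zeta^2)^{n/2}\leq(1-\eta)^{n/2}$ while the $\min$ term equals $1$; if $\theta^2>1-2^{-2R}$, a short one-variable optimization of $u(A-u)$ with $u=1-\theta^2\in(0,2^{-2R}]$ and $A=1-\eta+2^{-2R}$ gives $\dM(1-\theta^2)^{n/2}(1-\zeta^2)^{n/2}\leq\max\{(1-\eta)^{n/2},\,2^{-nR}\}$. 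Since $\eta>8\sqrt{\eps}$ implies $1-\eta\leq 2^{-2\eps}$ for admissible $\eps$, in both cases the expectation is at most $2^{-n\eps}$. Markov's inequality together with the same $\delta$-net and a quantization grid for $\theta,\zeta$ then converts this into the desired uniform bound.

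The main obstacle is the simultaneous calibration of the many parameters — the net spacing $\delta$, the quantization grids for $\theta$ and $\zeta$, the Chernoff exponents, and the boundary of the two-case split — so that the cumulative union-bound overhead remains subdominant to the $\eps$ slack while the individual tail probabilities remain doubly exponentially small. Once this calibration is in place, a standard union bound shows that both inequalities hold with positive probability for every $(\cvec,\theta,\zeta)$, and any such realization furnishes the required collection $\{\avec(m)\}$.
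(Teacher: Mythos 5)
First, a framing point: the paper does not prove this lemma at all. It is imported verbatim from Csisz\'ar and Narayan (1991, Lemma 1) and cited as such, so there is no ``paper proof'' to compare against. What you have written is a fresh proof sketch, and it has to be judged on its own.

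Your plan — draw $\avec(1),\ldots,\avec(\dM)$ i.i.d.\ uniformly on $S^{n-1}$, use sharp spherical-cap estimates, and upgrade to all $\cvec$ by a $\delta$-net — is essentially sound for the \emph{first} inequality, because there the per-net-point failure is doubly exponentially small (Chernoff on a sum of $\dM$ independent Bernoullis) and comfortably dominates an exponential-size net. (You would still want to be careful near $\theta=1$, where $\frac{d}{d\theta}\log(1-\theta^2)$ blows up so a fixed-spacing grid in $\theta$ does not suffice; but since there the target degenerates to $2^{n\eps}$ this is manageable with a truncation.)

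The \emph{second} inequality is where the argument has a genuine gap. Your per-triple expectation bound is of the same exponential order as the target $2^{-n\eps}$: when $\eta$ is close to $8\sqrt\eps$, the slack between $\max\{(1-\eta)^{n/2},2^{-nR}\}$ and $2^{-n\eps}$ is only $2^{-n\cdot O(\sqrt\eps)}$, so Markov gives a per-$(\cvec,\theta,\zeta)$ failure probability of roughly $2^{-n\cdot O(\sqrt\eps)}$. The $\delta$-net over $\cvec$ needs $\delta\lesssim\sqrt\eps$ in order to control both the perturbation of the thresholds and the stability of the hypothesis $\theta^2+\zeta^2>1+\eta-2^{-2R}$ under a move of $\cvec$, and such a net has about $(3/\delta)^n = 2^{n\log_2(3/\delta)}$ points with $\log_2(3/\delta)\gg\sqrt\eps$ for small $\eps$. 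The union bound therefore does not close: a first-moment estimate plus Markov is too weak to prove a statement that must hold \emph{uniformly} over the uncountable set of directions $\cvec$. Nor can you rescue it by deducing the second inequality deterministically from the first: bounding the bad set by $\sum_{\tm\in B_\zeta}\big|\{m:|\langle\avec(\tm),\avec(m)\rangle|\geq\theta\}\big|$ and invoking the first inequality twice gives
\[
\frac{1}{\dM}\,\big|\text{Bad}\big|\;\le\; 4\cdot 2^{\,n\big([R+\frac12\log(1-\theta^2)]_{+}+[R+\frac12\log(1-\zeta^2)]_{+}+2\eps'-R\big)},
\]
and the exponent fails to be $\leq -\eps$ precisely when, e.g., $R$ is near its lower end $2\eps$ and $\theta,\zeta$ are both large enough that both brackets vanish — a regime the stated hypotheses permit. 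Closing this gap requires either a real concentration inequality for the (dependent) count of bad indices or a different construction/combinatorial argument altogether, which is what Csisz\'ar and Narayan's proof actually supplies; as written, your sketch does not reach the second claimed bound.
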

Intuitively, the lemma states that under certain conditions,  a codebook can be constructed with an exponentially small fraction of ``bad" messages, for which the codewords are non-orthogonal to each other and the state sequence.

\begin{theorem}
\label{theo:RYgavcCdet}
The capacity of the Gaussian AVRC with SFD, 
under input constraints $\plimit$ and $\plimit_1$ and state constraint $\Lambda$, is bounded by 
\begin{align}%{multline}  
\label{eq:RYICavcGaussDet} 
\inR_{G,low}%(\avrc)
\leq \opC\leq \inR_{G,up}%(\avrc)
 \,.
%\RYCavc = 
%\max_{ 
%\substack{ 0\leq \alpha,\rho\leq 1 \,: \\  (1+\alpha+2\rho\sqrt{\alpha})\plimit>\Lambda}
%} 
%\min \bigg\{ \frac{1}{2}\log\left( 1+\frac{(1+\alpha+2\rho\sqrt{\alpha})\plimit}{\Lambda} \right) \,,\;\\  
%\frac{1}{2}\log\left( 1+\frac{(1-\alpha)\plimit}{\sigma^2} \right)+\frac{1}{2}\log\left( 1+\frac{(1+\alpha(1-\rho^2))\plimit}{\Lambda} \right) 
%\bigg\} \,.
\end{align}%{multline}
\end{theorem}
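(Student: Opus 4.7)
The plan is to treat the two bounds separately. For the achievability $\inR_{G,low}\leq\opC$, I would use a block Markov partial decode-forward scheme whose codebooks are built from the almost-orthogonal unit vectors supplied by Lemma~\ref{lemm:CN91}, extending the Csisz\'ar--Narayan approach to the relay setting. For the converse $\opC\leq\inR_{G,up}$, I would combine the trivial bound $\opC\leq\opC^{\rstarC}$ from Theorem~\ref{theo:RYgavcCr} with a symmetrizability argument that rules out any $(\alpha,\rho)$ violating $\plimit_1+\alpha\plimit+2\rho\sqrt{\alpha\plimit\plimit_1}\geq\Lambda$.

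For achievability, fix $(\alpha,\rho)$ in the feasible region of (\ref{eq:RYGlow}) and split the message as $(m_A,m_B)$, with $m_A$ to be decoded at both the relay and the destination and $m_B$ only at the destination. Organize transmission into $B$ blocks of length $n$ and apply Lemma~\ref{lemm:CN91} to obtain three families of almost-orthogonal unit vectors $\{\bvec(m_A)\}$, $\{\avec(m_A)\}$, and $\{\cvec(m_B\mid m_A)\}$. In block $b$, form $\xvec''=\sqrt{n(1-\alpha)\plimit}\,\avec(m_{A,b})$ to feed the relay, $\xvec_1=\sqrt{n\plimit_1}\,\bvec(m_{A,b-1})$ as the relay output, and $\xvec'=\rho\sqrt{\alpha\plimit/\plimit_1}\,\xvec_1+\sqrt{n(1-\rho^2)\alpha\plimit}\,\cvec(m_{B,b}\mid m_{A,b-1})$ so as to beamform toward the destination with correlation $\rho$. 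Since $\Yvec_1=\Xvec''+\Zvec$ carries no state, the relay decodes $m_A$ by ordinary Gaussian minimum-distance decoding whenever $R_A<\frac12\log(1+(1-\alpha)\plimit/\sigma^2)$. At the destination I would use two stages of minimum-distance decoding: first decode $m_{A,b-1}$ by coherently combining $\xvec_1$ with the correlated component of $\xvec'$, and then decode $m_{B,b-1}$ from the residual. The first inequality $(1-\rho^2)\alpha\plimit>\Lambda$ of (\ref{eq:RYGlow}) is precisely the non-symmetrizability condition for the second stage; the second inequality, with its $\plimit_1/\plimit$ factor, plays the analogous role for the coherent-combining first stage. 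These ensure that the ``bad-message fraction'' guaranteed by the second half of Lemma~\ref{lemm:CN91} is exponentially small, so that summing the two resulting rates with the relay bound yields $\mathsf{F}_G(\alpha,\rho)$.

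For the converse, Theorem~\ref{theo:RYgavcCr} already gives $\opC\leq\max_{\alpha,\rho}\mathsf{F}_G(\alpha,\rho)$. To tighten this to the constrained maximum of $\inR_{G,up}$, I would adapt the symmetrizability argument of \cite{CsiszarNarayan:91p}: for any code whose transmitter-relay correlation profile corresponds to a pair $(\alpha,\rho)$ with $\plimit_1+\alpha\plimit+2\rho\sqrt{\alpha\plimit\plimit_1}<\Lambda$, a jammer that knows the codebook and randomly selects a spurious message $\tm$ can produce a state sequence of empirical power at most $\plimit_1+\alpha\plimit+2\rho\sqrt{\alpha\plimit\plimit_1}$, and therefore within the state budget $\Lambda$, whose effect is to make the destination's observation statistically indistinguishable between the true and spurious messages. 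Hence only input profiles satisfying the $\inR_{G,up}$ constraint can support reliable decoding.

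The main obstacle is the interleaving of block Markov coding with the Csisz\'ar--Narayan lemma, which is a single-block statement about a single codebook of unit vectors. Controlling the probability that a bad event occurs \emph{for some} block and \emph{some} message pair requires invoking the lemma independently in each block and then union-bounding carefully over blocks, message indices, and admissible state sequences. The subtlest point is the second non-symmetrizability condition in (\ref{eq:RYGlow}): its asymmetric $\plimit_1/\plimit$ factor reflects both the projection gain from the correlated relay signal and the effective noise, which now includes the private-message component treated as interference; translating this condition into the language of inner products required by Lemma~\ref{lemm:CN91} is where the analysis deviates most visibly from the state-constrained Gaussian AVC without a relay in \cite{CsiszarNarayan:91p}.
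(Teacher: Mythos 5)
Your achievability plan is essentially the paper's proof: the same backward two-stage minimum-distance block Markov scheme, the same codebook geometry with the relay codeword $\xvec_1$ a scaled Csisz\'ar--Narayan unit vector and the source codeword $\xvec'$ a coherent beamformed combination of that vector with a private CN codeword, and the same two feasibility conditions in (\ref{eq:RYGlow}) playing exactly the roles you assign them (positivity of $\tau_1''$ and $\tau_1$ in the two minimum-distance error analyses). The one deviation is that the paper constructs the $\xvec''$ codebook feeding the relay as a standard point-to-point Gaussian (Shannon) code, since the sender--relay link $Y_1=X''+Z$ carries no state, rather than as a third CN family; this avoids having to argue separately that a Lemma~\ref{lemm:CN91} codebook achieves the Gaussian capacity $\tfrac{1}{2}\log\bigl(1+(1-\alpha)\plimit/\sigma^2\bigr)$ under minimum-distance decoding, an extra step your version would have to supply.

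For the converse you have the right outer structure --- symmetrize in the bad power regime, Fano in the good regime --- but the step ``combine $\opC\leq\opC^{\rstarC}=\max_{\alpha,\rho}\mathsf{F}_G(\alpha,\rho)$ with a symmetrizability argument'' as written has a gap. Theorem~\ref{theo:RYgavcCr} bounds the rate by an \emph{unconstrained} maximum over $(\alpha,\rho)$, and says nothing about whether the maximizing pair is in the feasible set of $\inR_{G,up}$; ruling out infeasible $(\alpha,\rho)$ by symmetrizability does not by itself reduce an unconstrained max to a constrained one. What is needed, and what the paper does, is a \emph{per-code} statement: for a fixed reliable code $\code$ define $(\alpha,\alpha_1,\rho)$ from the code's own empirical second moments as in (\ref{eq:RYgRhoAdef}); show, via the impersonation attack $\widetilde{\Svec}=\fvec'(\tM)+\fvec_1(\widetilde{\Yvec}_1)$ for a uniformly drawn spurious $\tM$ (with Chebyshev, since the impersonated signal has random power and only its mean is controlled), that these particular $(\alpha,\rho)$ must satisfy $\plimit_1+\alpha\plimit+2\rho\sqrt{\alpha\plimit\cdot\plimit_1}\geq\Lambda$; and separately show, by running the Fano converse for this code against i.i.d.\ state $\bar{S}\sim\mathcal{N}(0,\Lambda-\delta)$ and applying the maximum differential entropy lemma, that $R\leq\mathsf{F}_G(\alpha,\rho)+o(1)$ for those \emph{same} $(\alpha,\rho)$. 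Only the coincidence of the $(\alpha,\rho)$ pair across the two halves yields $R\leq\inR_{G,up}$.
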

The proof of Theorem~\ref{theo:RYgavcCdet} is given in Appendix~\ref{app:RYgavcCdet}.
Figure~\ref{fig:gaussAVRC} depicts the bounds on the capacity of the Gaussian AVRC with SFD under input and state constraints, as a function of the input constraint $\plimit=\plimit_1$, under state constraint $\Lambda=1$ and
$\sigma^2=0.5$. The top dashed line depicts the random code capacity of the Gaussian AVRC. The solid lines depict the deterministic code lower and upper bounds $\inR_{G,low}%(\avrc)
$ and $\inR_{G,up}%(\avrc)
$.
For low values, $\plimit<\frac{\Lambda}{4}=0.25$, we have that $\inR_{G,up}%(\avrc)
=0$, hence the deterministic code capacity is zero, and it is strictly lower than the random code capacity. 
%%In the next phase, 
%For intermediate values of $\plimit$, we have that $0<\inR_{G,up}(\avrc)<\RYrCav$. That is, as opposed to the point to point Gaussian AVC in \cite{CsiszarNarayan:88p}, 
 %the deterministic code capacity can be strictly lower than the random code capacity of the Gaussian AVRC. 
The dotted lower line depicts the direct transmission lower bound, %(\ref{eq:Gavc1}), 
which equals $\mathsf{F}_G(1,0)$ for $\plimit>\Lambda$, and zero otherwise  (see (\ref{eq:Gavc1})). % by \cite{CsiszarNarayan:91p}). %\cite{CsiszarNarayan:88p}.
 For intermediate values of $\plimit$, direct transmission is better than the lower bound in Theorem~\ref{theo:RYgavcCdet}. Whereas, for high values of $\plimit$, our bounds are tight, and the capacity coincides with the random code capacity, \ie  $\RYCavc=\RYrCav=\inR_{G,low}%(\avrc)
=\inR_{G,up}%(\avrc)
$.
\begin{center}
\begin{figure}[htb]
        \centering
        \includegraphics[scale=0.35,trim={0 0 0 0.7cm},clip] % left lower  right upper
				{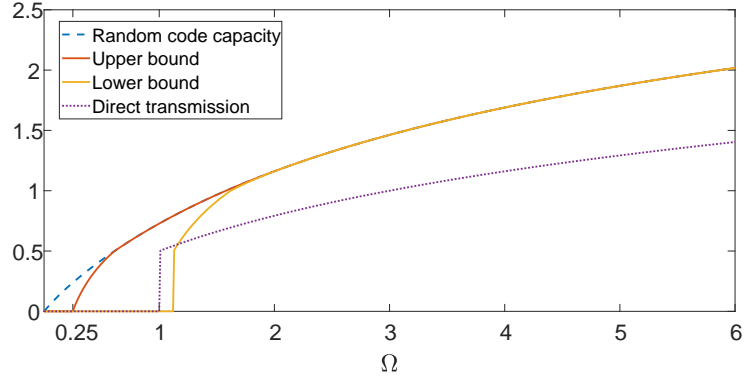}
        
\caption{Bounds on the capacity of the Gaussian AVRC with SFD.
The dashed upper line depicts the random code capacity of the Gaussian AVRC as a function of the input constraint $\plimit=\plimit_1$, under state constraint $\Lambda=1$ and $\sigma^2=0.5$. The solid lines depict the deterministic code lower and upper bounds $\inR_{G,low}%(\avrc)
$ and $\inR_{G,up}%(\avrc)
$.
The dotted lower line depicts the direct transmission lower bound.
  }
\label{fig:gaussAVRC}
\end{figure}
\end{center}

\section{Primitive Relay Channels}
In this section, we consider the special class of primitive relay channels \cite{Kim:07c}, where the relay communicates with the receiver over a noiseless link of rate $C_1$. 
First, we consider a discrete channel. %We add the following notation. 
For a discrete random variable $X$ with values in a finite set $\Xset$, we use the notation $\left(p(x)\right)_{x\in\Xset}$ for the probability mass function (pmf). The set of all pmfs on $\Xset$ is denoted by $\pSpace(\Xset)$.	
%\section{Definitions} % and Previous Results}	

	\subsection{Channel Description}
	\label{subsec:pRYCs}

A state-dependent discrete memoryless primitive relay channel 
$(\Xset,\Sset,\prc,\Yset,\Yset_1)$ consists of the sets $\Xset$,  $\Sset$, $\Yset$ and $\Yset_1$, and a collection of conditional pmfs $\prc$.  
The sets stand for the input alphabet,  the state alphabet, the output alphabet, and the relay input alphabet, respectively.  The alphabets are assumed to be finite, unless explicitly said otherwise.
 The channel is memoryless without feedback, and therefore  
\begin{align}
W_{Y^n,Y_1^n|X^n,S^n}(y^n,y_1^n|x^n,s^n)= \prod_{i=1}^n \prc(y_{i},y_{1,i}|x_i,s_i) \,.
 \end{align}
%The marginal $\Wsd$ is the main channel from the sender to the receiver; $\Wsy$ is the channel from the sender to the relay; and $\Wyd$ is the channel from the relay to the receiver (see Figure~\ref{fig:RC}).
Communication over a primitive relay channel is depicted in Figure~\ref{fig:RCprimitive}.
At first, we consider a general channel, not necessarily Gaussian nor with orthogonal sender components as in (\ref{eq:GsfdDef}).
A primitive relay channel is said to be degraded if
\begin{align}
\prc(y,y_1|x,s)=W_{Y_1|X,S}(y_1|x,s)W_{Y|Y_1,S}(y_1|y,s) \,,
\label{eq:pDRC}
\end{align}
and reversely degraded if
\begin{align}
\prc(y,y_1|x,s)=W_{Y|X,S}(y|x,s)W_{Y_1|Y,S}(y_1|y,s) \,.
\label{eq:prevDRC}
\end{align}
We say that the primitive relay channel is \emph{strongly degraded} if (\ref{eq:pDRC}) holds such that the channel from $X$ to $Y_1$ does not depend on the state, \ie $W_{Y_1|X,S}=W_{Y_1|X}$. For example, a primitive
relay channel specified by $Y_1=X+Z$ and $Y=Y_1+S=X+S+Z$ is strongly degraded, for an additive noise $Z$ which is independent of the state.
Similarly, $\prc$ is reversely strongly degraded if (\ref{eq:prevDRC}) holds with $W_{Y_1|Y,S}=W_{Y_1|Y}$.
For example, a primitive
relay channel with $Y=X+S$ and $Y_1=Y+Z=X+S+Z$ is reversely strongly degraded. 

Instances of channels that meet the description in part 1 and part 2 of the corollary above are \eg
$Y_1=Y+Z=X+S+Z$ and $Y=Y_1+S=X+S+Z$, respectively, where $Z$ is an independent additive noise.

The primitive AVRC  $\pavrc=\{\prc\}$ %\emph{arbitrarily varying relay channel} (AVRC)
 is a discrete memoryless primitive relay channel $(\Xset,\Sset,$ $\prc,\Yset,\Yset_1)$  with a state sequence of unknown distribution,  not necessarily independent nor stationary. That is, $S^n\sim \qn(s^n)$ with an unknown joint pmf $\qn(s^n)$ over $\Sset^n$.
In particular, $\qn(s^n)$ can give mass $1$ to some state sequence $s^n$. 
%We use the shorthand notation $\pavrc=\{\prc\}$ for the primitive AVRC, where the %transition probability $\rc$ and the 
%alphabets are understood from the context.

To analyze the primitive AVRC, we also consider the compound primitive  relay channel $\pRYcompound$, 
governed by a discrete memoryless state $S\sim q(s)$, where $q(s)$ is not known in exact, but rather belongs to a family of distributions $\Qset\subseteq \pSpace(\Sset)$.

\begin{center}
\begin{figure}[htb]
        \centering
        \includegraphics[scale=0.65,trim={1cm 0 0 0},clip] % left lower  right upper
				{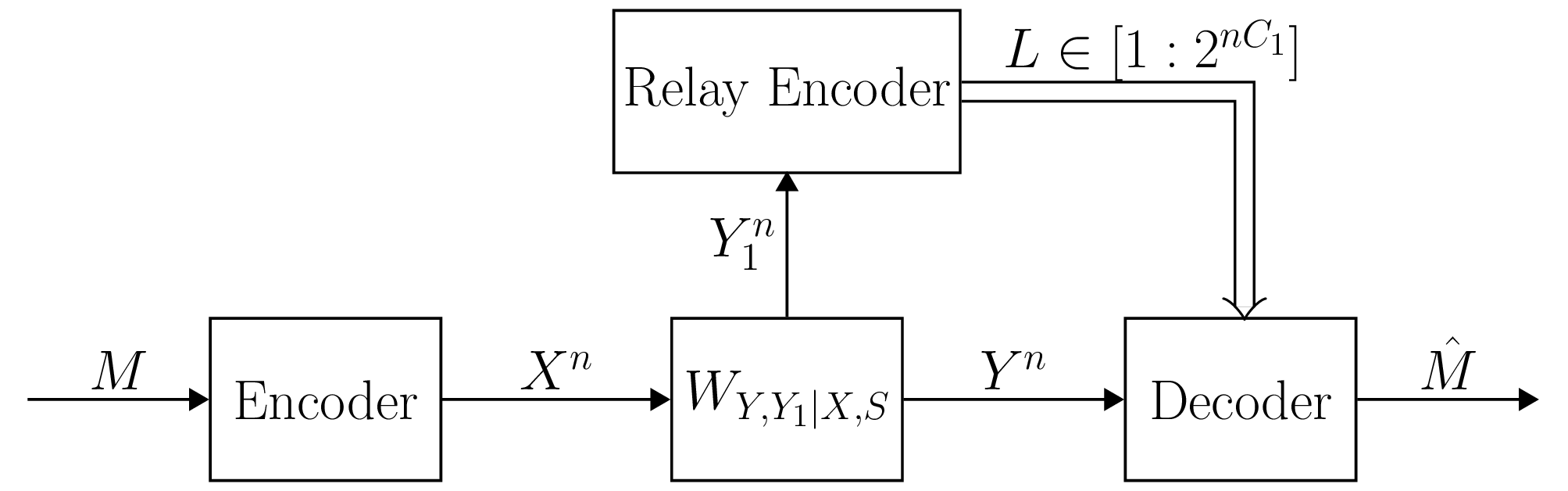}
        
\caption{Communication over the primitive AVRC $\pavrc=\{\prc\}$. 	 Given a message $M$, % and a sequence $s^i$,
 the encoder transmits $X^n=\enc(M %\enci(m%,s^i
)$. 
The relay receives $Y_1^n$ and sends $L=f_1(Y_1^n)$, where $f_1:\Yset_1^n\rightarrow [1:2^{n C_1}]$.
 The decoder receives both the channel output sequence $Y^n$ and the relay output $L$,  and  finds an estimate of the message $\;\hM=g(Y^n,L)$. 
  }
\label{fig:RCprimitive}
\end{figure}
\end{center}

\subsection{Coding}
\label{subsec:pRYcoding}
We introduce some preliminary definitions, starting with the definitions of a deterministic code and a random code for the primitive AVRC $\pavrc$. %  with causal SI. 
%Note that in general, the term `code', unless mentioned otherwise, refers to a deterministic code.  

\begin{definition}[A code, an achievable rate and capacity]
\label{def:pRYcapacity}
A $(2^{nR},n)$ code for the primitive AVRC $\pavrc$ % with causal SI
 consists of the following;  a  message set $[1:2^{nR}]$,  where it is assumed throughout that $2^{nR}$ is an integer, 
an encoder $\enc:[1:2^{nR}]\rightarrow\Xset^n$,
% a sequence of $n$ encoding functions $\enci:  [1:2^{nR}]%\times \Sset^i
% \rightarrow \Xset$, $ i\in [1:n]$,
a  relaying function $\enc_{1,i}:\Yset_1^n\rightarrow [1:2^{nC_1}]$, and a  decoding function 
$\dec: \Yset^n\times [1:2^{nC_1}]\rightarrow [1:2^{nR}]$.

 Given a message $m\in [1:2^{nR}]$, % and a sequence $s^i$,
 the encoder transmits $x^n=\enc(m %\enci(m%,s^i
)$. %The codeword is then given by 
%\begin{align}
%x^n= \encn(m_0,m_1,s^n) \triangleq \left(
%\enc_1(m_0,m_1,s_1),\enc_2(m_0,m_1,s^2),\ldots,\enc_n(m_0,m_1,s^n)   \right) \,.
%\end{align}
The relay receives the sequence $y_1^n$ and sends an index $\ell=\enc_{1}(y_1^n)$.
The decoder receives both the output sequence $y^n$ and the index $\ell$, and  finds an estimate of the message $\hm=g(y^n,\ell)$
(see Figure~\ref{fig:RCprimitive}). 
We denote the code by $\code=\left(\enc(\cdot%,\cdot,\cdot
),\encn_1(\cdot),\dec(\cdot,\cdot) \right)$.
 Define the conditional probability of error of the code $\code$ given a state sequence $s^n\in\Sset^n$ by  
%\begin{subequations}
\begin{align}
\label{eq:pRYcerr}
\cerr(\code)=
\frac{1}{2^{nR}}\sum_{m=1}^{2^ {nR}}
\sum_{ 
\substack{
(y^n,y_1^n)\,:\; \dec(y^n,f_1(y_1^n))\neq m
}
} %\hspace{-0.8cm}
W_{Y^n,Y_1^n|X^n,S^n}(y^n,y_1^n|f(m),s^n) \,.
\end{align}
Now, define the average probability of error of $\code$ for some distribution $\qn(s^n)\in\pSpace(\Sset^n)$, 
\begin{align}
\err(\qn,\code)%\triangleq \prob{(\hM_1,\hM_2)\neq (M_1,M_2)}
=\sum_{s^n\in\Sset^n} \qn(s^n)\cdot\cerr(\code) \,.
\end{align}
%\end{subequations}
Observe that $ \err(\qn,\code)$ is linear  in $q$, and thus continuous. 
We say that $\code$ is a 
$(2^{nR},n,\eps)$ code for the primitive AVRC $\pavrc$ if it further satisfies 
\begin{align}
\label{eq:pRYerr}
 \err(\qn,\code)\leq \eps \,,\quad\text{for all $\qn(s^n)\in\pSpace(\Sset^n)\,$.} 
\end{align} 
A rate $R$ is called achievable if for every $\eps>0$ and sufficiently large $n$, there exists a  $(2^{nR},n,\eps)$ code. The operational capacity is defined as the supremum of the achievable rates and it is denoted by $\pRYCavc$. %, where the subscript stands for `causal'.
% We use the term `capacity' referring to this operational meaning, and in some places we call it the deterministic code capacity in order to emphasize that achievability is measured with respect to  deterministic codes.  
\end{definition} 

We proceed now to define the parallel quantities when using stochastic-encoders stochastic-decoder triplets with common randomness.
%The codes formed by these triplets are referred to as random codes. %, a.k.a. correlated codes \cite{Ahlswede:86p}.

\begin{definition}[Random code]
\label{def:pRYcorrC} 
A $(2^{nR},n)$ random code for the primitive AVRC $\pavrc$ consists of a collection of 
$(2^{nR},n)$ codes $\{\code_{\gamma}=(\enc_\gamma,\encn_{1,\gamma},\dec_{\gamma})\}_{\gamma\in\Gamma}$, along with a probability distribution $\mu(\gamma)$ over the code collection $\Gamma$. 
We denote such a code by $\gcode=(\mu,\Gamma,\{\code_{\gamma}\}_{\gamma\in\Gamma})$.
Analogously to the deterministic case,  a $(2^{nR},n,\eps)$ random code has the additional requirement
\begin{align}
 \err(\qn,\gcode)=\sum_{\gamma\in\Gamma} \mu(\gamma)\err(q,%\sum_{s^n\in\Sset^n} \qn(s^n) \cerr(
\code_\gamma)\leq \eps \,,\;\text{for all $\qn(s^n)\in\pSpace(\Sset^n)$} & \,. \qquad
\end{align}
The random code capacity %capacity achieved by random codes 
is denoted by $\pRYrCav$. %, and it is referred to as the random code capacity.
\end{definition}

\section{Main Results -- Primitive AVRC}
\label{sec:pRYres}
We present our results on the compound primitive relay channel and the primitive AVRC. % with causal SI.  

\subsection{The Compound Primitive Relay Channel} % with Causal SI}  
\label{sec:pRYcompound}
%\subsubsection{Cutset Bound and Partial Decode-Forward Bound}
 We establish the cutset upper bound and the partial decode-forward lower bound for the compound primitive relay channel. % with causal SI.
 Consider a given compound primitive relay channel $\pRYcompound$. % with causal SI. 
Let 
\begin{align}  
\label{eq:pRYICcompound} 
\pRYICcompound \triangleq& \inf_{q\in\Qset}  \max_{p(x)} 
\min \left\{ I_q(X;Y)+C_1 \,,\; I_q(X;Y,Y_1)
\right\} \,,
\intertext{and}
\pRYdIRcompound\triangleq&
\max_{p(u,x)} 
\min \Big\{ \inf_{q\in\Qset} I_q(U;Y)+\inf_{q\in\Qset} I_q(X;Y|U)+ C_1 \,,\; \nonumber\\&\qquad\qquad\quad
\inf_{q\in\Qset} I_{q}(U;Y_1)+ \inf_{q\in\Qset} I_q(X;Y|U)
\Big\} \,,
\label{eq:pRYdIRcompound}
\end{align}
where the subscripts `$CS$' and `$DF$' stand for `cutset' and `decode-forward', respectively.
 
\begin{lemma}
\label{lemm:pRYcompoundDF}
The capacity of the compound primitive relay channel $\pRYcompound$ is bounded by
\begin{align}
 &\pRYCcompound \geq \pRYdIRcompound \,, \\
 &\pRYrCcompound \leq \pRYICcompound \,.
\end{align}
Specifically, if $R<\pRYdIRcompound$, then there exists a $(2^{nR},n,e^{-an})$ block Markov code over $\pRYcompound$ for %$(2^{nR(B-1)},nB,e^{-an})$ block Markov code over $\RYcompound$, for every $B>0$,
 sufficiently large $n$ and some $a>0$.
\end{lemma}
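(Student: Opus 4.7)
The plan is to prove the two inequalities separately: the upper bound on $\pRYrCcompound$ by a compound adaptation of the primitive-relay cutset bound, and the lower bound on $\pRYCcompound$ via a block-Markov partial decode-forward scheme with universal decoding.

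For $\pRYrCcompound \leq \pRYICcompound$, I would fix an arbitrary $q \in \Qset$ and observe that a random code reliable over the full compound family is, in particular, reliable for the single DMC indexed by $q$. Hence the classical primitive-relay cutset bound~\cite{Kim:07c} applies for this fixed $q$: Fano's inequality along the destination-side cut gives $R \leq I_q(X;Y) + C_1 + o_n(1)$ (the $C_1$ reflecting the noiseless relay-to-destination link), while the broadcast-side cut gives $R \leq I_q(X;Y,Y_1) + o_n(1)$. Maximizing over $p(x)$ and then taking $\inf_{q\in\Qset}$ delivers the claim; the $\inf$-$\max$ ordering is correct because the codebook is chosen before $q$ is revealed.

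For $\pRYCcompound \geq \pRYdIRcompound$, I would fix $p(u,x)$ attaining the maximum and pick $R = R' + R''$ with $R' < \min\{\inf_q I_q(U;Y_1),\, \inf_q I_q(U;Y) + C_1\}$ and $R'' < \inf_q I_q(X;Y|U)$. Then I use a block-Markov scheme of $B+1$ blocks of length $n$, with block $b$ carrying a fresh pair $(m'_b, m''_b)$. Codebook: draw $2^{nR'}$ cloud-center sequences $u^n(m')$ i.i.d.\ from $\prod p(u_i)$; off each cloud center draw $2^{nR''}$ satellites $x^n(m',m'')$ i.i.d.\ from $\prod p(x_i|u_i)$; and bin the $m'$-labels uniformly into $2^{nC_1}$ bins. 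In block $b$ the encoder transmits the satellite indexed by $(m'_b, m''_b)$ attached to the cloud center $u^n(m'_{b-1})$. The relay decodes $m'_b$ from $y_{1,b}^n$ using a universal maximum-mutual-information (MMI) decoder, and forwards the bin index of $m'_b$ across the noiseless link during block $b+1$. The destination performs backward decoding: knowing $m'_b$ from the bin index supplemented by an empirical-type consistency check against $y_{b+1}^n$ in the next block, it MMI-decodes $m''_b$ from $y_b^n$ restricted to satellites of the correct cloud.

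The exponential decay $e^{-an}$ then follows from method-of-types error exponents: each of the three error events---relay mis-decoding, destination cloud-decoding through the bin, and destination satellite-decoding given the cloud---has probability at most $e^{-n E_j}$ for some $E_j > 0$ \emph{uniformly} over $q \in \Qset$ whenever the corresponding rate condition holds for every $q$, which is precisely why the three separate infima appear in $\pRYdIRcompound$ rather than a joint infimum of the sum. A union bound over the (constant number of) blocks $B$ preserves the exponent. The main obstacle is securing this \emph{uniform} exponential decay over the possibly uncountable family $\Qset$: a standard joint-typicality decoder tied to a fixed reference measure yields vanishing but not uniformly exponentially small error, so universal MMI decoding (or a finite $\delta$-net reduction of $\Qset$ combined with a continuity argument) is essential. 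A secondary subtlety is controlling error propagation through the Markov chain of blocks, which remains harmless because $B$ is chosen independently of $n$.
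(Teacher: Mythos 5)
Your upper-bound argument is correct and is essentially the paper's: a random code that works for the whole compound family works for each fixed $q$, so Kim's cutset bound for the primitive relay channel applies cut by cut, and taking $\inf_q$ (extended to the closure of $\Qset$ by continuity of $\err$) gives $\pRYICcompound$. Your lower-bound argument achieves the right rate but takes a genuinely different and needlessly heavy route. Because the channel is \emph{primitive} --- the relay reaches the destination over a separate noiseless bit pipe of rate $C_1$, not through $W_{Y|X,S}$ --- the relay can observe the entire $n$-block $y_1^n$, decode the coarse message $m_1$, and ship its bin index to the destination \emph{within the same block}; there is nothing to defer to block $b+1$. The paper therefore uses a single-block superposition code: cloud centers $u^n(m_1)$, satellites $x^n(m_1,m_2)$, a binning of $[1:2^{nR_1}]$ into $2^{nC_1}$ bins, and a relay that decodes $m_1$ and forwards its bin label; the destination first resolves $m_1$ within the received bin and then $m_2$ within the cloud. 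Your $B$-block Markov scheme with backward decoding reproduces the same rate $R'+R''$ and the same three separate infima, but it introduces error propagation bookkeeping, boundary blocks, and a more delicate satellite indexing (the destination must already hold $m'_b$ from the next block and $m'_{b-1}$ from further back before it can peel off $m''_b$), none of which is needed here. On the issue you rightly flag --- obtaining an error exponent that is uniform over the possibly uncountable $\Qset$ --- the paper does not use MMI decoding; it discretizes $\Qset$ to a polynomial-size set of state types $\hat{\Qset}_n$ that are $\delta_1$-close to some $q\in\Qset$ and runs joint typicality against those types, with the polynomial multiplicity absorbed into a uniform exponent. Your ``$\delta$-net reduction combined with a continuity argument'' alternative is exactly this; MMI would work too but is not what the paper does. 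So: both your bounds are sound and match the target rates, but your direct part uses block-Markov machinery that the primitive structure makes redundant and that the paper's proof avoids entirely.
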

The proof of Lemma~\ref{lemm:pRYcompoundDF} is given in Appendix~\ref{app:pRYcompoundDF}.
Observe that taking $U=\emptyset$  in (\ref{eq:pRYdIRcompound}) gives the direct transmission lower bound,
\begin{align}
\pRYCcompound\geq& \pRYdIRcompound \geq \max_{p(x)} \inf_{q\in\Qset} I_q(X;Y) \,.
\label{eq:pRYcompoundDirectTran}
\intertext{
In particular, this is the capacity when $C_1=0$, \ie when there is no relay.
Taking $U=X$  in (\ref{eq:pRYdIRcompound}) results in a full decode-forward lower bound,
}
\pRYCcompound\geq& \pRYdIRcompound \geq \max_{p(x)} \inf_{q\in\Qset}  \min \left\{ I_q(X;Y)+C_1 \,,\; %\nonumber\\&
 I_{q}(X;Y_1)
\right\} \,.
\label{eq:pRYcompoundFullDF}
\end{align}
This yields the following corollary.
\begin{coro}
\label{coro:pRYcompoundDeg}
Let $\pRYcompound$ be a compound primitive relay channel, where $\Qset$ is a compact convex set.
\begin{enumerate}[1)]
\item
If $\prc$ is reversely strongly degraded, %such that %$\rc=W_{Y|X,X_1} W_{Y_1|Y,X_1,S}$,
%\begin{align}
%\prc(y,y_1|x,s)=W_{Y|X,S}(y|x,s)W_{Y_1|Y}(y_1|y) \,,
%\end{align}
 then
\begin{align}
\pRYCcompound= \pRYdIRcompound=\pRYICcompound= \min_{q\in\Qset} \max_{p(x)}  I_q(X;Y) \,.
\end{align}
\item
If $\prc$ is strongly degraded, 
%such that %$\rc=W_{Y_1|X,X_1} W_{Y|Y_1,X_1,S}$,
%\begin{align}
%\rc(y,y_1|x,s)=W_{Y_1|X}(y_1|x)W_{Y|Y_1,S}(y|y_1,s) \,,
%\end{align}
 then %and $\Qset$ is compact convex, then
\begin{align}
\pRYCcompound= \pRYdIRcompound=\pRYICcompound=  \max_{p(x)}   \min \left\{ \min_{q\in\Qset} I_q(X;Y)+C_1 \,,\; %\nonumber\\&
 I(X;Y_1)
\right\} \,.
\end{align}
\end{enumerate}
\end{coro}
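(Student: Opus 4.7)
The plan is to sandwich $\pRYCcompound$ using Lemma~\ref{lemm:pRYcompoundDF}, i.e., $\pRYdIRcompound \le \pRYCcompound \le \pRYrCcompound \le \pRYICcompound$, and then show that under each of the stated degradedness hypotheses the lower and upper bounds collapse to the claimed common value. Two ingredients recur: simplifying mutual informations via the Markov structure implied by (strong) degradedness, and a minimax exchange justified by the compactness and convexity of $\Qset$ and the input simplex $\pSpace(\Xset)$.

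For part 1, I would first use reverse strong degradedness $\prc(y,y_1|x,s) = W_{Y|X,S}(y|x,s)W_{Y_1|Y}(y_1|y)$ to observe that $p(x|y,y_1) = p(x|y)$ for every input distribution and every $q \in \Qset$, hence $I_q(X;Y_1|Y) = 0$ and $I_q(X;Y,Y_1) = I_q(X;Y)$. Substituting into (\ref{eq:pRYICcompound}) collapses the upper bound to $\pRYICcompound = \inf_{q} \max_{p} I_q(X;Y)$. The direct-transmission lower bound (\ref{eq:pRYcompoundDirectTran}) reads $\pRYCcompound \ge \max_{p} \inf_{q} I_q(X;Y)$. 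Since $I_q(X;Y)$ is concave in $p(x)$ for each $q$ and convex in $q$ for each $p(x)$ — the effective channel $W_q(y|x) = \sum_{s} q(s) W_{Y|X,S}(y|x,s)$ is linear in $q$ while mutual information is convex in the channel at a fixed input distribution — the classical minimax theorem over the compact convex sets $\Qset$ and $\pSpace(\Xset)$ yields equality of the two bounds.

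For part 2, strong degradedness $\prc(y,y_1|x,s) = W_{Y_1|X}(y_1|x) W_{Y|Y_1,S}(y|y_1,s)$ yields $S \perp (X, Y_1)$, hence $p(y|x, y_1)$ is independent of $x$; this gives $I_q(X;Y|Y_1) = 0$ and $I_q(X;Y,Y_1) = I(X;Y_1)$, with no dependence on $q$. The upper bound thus becomes $\pRYICcompound = \inf_{q} \max_{p} \min\{I_q(X;Y) + C_1,\; I(X;Y_1)\}$. For a matching lower bound, I would substitute $U = X$ in (\ref{eq:pRYdIRcompound}): this makes $I_q(X;Y|U) = 0$ and $\inf_q I_q(U;Y_1) = I(X;Y_1)$, yielding $\pRYdIRcompound \ge \max_{p} \min\{\inf_q I_q(X;Y) + C_1,\; I(X;Y_1)\}$. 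The objective $\tilde\phi(p,q) = \min\{I_q(X;Y) + C_1,\; I(X;Y_1)\}$ is concave in $p$ (minimum of two concave functions) and only quasi-convex in $q$, since the minimum of a convex function and a constant has convex sublevel sets; I would therefore invoke Sion's minimax theorem to exchange $\max_{p}$ and $\inf_{q}$ and match the bounds.

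The main obstacle is this last minimax exchange in part 2, precisely because the objective fails to be jointly convex in $q$ and the classical von Neumann theorem does not apply. The key observation is that quasi-convexity suffices for Sion's theorem; continuity of the mutual informations in both arguments, together with the compactness and convexity of $\Qset$ (given) and $\pSpace(\Xset)$, then closes the proof.
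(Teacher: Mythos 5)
Your proposal is correct and follows essentially the same route as the paper: sandwich $\pRYCcompound$ between $\pRYdIRcompound$ and $\pRYICcompound$ from Lemma~\ref{lemm:pRYcompoundDF}, reduce $I_q(X;Y,Y_1)$ to $I_q(X;Y)$ (resp. $I(X;Y_1)$) via the Markov structure of (reverse) strong degradedness, instantiate the lower bound with direct transmission (resp. $U=X$), and close the gap with Sion's minimax theorem using concavity in $p$ and (quasi-)convexity in $q$. The paper's Appendix~\ref{app:pRYcompoundDeg} gives precisely this argument, including the observation that the minimum in part 2 is only quasi-convex in $q$, which is why Sion's theorem rather than the classical von Neumann version is needed.
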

The proof of Corollary~\ref{coro:pRYcompoundDeg} is given in Appendix~\ref{app:pRYcompoundDeg}. 
Intuitively, in the case of a reveresly strongly degraded relay channel, the relay is useless, while in the case of a strongly degraded relay channel, the relay could be valuable and does not creat a bottleneck. Indeed, 
part 1 follows from the direct transmission and cutset bounds, (\ref{eq:pRYcompoundDirectTran}) and  (\ref{eq:pRYICcompound}), respectively, while  part 2 is based on the full decode-forward and cutset bounds, (\ref{eq:pRYcompoundFullDF}) and  (\ref{eq:pRYICcompound}), respectively.

%%%%%%%%%%%%%%%%%%%%%%%%%%%%%%%%%%%%%%%%%%%%%%%%%
\subsection{The Primitive AVRC}
We give lower and upper bounds, on the random code capacity and the deterministic code capacity, for the 
primitive AVRC $\pavrc$. % with causal SI. We also provide conditions, for which the inner bound coincides with the outer bound.
\subsubsection{Random Code Lower and Upper Bounds}
Define 
\begin{align}
\label{eq:pRYIRcompoundP}  
\pRYdIRavc \triangleq \pRYdIRcompound\bigg|_{\Qset=\pSpace(\Sset)} 
\,,\;
%\\
%\label{eq:BrICav}
\pRYrICav\triangleq \pRYICcompound\bigg|_{\Qset=\pSpace(\Sset)}
\,.
%\,,\;
%%\end{align}
%%and
%%\begin{align}
%\RYcIRavc \triangleq \RYcIRcompound\bigg|_{\Qset=\pSpace(\Sset)} \,.
%%\label{eq:RYcIRcompoundP}  
\end{align}

\begin{theorem}
\label{theo:pRYmain}
%Let $\avrc$ be an AVRC with causal SI available at the encoder. Then,  
%\begin{enumerate}[{1)}]
%\item
The random code capacity  of a primitive AVRC $\pavrc$ is bounded by
\begin{align}
%\max \{ \RYcIRavc,\RYdIRavc \}
\pRYdIRavc\leq \pRYrCav \leq \pRYrICav \,.
\end{align}
%\item 
%If Condition $\sCond$ holds, the random code capacity region of  $\avbc$ is given by
	%\begin{align}
	%\label{eq:BrCavTight}
	%\BrCav=\BIRavc=\BrICav \,.
	%\end{align}
%\end{enumerate}
\end{theorem}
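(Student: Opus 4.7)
The plan is to prove the two inequalities separately by specializing Lemma~\ref{lemm:pRYcompoundDF} to the compound family $\Qset=\pSpace(\Sset)$, which is precisely how $\pRYdIRavc$ and $\pRYrICav$ are defined in~(\ref{eq:pRYIRcompoundP}).

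For the upper bound, I would first observe that a jammer controlling an arbitrary joint law $\qn\in\pSpace(\Sset^n)$ is at least as powerful as one restricted to i.i.d.\ product laws $\qn(s^n)=\prod_{i=1}^{n} q(s_i)$ with $q\in\pSpace(\Sset)$. Hence, any $(2^{nR},n,\eps)$ random code $\gcode$ for the AVRC $\pavrc$ is automatically a $(2^{nR},n,\eps)$ random code for the compound primitive relay channel $\pRYcompound$ with $\Qset=\pSpace(\Sset)$, and the cutset random-code bound of Lemma~\ref{lemm:pRYcompoundDF} immediately yields $\pRYrCav\leq\pRYICcompound\big|_{\Qset=\pSpace(\Sset)}=\pRYrICav$.

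For the lower bound, fix $R<\pRYdIRavc$ and invoke Lemma~\ref{lemm:pRYcompoundDF} with $\Qset=\pSpace(\Sset)$ to obtain, for $n$ sufficiently large, a deterministic block Markov code $\code$ with $b$ sub-blocks of length $n_0$ (total length $n=bn_0$) whose per-sub-block decoding error decays exponentially in $n_0$ uniformly in the sub-block state law $q\in\pSpace(\Sset)$. To derandomize $\code$ against arbitrary state sequences, the plan is to apply the permutation technique of \cite{CsiszarNarayan:91p}: take $\gcode=\{\code_{\pi_1,\ldots,\pi_b}\}$ to be the random code obtained by independently permuting time indices within each Markov sub-block via uniformly random permutations $\Pi_1,\ldots,\Pi_b$ of $[1:n_0]$. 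Confining the randomization to each sub-block preserves the causal dependence of the relay's transmission in sub-block $j$ on its decoding of sub-block $j-1$. For any fixed $s^n$ with per-sub-block empirical types $P_1,\ldots,P_b$, the average error of $\gcode$ on $s^n$ equals the error of $\code$ under a random state drawn uniformly from $T_{P_1}\times\cdots\times T_{P_b}$. The standard type-counting inequality $P_j^{n_0}(T_{P_j})\geq (n_0+1)^{-|\Sset|}$ bounds the uniform-on-type-class error in sub-block $j$ by a polynomial factor $(n_0+1)^{|\Sset|}$ times the i.i.d.\ sub-block error under $P_j^{n_0}$, which is exponentially small by Lemma~\ref{lemm:pRYcompoundDF}. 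A union bound across the $b$ sub-blocks then gives total error vanishing in $n$, so $R$ is achievable with a random code and $\pRYrCav\geq\pRYdIRavc$.

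The main obstacle will be reconciling the permutation-based derandomization with the causal block Markov decode-forward structure: a permutation acting on the full blocklength $n$ would destroy the relay's causal dependence across Markov sub-blocks, so one must confine the randomization to each sub-block and then argue that the exponential error bound of Lemma~\ref{lemm:pRYcompoundDF} descends to the sub-block level uniformly in the per-sub-block state law. This relies on the standard block-by-block analysis of decode-forward providing sub-block exponential error decay uniformly over $q\in\pSpace(\Sset)$, so that the union bound across $b$ sub-blocks preserves exponential decay of the total error.
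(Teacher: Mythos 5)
Your upper-bound argument is exactly the paper's: an i.i.d.\ jammer is a special case of an arbitrary one, so the AVRC random code also serves the compound channel $\pRYcompoundP$, and the cutset bound of Lemma~\ref{lemm:pRYcompoundDF} applies verbatim. That part is fine.

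The lower bound contains a genuine misconception that shapes (and complicates) your whole plan. You treat the primitive relay as if it had a causal encoder linked across block-Markov sub-blocks, and you conclude that a uniform permutation on the full blocklength would ``destroy the relay's causal dependence.'' But look at Definition~\ref{def:pRYcapacity}: the relaying map is $f_1:\Yset_1^n\rightarrow [1:2^{nC_1}]$, a \emph{batch} map acting on the entire received sequence $y_1^n$ before a single index $\ell$ is sent over the noiseless link. There is no causality constraint to preserve. Moreover the coding scheme actually built in the proof of Lemma~\ref{lemm:pRYcompoundDF} is a single-block binning / partial-decode-forward scheme, not a chain of $b$ causally linked sub-blocks. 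So there are no ``per-sub-block types $P_1,\ldots,P_b$'' to localize to, and the sub-block-permutation machinery you propose has nothing to attach to.

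Once you drop the causality worry, the remaining idea in your proposal is exactly Ahlswede's Robustification Technique applied to the full blocklength, which is precisely what the paper does. Concretely: take $h(s^n)$ to be the conditional error $\cerr(\code)$ of the compound code from Lemma~\ref{lemm:pRYcompoundDF}. Since $\sum_{s^n} q^n(s^n)h(s^n)\leq e^{-2\theta n}$ for every i.i.d.\ $q$, the RT (\ref{eq:pRYRTresC}) gives $\frac{1}{n!}\sum_\pi h(\pi s^n)\leq (n+1)^{|\Sset|}e^{-2\theta n}$ for all $s^n$; and using (\ref{eq:pRYcerrpi1}), one reads $h(\pi s^n)$ as the conditional error of the code $\code_\pi=(\pi^{-1}f,\;f_1\circ\pi,\;g\circ\pi)$ under $s^n$. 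Note that $f_1\circ\pi$ is a perfectly valid primitive relaying function precisely because $f_1$ is a batch map—this is the step that would fail in the non-primitive (causal-relay) setting, and it is what your proposal flags as the obstacle without realizing it is absent here. The uniform mixture over $\Pi_n$ is then a reliable random code, and no per-sub-block error bound is needed, which is good because Lemma~\ref{lemm:pRYcompoundDF} does not supply one.
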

The proof of Theorem~\ref{theo:pRYmain} is given in Appendix~\ref{app:pRYmain}. 
Together with Corollary~\ref{coro:pRYcompoundDeg}, this yields another corollary.
\begin{coro}
\label{coro:pRYavrcDeg}
Let $\pavrc$ be a primitive AVRC.
\begin{enumerate}[1)]
\item
If $\prc$ is reversely strongly degraded, %such that $\prc=W_{Y|X,S} W_{Y_1|Y}$,
 then
\begin{align}
\pRYrCav= %\pRYdIRavc=\pRYrICav=
  \min_{q(s)} \max_{p(x)} I_q(X;Y) \,.
\end{align}
\item
If $\rc$ is  strongly degraded, %such that $\rc=W_{Y_1|X} W_{Y|Y_1,S}$,
 then
\begin{align}
\pRYrCav= %\pRYdIRavc=\pRYrICav=
  \max_{p(x)}   \min \left\{ \min_{q(s)} I_q(X;Y)+C_1 \,,\; %\nonumber\\&
 I(X;Y_1)
\right\} \,.
\end{align}
\end{enumerate}
\end{coro}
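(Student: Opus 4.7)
The plan is to obtain the result by sandwiching $\pRYrCav$ between the lower bound $\pRYdIRavc$ and the upper bound $\pRYrICav$ supplied by Theorem~\ref{theo:pRYmain}, and then collapsing these two bounds using the compound-channel formulas in Corollary~\ref{coro:pRYcompoundDeg} applied with the largest possible state family $\Qset=\pSpace(\Sset)$.

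First I would verify the hypothesis of Corollary~\ref{coro:pRYcompoundDeg}. The set $\pSpace(\Sset)$ is the probability simplex over a finite alphabet, hence it is compact and convex, so Corollary~\ref{coro:pRYcompoundDeg} is applicable to the compound channel $\pRYcompound$ with this specific choice of $\Qset$. By the definitions in (\ref{eq:pRYIRcompoundP}),
\begin{equation*}
\pRYdIRavc=\pRYdIRcompound\Big|_{\Qset=\pSpace(\Sset)},\qquad
\pRYrICav=\pRYICcompound\Big|_{\Qset=\pSpace(\Sset)},
\end{equation*}
so Theorem~\ref{theo:pRYmain} yields
\begin{equation*}
\pRYdIRcompound\Big|_{\Qset=\pSpace(\Sset)}\leq \pRYrCav \leq \pRYICcompound\Big|_{\Qset=\pSpace(\Sset)}.
\end{equation*}

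Next I would invoke Corollary~\ref{coro:pRYcompoundDeg} with $\Qset=\pSpace(\Sset)$ in each of the two degradedness regimes. For part 1, if $\prc$ is reversely strongly degraded, the corollary gives $\pRYdIRcompound=\pRYICcompound=\min_{q\in\pSpace(\Sset)}\max_{p(x)} I_q(X;Y)$, so the sandwich collapses to the stated formula. For part 2, if $\prc$ is strongly degraded, the corollary gives $\pRYdIRcompound=\pRYICcompound=\max_{p(x)}\min\{\min_{q\in\pSpace(\Sset)}I_q(X;Y)+C_1,\;I(X;Y_1)\}$, where the second term is free of $q$ because $W_{Y_1|X,S}=W_{Y_1|X}$ under strong degradedness; again the sandwich collapses and yields the announced expression.

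The argument is essentially a packaging step: the technical content sits in Lemma~\ref{lemm:pRYcompoundDF} (block Markov decode-forward and cutset for the compound channel) and in Theorem~\ref{theo:pRYmain} (reduction of the primitive AVRC random-code capacity to the compound channel with worst-case state distribution). The only point worth checking carefully is that nothing prevents passing from $\Qset$ to $\pSpace(\Sset)$ — specifically, that Corollary~\ref{coro:pRYcompoundDeg} allows this maximally large convex compact $\Qset$ — and that the state-independence of $W_{Y_1|X,S}$ in the strongly degraded case genuinely removes the $q$ dependence from $I(X;Y_1)$, so that the min–max in part 2 coincides precisely with the expression in part 2 of Corollary~\ref{coro:pRYavrcDeg}. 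No additional optimization or auxiliary random variable needs to be introduced; the rest is mere substitution.
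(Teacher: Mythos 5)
Your proposal is correct and follows exactly the route the paper intends: the paper states the corollary immediately after Theorem~\ref{theo:pRYmain} with the one-line justification ``Together with Corollary~\ref{coro:pRYcompoundDeg}, this yields another corollary,'' and your sandwich argument with $\Qset=\pSpace(\Sset)$ is precisely the unpacking of that sentence. Your checks (compactness and convexity of the simplex, and the $q$-independence of $I(X;Y_1)$ under strong degradedness) are the right details to verify and are both sound.
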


Before we proceed to the deterministic code capacity, we 
note that Ahlswede's Elimination Technique \cite{Ahlswede:78p} %Lemma~\ref{lemm:BcorrSizeC} applies
applies to the primitive AVRC as well. Hence, the size of the code collection of any reliable random code can be reduced to polynomial size.

\subsubsection{Deterministic Code Lower and Upper Bounds}
In the next statements, we characterize the deterministic code capacity of the primitive AVRC $\avrc$. 
We consider conditions under which the deterministic code capacity coincides with the random code capacity, and conditions under which it is lower.
Denote %Let $\avc_1$ denote
 the marginal AVC from the sender to the relay by %, 
 \begin{align}
\label{eq:pRYmarginAVCs}
\avc_1=\{ W_{Y_1|X,S}\}%(\cdot|\cdot,\cdot) \}
 \,, %\,,\;
%\avc=\{ W_{Y|X,S}(\cdot|\cdot,\cdot) \} \,.
\end{align}
and denote the corresponding capacity by $\opC(\avc_1)$. 
\begin{lemma}%[Causal Dichotomy]
\label{lemm:pRYcorrTOdetC} 
If $\min\left(C_1,\opC(\avc_1)\right)>0$,  then the capacity of the primitive AVRC $\pavrc$ coincides with the random code capacity, \ie 
%\begin{align}
$\pRYCavc = \pRYrCav$.  
\end{lemma}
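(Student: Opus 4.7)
The plan is to apply Ahlswede's Elimination Technique, as noted just above the lemma, to reduce any reliable random code to a uniform mixture over a polynomial-size sub-collection, and then to derandomize by communicating the code index from sender to receiver along a two-hop route: first over the marginal sender-to-relay AVC $\avc_1$ in a short prefix, and then over the noiseless relay-to-receiver link. The two hypotheses $\opC(\avc_1)>0$ and $C_1>0$ are exactly what each leg requires: the former to deliver the index reliably to the relay against arbitrary state jamming, the latter to reserve enough of the noiseless link for forwarding that index without hurting the main code's relay budget.

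Fix $R<\pRYrCav$ and an intermediate $R'\in(R,\pRYrCav)$. For each $n$, I would first invoke Definition~\ref{def:pRYcorrC} and the Elimination Technique to obtain a random code $\{\code_\gamma\}_{\gamma\in\Gamma_n^*}$ with uniform distribution, blocklength $n-n_1$, rate $R'$, cardinality $L_n\triangleq|\Gamma_n^*|$ polynomial in $n$, and error $\leq 2\eps_n\to 0$ for every state sequence. Then I would choose a prefix length $n_1=n_1(n)=o(n)$, say $n_1=\lceil\alpha\log n\rceil$ with $\alpha$ large, so that both $n_1\opC(\avc_1)/2\geq\log L_n$ and $n_1 C_1\geq\log L_n$ hold; this is feasible since $\min(C_1,\opC(\avc_1))>0$ is a positive constant and $\log L_n=O(\log n)$. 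The deterministic code would index each message as a pair $(m,\gamma)$ with $\gamma\in[1:L_n]$: the sender transmits $\gamma$ on the first $n_1$ symbols via a deterministic sub-capacity code for $\avc_1$ and $\code_\gamma$'s encoding of $m$ on the remaining $n-n_1$ symbols; the relay decodes $\hat\gamma$ from $y_1^{n_1}$, applies the relay rule of $\code_{\hat\gamma}$ to $y_{1,n_1+1}^n$ to produce $\ell_\gamma$, and emits $(\hat\gamma,\ell_\gamma)$, which fits into $[1:2^{nC_1}]$ because $\log L_n\leq n_1 C_1$; the receiver reads $\gamma$ from the relay output and applies $\code_\gamma$'s decoder to $(y_{n_1+1}^n,\ell_\gamma)$.

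For the error analysis, for every state sequence $s^n$, I would average over $(m,\gamma)$ uniform and split the error into two events. The probability that the relay mis-decodes $\gamma$ is at most some $\eps_n^{(1)}\to 0$ by the (exponentially) reliable transmission at rate below $\opC(\avc_1)$ over blocklength $n_1$. Conditioned on $\hat\gamma=\gamma$, the residual error averaged over $\gamma$ uniform is exactly the random code error on the substate $s_{n_1+1}^n$, which is $\leq 2\eps_n$. Hence the deterministic code achieves average error $\leq\eps_n^{(1)}+2\eps_n\to 0$ uniformly in $s^n$, at rate $\tfrac{1}{n}\bigl((n-n_1)R'+\log L_n\bigr)\to R'>R$, so $\pRYCavc\geq R$. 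Letting $R\uparrow\pRYrCav$ and combining with the trivial reverse inequality $\pRYCavc\leq\pRYrCav$ yields the desired equality. The only genuinely delicate point is that both hypotheses enter essentially, and in different ways: without $\opC(\avc_1)>0$ no deterministic prefix code can transfer $\gamma$ to the relay against arbitrary jamming, while without $C_1>0$ the noiseless link carries zero bits and cannot accommodate $\gamma$ alongside $\code_\gamma$'s own relay index.
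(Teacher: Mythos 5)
Your proof is correct and follows essentially the same route as the paper's: Ahlswede's Elimination Technique reduces the random code to a polynomial-size uniform mixture, a short $o(n)$ prefix over the sender--relay marginal (using $\opC(\avc_1)>0$) delivers the index $\gamma$ to the relay, and a negligible share of the noiseless link (using $C_1>0$) forwards $\hat\gamma$ to the receiver, yielding a deterministic code with the same asymptotic rate. The only cosmetic difference is that you pack $(\hat\gamma,\ell_\gamma)$ directly into the single output alphabet $[1:2^{nC_1}]$, whereas the paper describes the same overhead as a separate $\nu'=o(n)$ block-length extension.
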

The proof of Lemma~\ref{lemm:pRYcorrTOdetC} is given in Appendix~\ref{app:pRYcorrTOdetC}, using Ahlswede's Elimination Technique \cite{Ahlswede:78p}. 
Based on the results by \cite{CsiszarNarayan:88p} for the single-user AVC, we give a computable sufficient condition, under which the deterministic code capacity coincides with the random code capacity.
 The condition is given in terms of channel symmetrizability, the definition of which is given below.
\begin{definition} 
\label{def:symmetrizable} \cite{Ericson:85p,CsiszarNarayan:88p}
 A state-dependent DMC $\channel$ is said to be \emph{symmetrizable} if for some conditional distribution $J(s|x)$,
\begin{align}
\label{eq:symmetrizable}
\sum_{s\in\Sset} \channel(y|x,s)J(s|\tx)= \sum_{s\in\Sset} \channel(y|\tx,s)J(s|x) \,,
\end{align}
for all $x,\tx\in\Xset$, $y\in\Yset$.
Equivalently, the channel $\widetilde{W}(y|x,\tx)$ $=$ $%$\\ $
\sum_{s\in\Sset} \channel(y|x,s)J(s|\tx)$ is symmetric, \ie $\widetilde{W}(y|x,\tx)=\widetilde{W}(y|\tx,x)$, for all $x,\tx\in\Xset$, $y\in\Yset$. %We say that such a  $J:\Xset\rightarrow\Sset$ symmetrizes $\channel$. 
 % We say that the AVC $\avcig$ is symmetrizable if the corresponding state-dependent DMC $\channel$ is symmetrizable.  
\end{definition}
Intuitively, symmetrizability identifies a poor channel, where %because it allows % with a highly capable jammer. The 
the jammer can impinge the communication scheme by randomizing the state sequence $S^n$ according to $J^n(s^n|\tx^n)=\prod_{i=1}^n J(s_i|\tx_{i})$, for some codeword $\tx^n$. 
 While the transmitted codeword is $x^n$, the codeword $\tx^n$ can be thought of as an impostor sent by the jammer.  Now, since  the ``average channel" $\widetilde{W}$ is symmetric with respect to $x^n$ and  $\tx^n$, the two codewords appear to the receiver as equally likely. Indeed, by \cite{Ericson:85p}, if an AVC $\{\channel\}$ without a relay  is symmetrizable, then its capacity is zero. %, \ie $\Cavcig=0$.
Furthermore,  Csisz{\'{a}}r and Narayan \cite{CsiszarNarayan:88p} proved that non-symmetrizability is
not only a necessary condition for a positive capacity, but it is a sufficient condition as well. %This yields the following theorem.

\begin{theorem}
\label{theo:pRYmainDbound}
Let $\pavrc$ be a primitive AVRC.
\begin{enumerate}[1)]
\item
If $W_{Y_1|X,S}$ is non-symmetrizable and $C_1>0$, then  
$\pRYCavc = \pRYrCav$. In this case, 
$%\begin{align}
%\max \{ \RYcIRavc,\RYdIRavc \}
\pRYdIRavc\leq \pRYCavc \leq \pRYrICav %\,.
$. %\end{align}
\item
If $\prc$ is reversely strongly degraded, %with $\prc=W_{Y|X,S} W_{Y_1|Y}$,
where $W_{Y_1|X,S}$ is non-symmetrizable and $C_1>0$,
 then
\begin{align}
\pRYCavc=%\pRYrCav= %\pRYdIRavc=\pRYrICav=
  \min_{q(s)} \max_{p(x)} I_q(X;Y) \,.
\end{align}
\item
If $\prc$ is strongly degraded, %with %strongly degraded, %such that 
%$\prc=W_{Y_1|X} W_{Y|Y_1,S}$,
such that $W_{Y_1|X}(y_1|x)\neq W_{Y_1|X}(y_1|\tx)$ for some $x,\tx\in\Xset$, $y_1\in\Yset_1$, and $C_1>0$,
 then
\begin{align}
\pRYCavc=%\pRYrCav= %\pRYdIRavc=\RYrICav=\\
  \max_{p(x)}   \min \left\{ \min_{q(s)} I_q(X;Y)+C_1 \,,\; %\nonumber\\&
 I(X;Y_1)
\right\} \,.
\end{align}
\item
If $W_{\tY|X,S}$ is symmetrizable, where $\tY=(Y,Y_1)$, then $\pRYCavc=0$.
\end{enumerate}
\end{theorem}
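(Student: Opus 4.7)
The plan is to dispatch parts 1--3 as near-immediate corollaries of earlier results, and to treat part 4 by adapting the Csisz\'ar--Narayan symmetrizability argument \cite{CsiszarNarayan:88p} to the joint output $\tY=(Y,Y_1)$.

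For part 1, I would invoke Lemma~\ref{lemm:pRYcorrTOdetC}, whose hypothesis is $\min(C_1,\opC(\avc_1))>0$. Since $\avc_1=\{W_{Y_1|X,S}\}$ is an ordinary single-user AVC, the theorem of Csisz\'ar and Narayan gives $\opC(\avc_1)>0$ whenever $W_{Y_1|X,S}$ is non-symmetrizable; combined with $C_1>0$ this yields $\pRYCavc=\pRYrCav$, and Theorem~\ref{theo:pRYmain} supplies the two-sided bounds. Part 2 then follows from part 1 together with Corollary~\ref{coro:pRYavrcDeg} part 1, since the hypothesis of part 2 supplies both non-symmetrizability of $W_{Y_1|X,S}$ and $C_1>0$. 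For part 3, the first observation is that strong degradation makes $W_{Y_1|X,S}(y_1|x,s)=W_{Y_1|X}(y_1|x)$ independent of $s$, so the symmetrizability identity (\ref{eq:symmetrizable}) collapses to $W_{Y_1|X}(y_1|x)=W_{Y_1|X}(y_1|\tx)$ for every $x,\tx,y_1$; the stated distinguishability condition is therefore exactly the non-symmetrizability of $W_{Y_1|X,S}$, and part 1 combined with Corollary~\ref{coro:pRYavrcDeg} part 2 delivers the formula.

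Part 4 is the only step requiring fresh work. Let $J(s|x)$ witness the symmetrizability of $W_{\tY|X,S}$, set $J^n(s^n|\tx^n)=\prod_{i=1}^n J(s_i|\tx_i)$, and define
\begin{equation*}
\bar W^n(y^n,y_1^n\mid x^n,\tx^n)=\sum_{s^n}W^n_{\tY|X,S}(y^n,y_1^n\mid x^n,s^n)\,J^n(s^n\mid \tx^n),
\end{equation*}
which by the product structure is symmetric under the swap $x^n\leftrightarrow\tx^n$. Given any $(2^{nR},n)$ code $\code=(\enc,\enc_1,\dec)$ with $M=2^{nR}\geq 2$, the decision regions $\tilde D_m=\{(y^n,y_1^n):\dec(y^n,\enc_1(y_1^n))=m\}$ are pairwise disjoint, since $\dec$ and $\enc_1$ are deterministic, so every $(y^n,y_1^n)$ lies in at most one $\tilde D_m$. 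Consider the randomized jamming strategy $q^n(s^n)=\frac{1}{M}\sum_{m'=1}^M J^n(s^n\mid\enc(m'))$. Expanding $\err(q^n,\code)$, discarding the nonnegative diagonal $m=m'$ contribution, and symmetrizing in $(m,m')$ via the symmetry of $\bar W^n$ together with the disjointness of the $\tilde D_m$, one obtains
\begin{align*}
\err(q^n,\code) &\geq \frac{1}{2M^2}\sum_{m\neq m'}\sum_{(y^n,y_1^n)}\bar W^n(y^n,y_1^n\mid\enc(m),\enc(m')) \\
&= \frac{M-1}{2M}\geq \frac{1}{4}.
\end{align*}
Since a $(2^{nR},n,\eps)$ code must satisfy $\err(q^n,\code)\leq\eps$ for every admissible $q^n$, no $\eps<1/4$ is attainable at any positive rate, giving $\pRYCavc=0$.

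The main obstacle lies in part 4: one must (i) verify that disjointness of the decision regions in the $(Y^n,Y_1^n)$ space survives the relay quantization step $\enc_1$, which is immediate because $(y^n,y_1^n)\mapsto\dec(y^n,\enc_1(y_1^n))$ is still a deterministic single-valued map, and (ii) carefully combine the symmetry of $\bar W^n$ with pairwise disjointness so that each off-diagonal pair $(m,m')$ contributes at least unit mass in the double sum. Once these are in place, parts 1--3 follow by direct bookkeeping with the already-established lemmas.
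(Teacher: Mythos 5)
Your proposal is correct and takes essentially the same route as the paper for parts 1--3: non\nobreakdash-symmetrizability of $W_{Y_1|X,S}$ gives $\opC(\avc_1)>0$ by Csisz\'ar--Narayan, Lemma~\ref{lemm:pRYcorrTOdetC} yields $\pRYCavc=\pRYrCav$, Theorem~\ref{theo:pRYmain} gives the two-sided bounds, and Corollary~\ref{coro:pRYavrcDeg} specializes to the (reversely) strongly degraded cases. Your reduction of the part-3 hypothesis to non\nobreakdash-symmetrizability of the state-independent $W_{Y_1|X}$ is also the right reading. For part 4 the paper merely cites an external appendix; your worked-out argument is precisely the Csisz\'ar--Narayan symmetrizability bound adapted to the joint output, and your two key observations are exactly the ones that need checking: the composite map $(y^n,y_1^n)\mapsto\dec(y^n,\enc_1(y_1^n))$ is still single-valued so the decision regions in $\Yset^n\times\Yset_1^n$ remain pairwise disjoint, and the symmetry of $\bar W^n$ plus that disjointness forces each unordered pair $\{m,m'\}$, $m\neq m'$, to contribute at least unit mass after discarding the diagonal, giving $\err(q^n,\code)\geq (M-1)/2M\geq 1/4$. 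The only point worth flagging is cosmetic: the displayed inequality
$\err(q^n,\code)\geq \frac{1}{2M^2}\sum_{m\neq m'}\sum_{(y^n,y_1^n)}\bar W^n(\cdot)$ compresses the pairing-and-symmetrization step into the transition itself; writing out the grouping of $(m,m')$ with $(m',m)$ and then using $\bar W^n(\cdot\mid\enc(m),\enc(m'))=\bar W^n(\cdot\mid\enc(m'),\enc(m))$ together with $\tilde D_m\cap\tilde D_{m'}=\emptyset$ would make the bound fully explicit, but the argument as sketched is sound.
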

 The proof of Theorem~\ref{theo:pRYmainDbound} is given in Appendix~\ref{app:pRYmainDbound}.

%\subsection{Examples}
%\label{sec:pRYexamples}
To illustrate our results, we give the following example of a primitive AVRC.
\begin{example}
\label{example:pBSRCsymm}
Consider a state-dependent primitive relay channel  $\prc$, specified by 
\begin{align}
Y_1=& X(1-S)  \,, \nonumber\\
Y=& X+S						 \,, \nonumber
\end{align}
where $\Xset=\Sset=\Yset_1=\{ 0,1 \}$, $\Yset=\{0,1,2\}$, and $C_1=1$, \ie the link between the relay and the receiver is a noiseless bit pipe. 
It can be seen that both the sender-relay and the sender-receiver marginals are symmetrizable. Indeed,
$W_{Y|X,S}$ satisfies (\ref{eq:symmetrizable}) with $J(s|x)=1$ for $s=x$, and $J(s|x)=0$ otherwise, 
while $W_{Y_1|X,S}$ satisfies (\ref{eq:symmetrizable}) with $J(s|x)=1$ for $s=1-x$, and $J(s|x)=0$ otherwise.
Nevertheless, the capacity of the primitive AVRC $\pavrc=\{\prc\}$ is 
$\pRYCavc=1$, which can be achieved using a code of length $n=1$, with
$f(m)=m$, $f_1(y_1)=y_1$,
\begin{align}
& g(y,\ell)=g(y,y_1)=\begin{cases}
0 & y=0\\
1 & y=2\\
y_1 & y=1\\
\end{cases}
\end{align}
for $m,y_1\in\{0,1\}$ and $y\in\{0,1,2\}$.
%It can be easily verified that the probability of error for this code is zero.
%
This example shows that even if the sender-relay and sender-receiver marginals are symmetrizable, the capacity %of the primitive AVRC
 may still be positive. We further note that
  the condition in part 4 of Theorem~\ref{theo:pRYmainDbound} implies that $W_{Y|X,S}$ and $W_{Y_1|X,S}$ are both symmetrizable, but not vice versa, as shown by this example. 
\end{example}

\subsection{Primitive Gaussian AVRC}
Consider the primitive Gaussian 
 relay channel %$\prc$
 with SFD, %with %$X=(X',X'')$ and % is a special case of a primitive relay channel with orthogonal sender components \cite{ElGamalZahedi:05p},  specified by
\begin{align}
Y_1 =& X''+Z \,,\nonumber\\
Y=& X'+S \,,
\end{align}
%where the Gaussian additive noise $Z\sim \mathcal{N}(0,\sigma^2)$ is independent of the channel state.
Suppose that $C_1>0$, and input and state constraints are imposed as before, \ie
 $\frac{1}{n}\sum_{i=1}^n (X_i'^2+X_i''^2)\leq\plimit$ and $\frac{1}{n}\sum_{i=1}^n S_i^2\leq \Lambda$ with probability $1$. %.
The capacity of the primitive Gaussian AVRC with SFD, under input constraint $\plimit$  and state constraint $\Lambda$ is given by 
\begin{align}  
\label{eq:pRYrCavcGauss} 
\pRYCavc=\pRYrCav =  
\max_{ 0\leq \alpha\leq 1} \bigg[
 \frac{1}{2}\log\left( 1+\frac{\alpha\plimit}{\Lambda} \right) 
  + \min\left\{ C_1, \frac{1}{2}\log\left( 1+\frac{(1-\alpha)\plimit}{\Lambda} \right) \right\}    \bigg]
\,.
\end{align}
%\end{theorem}
This result is due to the following. Observe that one could treat this primitive AVRC as two independent channels, %and send two independent messages, 
one from $X'$ to $Y$ and the other from $X''$ to $Y_1$, dividing the input power to $\alpha\plimit$ and 
$(1-\alpha)\plimit$, respectively. %, with  $0\leq \alpha\leq 1$. 
Based on this observation, the random code direct part  follows from \cite{HughesNarayan:87p}. Next, the deterministic code direct part follows from part 1 of Theorem~\ref{theo:pRYmainDbound}, and the converse part follows straightforwardly from the cutset upper bound in Theorem~\ref{theo:pRYmain}. %The details of this derivation are thus omitted.

\begin{appendices}
%%%%%%%%%%%%%%%%%%%%%%%  Gaussian AVRC %%%%%%%%%%%%%%%%%%%%%%%

\section{Proof of Theorem~\ref{theo:RYgavcCr}}
\label{app:RYgavcCr}
Consider the Gaussian AVRC with SFD under input constraints $\plimit$ and $\plimit_1$ and state constraint $\Lambda$. We prove the theorem using the partial decode-forward lower bound and the cutset upper bound in \cite{PeregSteinberg:17c4}.

\subsection{Achievability Proof}
We begin with the following lemma, which follows from \cite{HughesNarayan:87p} and \cite{CsiszarNarayan:88p1}.
\begin{lemma}[see \cite{HughesNarayan:87p,CsiszarNarayan:88p1}]
\label{lemm:GaussMinI}
Let $\bar{X}$ be a Gaussian random variable  with variance $P$.
Then, for every $\bar{S}\sim q(\bar{s})$ with $\var(\bar{S})\leq N$, $N>0$,
\begin{align}
 I_q(\bar{X};\bar{X}+\bar{S}) \geq \frac{1}{2} \log\left( 1+\frac{P}{N} \right)\,,
\end{align}
with equality for $\bar{S}\sim\mathcal{N}(0,N)$.
\end{lemma}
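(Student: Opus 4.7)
\textit{Proof proposal for Lemma~\ref{lemm:GaussMinI}.} The plan is to combine the entropy power inequality (EPI) with the Gaussian maximum-entropy principle under a variance constraint. Assuming, as is standard in this context, that $\bar{X}$ and $\bar{S}$ are independent, I first write
\[
I_q(\bar{X};\bar{X}+\bar{S})=h(\bar{X}+\bar{S})-h(\bar{X}+\bar{S}\mid \bar{X})=h(\bar{Y})-h(\bar{S}),
\]
where $\bar{Y}\triangleq \bar{X}+\bar{S}$. This isolates the two differential entropies to bound separately.

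Next, the task is to lower-bound $h(\bar{Y})$. For this I would invoke the scalar EPI, which gives
\[
e^{2h(\bar{Y})}\;\geq\; e^{2h(\bar{X})}+e^{2h(\bar{S})}\;=\;2\pi e P+e^{2h(\bar{S})},
\]
where the equality $e^{2h(\bar{X})}=2\pi e P$ uses that $\bar{X}$ is Gaussian of variance $P$. Substituting this bound into the expression for the mutual information and simplifying yields
\[
I_q(\bar{X};\bar{Y})\;\geq\;\tfrac{1}{2}\log\!\left(1+\frac{2\pi e P}{e^{2h(\bar{S})}}\right),
\]
so that $h(\bar{S})$ now appears in a single, monotone place.

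The final step is to invoke the maximum-entropy property of the Gaussian distribution under a second-moment constraint: since $\var(\bar{S})\leq N$, we have $h(\bar{S})\leq \tfrac{1}{2}\log(2\pi e N)$, hence $e^{2h(\bar{S})}\leq 2\pi e N$. Because the right-hand side above is a decreasing function of $e^{2h(\bar{S})}$, this upper bound on $h(\bar{S})$ translates into the desired lower bound
\[
I_q(\bar{X};\bar{Y})\;\geq\;\tfrac{1}{2}\log\!\left(1+\frac{P}{N}\right).
\]
For the equality claim, I would check that when $\bar{S}\sim\mathcal{N}(0,N)$, both inequalities used become equalities: EPI achieves equality when the summands are Gaussian (which they are here), and the maximum-entropy bound is tight precisely for the Gaussian with variance $N$. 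The main conceptual point, and the only place that is not purely mechanical, is recognizing that EPI is the right tool: its additive structure in the ``entropy power'' is exactly what cancels the $h(\bar{S})$ term and leaves an expression whose monotonicity makes the variance constraint directly usable.
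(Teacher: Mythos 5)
Your proof is correct, and it takes a genuinely different and more self-contained route than the paper. The paper does not argue from first principles: it identifies $I_q(\bar X;\bar X+\bar S)$ as the payoff of the additive-noise AVC game, quotes Csisz\'ar--Narayan for the minimax $=$ maximin identity (so the random-code capacity is the saddle value) and Hughes--Narayan for that value being $\tfrac12\log(1+P/N)$, and then observes that since the saddle is attained at $(\bar X,\bar S)\sim(\mathcal N(0,P),\mathcal N(0,N))$, the Gaussian $\bar S$ is the minimizer for fixed Gaussian $\bar X$. Your argument instead proves the pointwise inequality directly: the decomposition $I=h(\bar Y)-h(\bar S)$, the scalar EPI $e^{2h(\bar Y)}\ge 2\pi e P+e^{2h(\bar S)}$ (valid since $\bar X$ is Gaussian and, as you note, $\bar X\perp\bar S$), and the Gaussian max-entropy bound $e^{2h(\bar S)}\le 2\pi e\,\var(\bar S)\le 2\pi e N$ together give the result, with both bounds tight exactly when $\bar S\sim\mathcal N(0,N)$. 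What your approach buys is transparency and independence from the AVC literature; what the paper's approach buys is brevity and consistency with its citation-heavy style. One small point worth a sentence in your write-up: when $\bar S$ has no density (or $h(\bar S)=-\infty$), the identity $I=h(\bar Y)-h(\bar S)$ and the EPI are not directly usable, but then $I_q(\bar X;\bar X+\bar S)=+\infty$ and the inequality holds trivially, so the argument covers all $q$ with $\var(\bar S)\le N$.
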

\begin{proof}[Proof of Lemma~\ref{lemm:GaussMinI}]
Consider the additive-state AVC, %Gaussian AVC $\avc=\{W_{\bar{Y}|\bar{X},\bar{S}}\}$, 
specified by $\bar{Y}=\bar{X}+\bar{S}$, under input constraint $P$ and state constraint $N$.
Then, by Csisz{\'{a}}r and Narayan  \cite{CsiszarNarayan:88p1}, the random code capacity of the AVC $\avc$, under input constraint $P$ and state constraint $N$, is given by
\begin{align}
\rCav=\min_{q(\bar{s}) \,:\; \E \bar{S}^2\leq N} \max_{p(\bar{x}) : \E X^2\leq P} I_q(\bar{X};\bar{Y})
= \max_{p(\bar{x}) \,:\; \E X^2\leq P} \min_{q(\bar{s}) : \E \bar{S}^2\leq N} I_q(\bar{X};\bar{Y}) \,. 
\end{align}
On the other hand, by  Hughes and Narayan \cite{HughesNarayan:87p}, 
\begin{align}
\rCav=\frac{1}{2} \log\left( 1+\frac{P}{N} \right) \,. 
\end{align}
As the saddle point value $I_q(\bar{X},\bar{Y})=\frac{1}{2} \log\left( 1+\frac{P}{N} \right)$ is attained with $\bar{X}\sim\mathcal{N}(0,P)$ and $\bar{S}\sim\mathcal{N}(0,N)$, we have that $\bar{S}\sim\mathcal{N}(0,N)$ minimizes $I_q(\bar{X};\bar{Y})$ for $\bar{X}\sim\mathcal{N}(0,P)$.
%$\min_{q(\bar{s}) \,:\; \E \bar{S}^2\leq N} I_q(\bar{X};\bar{Y})=\frac{1}{2} \log\left( 1+\frac{P}{N} \right)$ for $\bar{X}\sim\mathcal{N}(0,P)$.
\end{proof}
Next, we use the lemma above to prove the direct part.
% for the Gaussian compound relay channel $\RYcompound$ with SFD.
 Although it is assumed in \cite{PeregSteinberg:17c4} that the input, state and output alphabets are finite, the results can be extended to the continuous case as well, using standard discretization techniques \cite{BBT:59p,Ahlswede:78p} \cite[Section 3.4.1]{ElGamalKim:11b}. In particular, \cite[Lemma 1]{PeregSteinberg:17c4}  can be extended to a relay channel under  input constraints $\plimit$ and $\plimit_1$ and state constraint $\Lambda$, by choosing a distribution $p(x',x'',x_1)$ such that % established as in Appendix~\ref{app:RYcompoundDF}
$\E (X'^2+X''^2)\leq \plimit$ and $\E X_1^2\leq\plimit_1$.
Thus, the random code capacity of the Gaussian AVRC with SFD is lower bounded by
\begin{align}
 \opC^{\rstarC} \geq  %\inR_{DF}(\RYcompound) \geq
\max_{\substack{ p(x'') p(x',x_1)\,:\; \\  E(X'^2+X''^2)\leq \plimit \,,\; \\  \E X_1^2\leq\plimit_1} }  
\min \Big\{&  \min_{q(s) \,:\; \E S^2\leq\Lambda} I_q(X_1;Y)+ %\qquad\qquad
 \min_{q(s) \,:\; \E S^2\leq\Lambda} I_q(X';Y|X_1) \,,\; \nonumber\\&\qquad %\nonumber\\&\qquad\qquad\qquad
  I(X'';Y_1)+ \min_{q(s) \,:\; \E S^2\leq\Lambda} I_q(X';Y|X_1)
\Big\} \,,
\label{eq:RYdIRcompoundG}
\end{align}
which %where the lower bound in (\ref{eq:RYdIRcompoundG})
 follows from the partial decode-forward lower bound in \cite[Theorem 3]{PeregSteinberg:17c4} by taking $U=X''$. 
Let $0\leq \alpha,\rho\leq 1$, and let $(X',X'',X_1)$ be jointly Gaussian with
\begin{align}
&X'\sim\mathcal{N}(0,\alpha\plimit) \,,\; X''\sim\mathcal{N}(0,(1-\alpha)\plimit) \,,\; X_1\sim\mathcal{N}(0,\plimit_1) \,,
\end{align}
where the correlation coefficient of $X'$ and $X_1$ is $\rho$, while $X''$ is independent of $(X',X_1)$. Hence, 
\begin{align}
I(X'';Y_1)=\frac{1}{2}\log\left( 1+\frac{(1-\alpha)\plimit}{\sigma^2} \right) \,.
\label{eq:RYgI1}
\end{align}
 By Lemma~\ref{lemm:GaussMinI}, as $\var(X'|X_1=x_1)=(1-\rho^2)\alpha\plimit$ for all $x_1\in\mathbb{R}$, we have that
\begin{align}
\min_{q(s) \,:\; \E S^2\leq\Lambda} I_q(X';Y|X_1)=
\frac{1}{2}\log\left( 1+\frac{(1-\rho^2)\alpha\plimit}{\Lambda} \right) \,.
\label{eq:RYgI2}
\end{align} 
It is left for us to evaluate the first term in the RHS of (\ref{eq:RYdIRcompoundG}).
%As for the term $I_q(X_1;Y)=I_q(X_1;X_1+X'+S)$, 
Then, by standard whitening transformation,
 there exist two independent Gaussian random variables $T_1$ and $T_2$ such that
\begin{align}
&X'+X_1=T_1+T_2 \,,
\\ %\intertext{where}
&T_1\sim\mathcal{N}(0,(1-\rho^2)\alpha\plimit) \,,\;
 T_2\sim\mathcal{N}(0,\plimit_1+\rho^2\alpha\plimit+2\rho\sqrt{\alpha\plimit\cdot\plimit_1} )  \,.
\end{align}
%where $T_1\sim\mathcal{N}(0,(1-\rho^2)\alpha\plimit)$ and $T_2\sim\mathcal{N}(0,(1+\rho^2\alpha+2\rho\sqrt{\alpha})$. 
Hence, $Y=T_1+T_2+S$, and as $\var(X'|X_1=x_1)=\var(T_1)$ for all $x_1\in\mathbb{R}$, we have that
\begin{align}
I_q(X_1;Y)=& H_q(Y)-H_q(X'+S|X_1) \nonumber\\
=& H_q(Y)-H_q(T_1+S)=I_q(T_2;Y) %\nonumber\\
%=& I_q(T_2;T_2+\bar{S}) 
\,
\end{align}
Let $\bar{S}\triangleq T_1+S$. 
Then, by Lemma~\ref{lemm:GaussMinI},
\begin{align}
\min_{q(s) \,:\; \E S^2\leq\Lambda} I_q(X_1;Y)=&\min_{q(s) \,:\; \E S^2\leq\Lambda} I_q(T_2;T_2+\bar{S})
\nonumber\\
=&\frac{1}{2}\log\left(  1+\frac{\var(T_2)}{\var(T_1)+\Lambda} \right) 
%=\frac{1}{2}\log\left(  \frac{\var(Y)}{\var(T_1)+\Lambda} \right)
 %\nonumber\\
=\frac{1}{2}\log\left(  \frac{\plimit_1+\rho^2\alpha\plimit+2\rho\sqrt{\alpha\plimit\cdot\plimit_1}+\Lambda}
{(1-\rho^2)\alpha\plimit+\Lambda} \right) \,.
\label{eq:RYgI3}
\end{align}
Substituting (\ref{eq:RYgI1}), (\ref{eq:RYgI2}) and (\ref{eq:RYgI3}) in the RHS of (\ref{eq:RYdIRcompoundG}), we have that
\begin{align}
 \opC^{\rstarC} \geq&   \max_{ 0\leq \alpha,\rho\leq 1 } 
\min \bigg\{  \frac{1}{2}\log\left(  \frac{\plimit_1+\rho^2\alpha\plimit+2\rho\sqrt{\alpha\plimit\cdot\plimit_1}+\Lambda}
{(1-\rho^2)\alpha\plimit+\Lambda} \right)+ 
 \frac{1}{2}\log\left(1+ \frac{(1-\rho^2)\alpha\plimit}{\Lambda} \right) \,,\; \nonumber\\&%\qquad\qquad\qquad
  \frac{1}{2}\log\left( 1+\frac{(1-\alpha)\plimit}{\sigma^2} \right)+ \frac{1}{2}\log\left( 1+\frac{(1-\rho^2)\alpha\plimit}{\Lambda} \right)
\bigg\}
%\nonumber\\ =&
%\max_{ 0\leq \alpha,\rho\leq 1 } 
%\min \bigg\{ \frac{1}{2}\log\left( 1+\frac{(1+\alpha+2\rho\sqrt{\alpha})\plimit}{\Lambda} \right) 
%, \nonumber\\  
%&\frac{1}{2}\log\left( 1+\frac{(1-\alpha)\plimit}{\sigma^2} \right)+\frac{1}{2}\log\left( 1+\frac{(1+\alpha(1-\rho^2))\plimit}{\Lambda} \right) 
%\bigg\}
 \,.
\label{eq:RYdIRcompoundG1}
\end{align}
Observe that the first sum in the RHS of (\ref{eq:RYdIRcompoundG1}) can be expressed as
\begin{align}
&\frac{1}{2}\log\left(  \frac{\plimit_1+\rho^2\alpha\plimit+2\rho\sqrt{\alpha\plimit\cdot\plimit_1}+\Lambda}
{(1-\rho^2)\alpha\plimit+\Lambda} \right)+ 
 \frac{1}{2}\log\left( \frac{(1-\rho^2)\alpha\plimit+\Lambda}{\Lambda} \right)
\nonumber\\
=&\frac{1}{2}\log\left( \frac{\plimit_1+\rho^2\alpha\plimit+2\rho\sqrt{\alpha\plimit\cdot\plimit_1}+\Lambda}
{\Lambda}  \right)=\frac{1}{2}\log\left(1+ \frac{\plimit_1+\rho^2\alpha\plimit+2\rho\sqrt{\alpha\plimit\cdot\plimit_1}}
{\Lambda}  \right) \,.
\end{align}
Hence, the direct part follows from (\ref{eq:RYdIRcompoundG1}). \qed

\subsection{Converse Proof}
By \cite[Theorem 3]{PeregSteinberg:17c4}, the random code capacity is upper bounded by
\begin{align}
  \opC^{\rstarC} %\nonumber\\
	\leq&  
 \min_{q(s) \,:\; \E S^2\leq \Lambda}  
\max_{\substack{ p(x'') p(x,x_1)\,:\; \\  E(X'^2+X''^2)\leq \plimit \,,\; \\  \E X_1^2\leq\plimit} } 
\min \left\{ I_q(X',X_1;Y) \,,\; I(X'';Y_1)+I_q(X';Y|X_1)
\right\} \nonumber\\
\leq& \max_{\substack{ p(x'') p(x,x_1)\,:\; \\  E(X'^2+X''^2)\leq \plimit \,,\; \\  \E X_1^2\leq\plimit} } 
\min \left\{ I_q(X',X_1;Y) \,,\; I(X'';Y_1)+I_q(X';Y|X_1)
\right\} \bigg|_{S\sim \mathcal{N}(0,\Lambda)} \nonumber\\
=&\max_{ 0\leq \alpha,\rho\leq 1 } 
\min \bigg\{ \frac{1}{2}\log\left( 1+\frac{\plimit_1+\rho^2\alpha\plimit+2\rho\sqrt{\alpha\plimit\cdot\plimit_1}}{\Lambda} \right) 
\,,\; %\nonumber\\  &
\frac{1}{2}\log\left( 1+\frac{(1-\alpha)\plimit}{\sigma^2} \right)+\frac{1}{2}\log\left( 1+\frac{(1-\rho^2)\alpha\plimit}{\Lambda} \right) 
\bigg\}
 \,,
\label{eq:RYICcompoundGconverse} 
\end{align}
where the last equality is due to \cite{ElGamalZahedi:05p}. 
\qed

\begin{center}
\begin{figure}
\begin{tabular}{l|ccccc}
Block				& $1$							& $2$									%& $3$
									& $\cdots$			& $B-1$ 		& $B$\\
								
\\ \hline &&&&& \\ 
Encoder				&	$\xvec(m_1',m_1''|1)$	&	$\xvec(m_2',m_2''|m_1')$		%& $x_{3}^n(m_3',m_3''|m_2')$		
							& $\cdots$			& $\xvec(m_{B-1}',m_{B-1}''|m_{B-2}')$&  $\xvec(1,1|m_{B-1}')$ 
\\&&&&& \\
Relay Decoder			& $\tm_1'\rightarrow$				& $\tm_2'\rightarrow$						%& $\tm_3'\rightarrow$
						& $\cdots$			& $\tm_{B-1}'$ & $\emptyset$
\\&&&&& \\
Relay Encoder				&	$\xvec_1(1)$	&	$\xvec_1(\tm_1')$		%& $x_{1,3}^n(\tm_2')$		
							& $\cdots$			& $\xvec_1(\tm_{B-2}')$&  $\xvec_1(m_{B-1}')$ \\&&&&& \\
Output 		& $\emptyset$		& $\hm_1'$ %& $m_2',m_3''$ 
					& $\cdots$ & $\leftarrow\hm_{B-2}'$ & $\leftarrow\hm_{B-1}'$ 
					\\
					& $\hm_1''$		& $\hm_2''$ %& $m_2',m_3''$ 
					& $\cdots$ & $\hm_{B-1}''$ & $\emptyset$
\end{tabular}
\caption{Partial decode-forward coding scheme. The block index $b\in [1:B]$ is indicated at the top. In the following rows, we have the corresponding elements: 
(1) sequences transmitted by the encoder;  
(2) estimated messages at the relay;
(3) sequences transmitted by the relay;
(4) estimated messages at the destination decoder.
The arrows in the second row indicate that the relay encodes forwards with respect to the block index, while the arrows in the fourth row indicate that the receiver decodes backwards.  
}
\label{fig:RYpDFcompound}
\end{figure}
\end{center}

\section{Proof of Theorem~\ref{theo:RYgavcCdet}}
\label{app:RYgavcCdet}
Consider the Gaussian AVRC $\avrc$ with SFD under input constraints $\plimit$ and $\plimit_1$ and state constraint $\Lambda$.

\subsection{Lower Bound}%{Achievability Proof}
We construct a block Markov code using backward minimum-distance decoding in two steps.
The encoders use $B$ blocks, each consists of $n$ channel uses, to convey $(B-1)$ independent messages to the receiver, where each message $M_b$, for $b\in [1:B-1]$,  is divided into two independent messages.
That is, $M_b=(M_b',M_b'')$,  where $M_b'$ and $M_b''$ are uniformly distributed, \ie
\begin{align}
M_b'\sim \text{Unif}[1:2^{nR'}] \,,\; M_b''\sim \text{Unif}[1:2^{nR''}] \,,\;\text{with $R'+R''=R$}\,,
\end{align}
for $b\in [1:B-1]$.
For convenience of notation, set $M_0'=M_B'\equiv 1$ and $M_0''=M_B''\equiv 1$. The average rate $\frac{B-1}{B}\cdot R$ is arbitrarily close to $R$. The block Markov coding scheme is illustrated in Figure~\ref{fig:RYpDFcompound}.
%
%A block Markov code for the Gaussian AVRC is now constructed as follows.

\emph{Codebook Construction}: 
Fix $0\leq \alpha,\rho\leq 1$ with  %$(1+\alpha+2\rho\sqrt{\alpha})\plimit>\Lambda$. 
\begin{align}
 &(1-\rho^2)\alpha\plimit>\Lambda \,, \label{eq:RYGassump1} \\
 &\frac{\plimit_1}{\plimit}(\sqrt{\plimit_1}+\rho\sqrt{\alpha\plimit})^2>\Lambda+(1-\rho^2)\alpha\plimit \label{eq:RYGassump2} \,.
\end{align}
We  construct $B$ codebooks $\Fset_b$ of the following form,
\begin{align}
\Fset_b=\Big\{ \left(  \xvec_1(m_{b-1}'),  \xvec'(m_b',m_b''|m_{b-1}'), \xvec''(m_b') \right)   \,:\;
 m_{b-1}',m_b'\in [1:2^{nR'}] \,,\; m_b''\in [1:2^{nR''}]
\Big\} \,,
\end{align}
for $b\in [2:B-1]$. %, using the simplified notation $\xvec_1(m_{b-1}')\equiv x_{1,b}^n(m_{b-1}')$,
 %$\xvec'(m_b',m_b''|$ $m_{b-1}')$ $\equiv$ $ x_b'^n(m_b',m_b''|$ $m_{b-1}')$, and $\xvec''(m_b')\equiv x_b''^n(m_b')$.
 The codebooks $\Fset_1$ and $\Fset_B$ have the same form, with fixed $m_0'=m_B'\equiv 1$ and $m_0''=m_B''\equiv 1$. 

The sequences $\xvec''(m_b')$, $m_b'\in [1:2^{nR'}]$ are chosen as follows. Observe that the channel from the sender to the relay, $Y_1=X''+Z$, does not depend on the state.
Thus,  by Shannon's well-known result on the point to point Gaussian channel \cite{Shannon:48p},  the message $m_b'$ can be conveyed to the relay reliably, under input constraint $(1-\alpha)\plimit$, provided that 
$R'<\frac{1}{2}\log\left(1+\frac{(1-\alpha)\plimit}{\sigma^2}  \right)-\delta_1$, where $\delta_1$ is arbitrarily small (see also \cite[Chapter 9]{CoverThomas:06b}). 
 That is, for every $\eps>0$ and sufficiently large $n$, there exists a $(2^{n R'},n,\eps)$ code 
$\code''=(\xvec''(m_b'),g_1(\yvec_{1,b}))$, such that
%with
%%\begin{align}
%$f'': [1:2^{nR'}]\rightarrow (\Xset'')^{n-1}$, % \,,\; 
%$g_1: \Yset_1^{n-1}\rightarrow [1:2^{nR'}]$, such that 
%%\end{align}
$\norm{\xvec''(m_b')}^2\leq n(1-\alpha)\plimit$ for all $m_b'\in [1:2^{nR'}]$.
%Then, the sequence $\xvec''(m_b')$ consists of the $(n-1)$-sequence $f''(m_b')$, followed by $0$, \ie $\xvec''(m_b')=(f''(m_b'),0)$.

Next, we choose the sequences $\xvec_1(m_{b-1}')$ and
 $\xvec'(m_b',m_b''|$ $m_{b-1}')$, for $m_{b-1}',$ $m_b'\in [1:2^{nR'}]$, $m_b''\in [1:2^{nR''}]$.
Applying Lemma~\ref{lemm:CN91} by \cite{CsiszarNarayan:91p} repeatedly yields the following.
\begin{lemma}
\label{lemm:RYgKey}
For every $\eps>0$, $8\sqrt{\eps}<\eta<1$, $K>2\eps$, %and $\dM'=2^{nR'}$, $\dM''=2^{n R''}$, with 
$2\eps\leq R'\leq K$, $2\eps\leq R''\leq K$, and
$n\geq n_0(\eps,\eta,K)$,
\begin{enumerate}[1)]
\item
 there exist $2^{nR'}$ unit vectors, 
\begin{align}
&\avec(m_{b-1}')\in \mathbb{R}^n \,,\; m_{b-1}'\in [1:2^{nR'}] \,, 
\end{align}
 such that for every unit vector $\cvec\in\mathbb{R}^n$ and $0\leq \theta,\zeta\leq 1$, 
\begin{align}
&\left| \left\{  \tm_{b-1}'\in [1:2^{nR'}] \,:\;\, \langle \avec(\tm_{b-1}'),\cvec \rangle \geq \theta   \right\} \right| \leq
2^{n\left([R'+\frac{1}{2}\log(1-\theta^2)]_{+}+\eps\right)} \,,
\intertext{
and if $\theta\geq \eta$ and $\theta^2+\zeta^2>1+\eta-2^{-2R'}$, then
}
&\frac{1}{2^{nR'}}\big| \big\{  m_{b-1}'\in [1:2^{nR'}] \,:\;\,  |\langle \avec(\tm_{b-1}'),\avec(m_{b-1}') \rangle|\geq \theta \,,\;
|\langle \avec(\tm_{b-1}'),\cvec \rangle| \geq \zeta \,,\; %\nonumber \\ &\hspace{5cm}
\text{for some $\tm_{b-1}'\neq m_{b-1}'$}
  \big\} \big| \leq
2^{-n\eps} \,.
\end{align}
\item
Furthermore, for every $m_b'\in [1:2^{nR'}]$, there exist $2^{nR''}$ unit vectors, 
\begin{align}
&\vvec(m_b',m_b'')\in \mathbb{R}^n \,,\; m_b''\in  [1:2^{nR''}] \,,
\end{align}
 such that for every unit vector $\cvec\in\mathbb{R}^n$ and $0\leq \theta,\zeta\leq 1$, 
\begin{align}
&\left| \left\{  \tm_{b}''\in  [1:2^{nR''}] \,:\;\, \langle \vvec(m_b',\tm_{b}''),\cvec \rangle \geq \theta   \right\} \right| \leq
2^{n\left([R''+\frac{1}{2}\log(1-\theta^2)]_{+}+\eps \right)} \,,
\intertext{
and if $\theta\geq \eta$ and $\theta^2+\zeta^2>1+\eta-2^{-2R''}$, then
}
&\frac{1}{2^{nR''}}\big| \big\{  m_{b}''\in  [1:2^{nR''}] \,:\;\,  |\langle \vvec(m_b',\tm_{b}''),\vvec(m_{b}',m_b'') \rangle|\geq \theta  \,,\;
|\langle \vvec(m_b',\tm_{b}''),\cvec \rangle| \geq \zeta  \,,\; %\nonumber \\ &\hspace{5cm}
\text{for some $\tm_{b}''\neq m_{b}''$}
  \big\} \big| \leq
2^{-n\eps} \,.
\end{align}
\end{enumerate}
\end{lemma}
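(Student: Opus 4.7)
\section*{Proof plan for Lemma~\ref{lemm:RYgKey}}

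The plan is to derive both parts as direct applications of Lemma~\ref{lemm:CN91} (i.e.\ \cite[Lemma~1]{CsiszarNarayan:91p}), with only a minor bookkeeping step needed to handle the nesting by $m_b'$ in part~2.

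For part~1, I would simply invoke Lemma~\ref{lemm:CN91} with the identifications $R\leftarrow R'$, $\dM\leftarrow 2^{nR'}$, and with the same parameters $\eps,\eta,K$. Provided $n\geq n_0(\eps,\eta,K)$, the lemma directly yields $2^{nR'}$ unit vectors $\{\avec(m_{b-1}')\}_{m_{b-1}'\in [1:2^{nR'}]}$ with the two stated inequalities (the upper bound on the number of $\tm_{b-1}'$ whose inner product with an arbitrary unit vector $\cvec$ exceeds $\theta$, and the vanishing fraction of ``bad'' indices $m_{b-1}'$ for which there exists a distinct $\tm_{b-1}'$ that is simultaneously near-parallel to both $\avec(m_{b-1}')$ and $\cvec$). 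No new work is required.

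For part~2, I would apply Lemma~\ref{lemm:CN91} separately, once for each value of $m_b'\in [1:2^{nR'}]$, with $R\leftarrow R''$ and $\dM\leftarrow 2^{nR''}$. Each such application produces a family of $2^{nR''}$ unit vectors $\{\vvec(m_b',m_b'')\}_{m_b''\in [1:2^{nR''}]}$ enjoying the two required properties. Since the vectors indexed by different values of $m_b'$ are not required to satisfy any joint property, repeating the construction independently for each $m_b'$ suffices; there is no uniformity-in-$m_b'$ issue to worry about. The threshold $n_0(\eps,\eta,K)$ provided by Lemma~\ref{lemm:CN91} depends only on $\eps,\eta,K$ and not on $\dM$, so the same $n\geq n_0(\eps,\eta,K)$ works for all $m_b'$ and for both parts simultaneously (after taking the maximum of the two thresholds arising from part~1 and part~2, both of which are governed by the same $K$ since we assumed $2\eps\leq R',R''\leq K$).

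In short, the lemma is a straightforward ``two-layer'' packaging of Csisz\'{a}r--Narayan's single-layer existence statement: one layer for the partial message $m'$ (used below to index the relay's codewords $\xvec_1$ and the superposition centers), and a second, nested layer for $m''$ (the satellite codewords carried by $X'$). I do not foresee any substantive obstacle; the only point that requires a sentence of justification is that the threshold $n_0(\eps,\eta,K)$ does not grow with the index set in part~2, which is precisely what allows a single value of $n$ to serve uniformly across all $m_b'$.
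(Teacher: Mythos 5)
Your proposal is correct and is essentially identical to the paper's treatment: the paper introduces the lemma with the single sentence ``Applying Lemma~\ref{lemm:CN91} by \cite{CsiszarNarayan:91p} repeatedly yields the following,'' and your layered reading (one invocation with $R\leftarrow R'$ for the $m'$-indexed centers, plus one independent invocation with $R\leftarrow R''$ per value of $m_b'$ for the satellites) is exactly the intended application. Your added observation that $n_0(\eps,\eta,K)$ does not depend on $\dM$, so a single $n$ serves uniformly across parts and across all $m_b'$, is the right and only point needing a word of justification, though there is no real ``maximum of two thresholds'' to take since both parts use the same $(\eps,\eta,K)$ and hence the same $n_0$.
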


Then, define
\begin{align}
&\xvec_1(m_{b-1}')=\sqrt{n\gamma(\plimit-\delta)}\cdot \avec(m_{b-1}') \,, \nonumber\\
&\xvec'(m_{b}',m_{b}''|m_{b-1}')=  \rho\sqrt{\alpha \gamma^{-1}}\cdot \xvec_1(m_{b-1}')
+\beta\cdot \vvec(m_b',m_b'')  \,,
\end{align}
where
\begin{align}
 \beta\triangleq& \sqrt{n(1-\rho^2)\alpha(\plimit-\delta)} \,,\;
 \gamma\triangleq \nicefrac{\plimit_1}{\plimit} \,.
\label{eq:RYGbeta}
\end{align}
Note that $\norm{\xvec_1(m_{b-1}')}^2= n\gamma(\plimit-\delta)<n\plimit_1$, for all $m_{b-1}'\in [1:2^{nR'}]$. On the other hand, $\norm{\xvec'(m_{b}',m_{b}''|m_{b-1}')}^2$ could be greater than $n\alpha\plimit$ due to the possible correlation between $\xvec_1(m_{b-1}')$ and $\vvec(m_b',m_b'')$.

\emph{Encoding}:
Let $(m_1',m_1'',\ldots, m_{B-1}',m_{B-1}'')$ be a sequence of messages to be sent.
In block $b\in [1:B]$, if $\norm{\xvec'(m_{b}',m_{b}''|m_{b-1}')}^2$ $\leq n\alpha\plimit$, transmit
$(\xvec'(m_{b}',m_{b}''|m_{b-1}'),\xvec''(m_b'))$. Otherwise,
transmit $(\mathbf{0},\xvec''(m_b'))$.

\emph{Relay Encoding}:
In block $1$, the relay transmits $\xvec_1(1)$. %Set $\tm_0'\equiv 1$.
 At the end of block $b\in [1:B-1]$, the relay receives $\yvec_{1,b}$, and
 finds an estimate $\bar{m}_b'=g_1(\yvec_{1,b})$. 
 In block $b+1$, the relay transmits $\xvec_1(\bar{m}_b')$.

\emph{Backward Decoding}: Once all blocks $(\yvec_b)_{b=1}^B$ are received, decoding is performed backwards.
Set $\hm_0'=\hm_0''\equiv 1$. 
For $b=B-1,B-2,\ldots,1$, find a unique $\hm_b'\in[1:2^{nR'}]$ such that
\begin{align}
\norm{\yvec_{b+1}- (1+\rho\sqrt{\alpha\gamma^{-1}})\xvec_1(\hm_{b}')} \leq \norm{ \yvec_{b+1}-(1+\rho\sqrt{\alpha\gamma^{-1}})\xvec_1(m_{b}')} \,,\; 
\text{for all $m_b'\in [1:2^{nR}]$} \,.
\end{align}
If there is more than one %none, or more than
 such $\hm_b'\in[1:2^{nR'}]$,
 declare an error.

Then, the decoder uses $\hm_1',\ldots,\hm_{B-1}'$ as follows.
For $b=B-1,B-2,\ldots,1$, find a unique $\hm_b''\in[1:2^{nR''}]$ such that
\begin{align}
\norm{\yvec_{b}- \xvec_1(\hm_{b-1}')-\xvec'(\hm_{b}',\hm_{b}''|\hm_{b-1}')} \leq 
\norm{\yvec_{b}- \xvec_1(\hm_{b-1}')-\xvec'(\hm_{b}',m_{b}''|\hm_{b-1}')} \,,\; %\nonumber\\
\text{for all $m_b''\in [1:2^{nR''}]$} \,.
\end{align}  
If there is more than one %none, or more than one
 such $\hm_b''\in[1:2^{nR''}]$, declare an error.

\emph{Analysis of Probability of Error}:
 Fix $\svec\in\Sset^n$, and let
\begin{align}
\cvec_0\triangleq \frac{\svec}{\norm{\svec}} \,.
\end{align}
The error event is bounded by the union of the following events. For $b\in [1:B-1]$, define
\begin{align}
&\Eset_{1}(b)=\{ \bar{M}_b'\neq M_b' \} \,,\; %\nonumber\\
%&\Eset_{2,1}(b)=\{  \norm{\xvec'(M_{b}',M_{b}''|M_{b-1}')}^2> n\alpha\plimit \} \,,\; \nonumber\\
\Eset_{2}(b)=\{ \hM_b'\neq M_b' \} \,,\;%\nonumber\\
\Eset_{3}(b)=\{ \hM_b''\neq M_b'' \} \,.
\end{align}
 Then,  the conditional probability of error given the state sequence $\svec$ is bounded by
\begin{align}
P_{e|\svec}(\code) %\nonumber\\
\leq& \sum_{b=1}^{B-1} \prob{\Eset_1(b)}+\sum_{b=1}^{B-1} \prob{\Eset_{2}(b) \cap \Eset_1^c(b)}
%\nonumber\\&
+ \sum_{b=1}^{B-1} \prob{\Eset_3(b) \cap \Eset_1^c(b-1)\cap \Eset_2^c(b) \cap \Eset_2^c(b-1) }  \,,
\label{eq:RYCompoundCerrBoundGauss}
\end{align}
with $\Eset_1(0)=\Eset_2(0)=\emptyset$, where the conditioning on $\Svec=\svec$ is omitted for convenience of notation.
Recall that we have defined $\code''$ %=(\xvec''(m_b'),g_1(\yvec_{1,b}))$
 as a $(2^{nR'},n,\eps)$ code for the point to point gaussian channel $Y_1=X''+Z$. Hence, the first sum in the RHS of (\ref{eq:RYCompoundCerrBoundGauss}) is bounded by $B\cdot\eps$, which is arbitrarily small.

As for the erroneous decoding of $M_b'$ at the receiver, consider the following events,
\begin{align}
&\Eset_{2}(b)=\{
 \norm{ \Yvec_{b+1}- (1+\rho\sqrt{\alpha\gamma^{-1}})\xvec_1(\tm_{b}')} \leq \norm{ \Yvec_{b+1}-(1+\rho\sqrt{\alpha\gamma^{-1}})\xvec_1(M_{b}')} \,,\; %\nonumber\\&\quad 
\text{for some $\tm_b'\neq M_b'$}
 \} \,, \nonumber\\
&\Eset_{2,1}(b)=\{ |\langle \avec(M_{b}'),\cvec_0 \rangle| \geq \eta \} \,, \nonumber\\
&\Eset_{2,2}(b)=\{ |\langle \avec(M_{b}'),\vvec(M_{b+1}) \rangle| \geq \eta \} \nonumber\\
&\Eset_{2,3}(b)=\{ |\langle \vvec(M_{b+1}),\cvec_0 \rangle| \geq \eta \} \nonumber\\
&\widetilde{\Eset}_{2}(b)=\Eset_{2}(b)\cap \Eset_1^c(b)\cap \Eset_{2,1}^c(b)\cap \Eset_{2,2}^c(b)\cap \Eset_{2,3}^c(b) \,,
\label{eq:RYgE24T}
\end{align}
where $M_{b+1}=(M_{b+1}',M_{b+1}'')$. Then,
\begin{align}
\Eset_2(b)\cap\Eset_1^c(b)\subseteq& \Eset_{2,1}(b)\cup \Eset_{2,2}(b)\cup\Eset_{2,3}(b)\cup
(\Eset_{2}(b)\cap\Eset_1^c(b)) \nonumber\\
=& \Eset_{2,1}(b)\cup \Eset_{2,2}(b)\cup\Eset_{2,3}(b)\cup\widetilde{\Eset}_{2}(b) \,.
\end{align}
Hence, by the union of events bound, we have that
\begin{align}
\prob{\Eset_{2}(b) \cap \Eset_1^c(b)}\leq& \prob{\Eset_{2,1}(b)}+\prob{\Eset_{2,2}(b)}+\prob{\Eset_{2,3}(b)}
%\nonumber\\&
+\prob{\widetilde{\Eset}_{2}(b)} \,.
%\Eset_{2,4}(b)\cap \Eset_1^c(b)\cap \Eset_{2,1}^c(b)\cap \Eset_{2,2}^c(b)\cap \Eset_{2,3}^c(b) } \,.
\label{eq:RYgE2nE1c}
\end{align}
By Lemma~\ref{lemm:RYgKey}, given $R'>-\frac{1}{2}\log(1-\eta^2)$, the first term is bounded by
\begin{align}
\prob{\Eset_{2,1}(b)}=& \prob{\langle \avec(M_{b}'),\cvec_0 \rangle \geq \eta}+
\prob{\langle \avec(M_{b}'),-\cvec_0 \rangle \geq \eta}
\nonumber\\
\leq& 2\cdot\frac{1}{2^{nR'}} \cdot 2^{n\left(R'+\frac{1}{2}\log(1-\eta^2)+\eps\right)}
\leq 2\cdot%\frac{1}{2^{nR'}} \cdot 
2^{n\left(%R'
-\frac{1}{2}\eta^2+\eps\right)} \,,
%\leq  2^{n\left(R'-\frac{1}{2}\eta^2\eps\right)}
\label{eq:RYgE21}
\end{align}
%where the last inequality holds
 since $\log(1+t)\leq t$ for $t\in\mathbb{R}$. 
As $\eta^2\geq 8\eps$, the last expression tends to zero as $n\rightarrow\infty$. 
Similarly, $\prob{\Eset_{2,2}(b)}$ and $\prob{\Eset_{2,3}(b)}$ tend to zero as well. %  as $n\rightarrow\infty$.
Moving to the fourth term in the RHS of (\ref{eq:RYgE2nE1c}), observe that for a sufficiently small $\eps$ and $\eta$, the event $\Eset_{2,2}^c(b)$ implies that
 $\norm{\xvec'(M_{b+1}|M_{b}')}^2\leq n\alpha\plimit$, while the event $\Eset_{1}^c(b)$ means that 
$\bar{M}_b'=M_b'$. Hence, the encoder transmits
$(\xvec'(M_{b+1}|M_{b}'),\xvec''(M_{b+1}'))$, the relay transmits $\xvec_1(M_{b}')$, and we have that
\begin{align}
&\norm{ \Yvec_{b+1}- (1+\rho\sqrt{\alpha\gamma^{-1}})\xvec_1(\tm_{b}')}^2 - \norm{ \Yvec_{b+1}-(1+\rho\sqrt{\alpha\gamma^{-1}})\xvec_1(M_{b}')}^2 \nonumber\\
=&\norm{(1+\rho\sqrt{\alpha\gamma^{-1}})\xvec_1(M_{b}')+\beta\vvec(M_{b+1})+\svec -(1+\rho\sqrt{\alpha\gamma^{-1}})\xvec_1(\tm_{b}')}^2-\norm{\beta\vvec(M_{b+1})+\svec}^2
\nonumber\\
%=& (1+\rho\sqrt{\alpha})^2\norm{\xvec_1(M_{b}')}^2%+\norm{\xvec'(M_{b+1}|M_b')+\svec}^2
%+(1+\rho\sqrt{\alpha})^2\norm{\xvec_1(\tm_{b}')}^2 
%\nonumber\\
%&+  2(1+\rho\sqrt{\alpha})\langle \xvec_1(M_{b}'),\beta\vvec(M_{b+1})+\svec \rangle
%-  2(1+\rho\sqrt{\alpha})^2\langle \xvec_1(\tm_{b}'), \xvec_1(M_{b}') \rangle%-\norm{\xvec'(M_{b+1}|M_b')+\svec}^2
%\nonumber\\
%&-  2(1+\rho\sqrt{\alpha})\langle \xvec_1(\tm_{b}'), \beta\vvec(M_{b+1})+\svec \rangle
%\nonumber\\
=& 2(1+\rho\sqrt{\alpha\gamma^{-1}})^2\left( \frac{1}{2}\norm{\xvec_1(M_{b}')}^2  +\frac{1}{2}\norm{\xvec_1(\tm_{b}')}^2
 -\langle \xvec_1(\tm_{b}'), \xvec_1(M_{b}') \rangle
\right) 
\nonumber\\
&+2(1+\rho\sqrt{\alpha\gamma^{-1}})\left(  \langle \xvec_1(M_{b}'),\beta\vvec(M_{b+1})+\svec \rangle
%-\norm{\xvec'(M_{b+1}|M_b')+\svec}^2
%\nonumber\\&
-  \langle \xvec_1(\tm_{b}'), \beta\vvec(M_{b+1})+\svec \rangle \right)
%\nonumber\\
%=& 2(\plimit-\delta)+  2\langle \xvec_1(M_{b}'),\xvec'(M_{b+1}|M_b') \rangle
%+  2\langle \xvec_1(M_{b}'),\svec \rangle-  2\langle \xvec_1(\tm_{b}'),\xvec_1(M_{b}') \rangle
%\nonumber\\
%&-  2\langle \xvec_1(\tm_{b}'), \xvec'(M_{b+1}|M_b') \rangle
%-  2\langle \xvec_1(\tm_{b}'), \svec \rangle
%\,.
%\nonumber\\
%=& 2(\plimit-\delta)+\norm{\svec}^2+2\langle \xvec_1(M_{b}'),\xvec'(M_{b+1}|M_b') \rangle
%+2\langle \xvec_1(M_{b}'),\svec \rangle -  2\langle \xvec_1(M_{b}'),\xvec_1(\tm_{b}') \rangle
%\nonumber\\
%=& 2(\plimit-\delta)+\norm{\svec}^2+2\langle \xvec_1(M_{b}'),\xvec'(M_{b+1}|M_b') \rangle
%\nonumber\\
%&+2\norm{\svec} \sqrt{\plimit-\delta}\langle \avec(M_{b}'),\cvec_0 \rangle-  2\langle \xvec_1(M_{b}'),\xvec_1(\tm_{b}') \rangle
\end{align}
Then, since $\norm{\xvec_1(m_b')}^2=n\gamma(\plimit-\delta)$ for all $m_b'\in [1:2^{nR'}]$, we have that
\begin{align}
\Eset_{2}(b)\cap \Eset_{1}^c(b)\cap\Eset_{2,2}^c(b)  %\nonumber\\ 
\subseteq& 
\{ (1+\rho\sqrt{\alpha\gamma^{-1}})\langle \xvec_1(\tm_{b}'), \xvec_1(M_{b}') \rangle
+\langle \xvec_1(\tm_{b}'), \beta\vvec(M_{b+1})+\svec  \rangle \geq \nonumber\\&\quad
  n(1+\rho\sqrt{\alpha\gamma^{-1}})\gamma(\plimit-\delta)+
\langle \xvec_1(M_{b}'),\beta\vvec(M_{b+1})+\svec \rangle
 \,,\;\text{for some $\tm_b'\neq M_b'$}\} \,.
\label{eq:RYgE1equiv1}
\end{align}
Observe that for sufficiently small $\eps$ and $\eta$, the event 
$\Eset_{2,1}^c(b)\cap \Eset_{2,2}^c(b)\cap \Eset_{2,3}^c(b)$ implies that 
\begin{align}
&\langle \xvec_1(M_{b}'),\beta\vvec(M_{b+1})+\svec \rangle \geq -\delta \,,
\label{eq:RYgE0i1}
\intertext{and}
&\norm{\beta\vvec(M_{b+1})+\svec}^2 \leq n[(1-\rho^2)\alpha\plimit+\Lambda] \,.
\label{eq:RYgE0i2}
\end{align}
Hence, by (\ref{eq:RYgE1equiv1}) and (\ref{eq:RYgE0i1}),
\begin{align}
&\widetilde{\Eset}_{2}(b)=\Eset_{2}(b)\cap \Eset_1^c(b)\cap \Eset_{2,1}^c(b)\cap \Eset_{2,2}^c(b)\cap \Eset_{2,3}^c(b)\nonumber\\
\subseteq&\{ 
(1+\rho\sqrt{\alpha\gamma^{-1}})\langle \xvec_1(\tm_{b}'), \xvec_1(M_{b}') \rangle
+\langle \xvec_1(\tm_{b}'), \beta\vvec(M_{b+1})+\svec  \rangle \geq  
n(1+\rho\sqrt{\alpha\gamma^{-1}})\gamma(\plimit-2\delta) \,,\;%\nonumber\\& \quad 
\text{for some $\tm_b'\neq M_b'$} \} \,.
\label{eq:RYgE1b1}
\end{align}
%As $\xvec'(M_{b+1}|M_b')=\rho\sqrt{\alpha}\xvec_1(M_{b}')+\beta\vvec(M_{b+1})$, (\ref{eq:RYgE1b1}) can be rewritten as 
%\begin{align}
%&\widetilde{\Eset}_{2,4}(b)%=\Eset_{2,4}(b)\cap \Eset_1^c(b)\cap \Eset_{2,1}^c(b)\cap \Eset_{2,2}^c(b)\cap \Eset_{2,3}^c(b)\nonumber\\
%\subseteq \nonumber\\& \{ 
%\langle \xvec_1(\tm_{b}'), (1+\rho\sqrt{\alpha})\xvec_1(M_{b}') \rangle
%+\langle \xvec_1(\tm_{b}'), \beta\vvec(M_{b+1})+\svec  \rangle \geq   
%n(\plimit-2\delta)(1+\rho\sqrt{\alpha})\,,\; \nonumber\\&\quad   \text{for some $\tm_b'\neq M_b'$} \} \,.
%\label{eq:RYgE1b2}
%\end{align}
Dividing both sides of the inequality by $n(1+\rho\sqrt{\alpha\gamma^{-1}})$, we obtain
\begin{align}
\widetilde{\Eset}_{2}(b)%\cap \Eset_1^c(b)\cap \Eset_{2,1}^c(b)\cap \Eset_{2,2}^c(b)\cap \Eset_{2,3}^c(b)
\subseteq \left\{  
\frac{1}{n}\langle \xvec_1(\tm_{b}'), \xvec_1(M_{b}') \rangle
+\frac{\langle \xvec_1(\tm_{b}'), \beta\vvec(M_{b+1})+\svec  \rangle}{n(1+\rho\sqrt{\alpha\gamma^{-1}})} 
\geq   \gamma(\plimit-2\delta) \,,\; %\nonumber\\& 
\text{for some $\tm_b'\neq M_b'$} \right\} \,.
\label{eq:RYgE1b3}
\end{align}
Next, we partition the set of values of $\frac{1}{n}\langle \xvec_1(\tm_{b}'), \xvec_1(M_{b}') \rangle
$ to $K$ bins. 
Let $\tau_1<\tau_2<\cdots<\tau_K$ be such partition,
 where
\begin{align}
&\tau_1=\gamma(\plimit-2\delta)-\frac{\sqrt{(\plimit-\delta)[(1-\rho^2)\alpha\plimit+\Lambda]}}{1+\rho\sqrt{\alpha\gamma^{-1}}}  \,,\;%\eta\,,\;
 %\nonumber\\&
\tau_K= \gamma(\plimit-3\delta) \,,\nonumber \\
&\tau_{k+1}-\tau_k \leq \gamma\cdot\delta  \,,\;\text{for $k=[1:K-1]$}\,,
\end{align}
where $K$ is a finite constant which is independent of $n$, as in Lemma~\ref{lemm:RYgKey}.
By (\ref{eq:RYgE0i2}) and (\ref{eq:RYgE1b3}), given the event $\widetilde{\Eset}_{2}(b)%\cap \Eset_1^c(b)\cap \Eset_{2,1}^c(b)\cap \Eset_{2,2}^c(b)\cap \Eset_{2,3}^c(b)
$, we have that 
\begin{align}
\frac{1}{n}\langle \xvec_1(\tm_{b}'), \xvec_1(M_{b}') \rangle \geq \tau_1 >0\,,
\end{align}
where the last inequality is due to (\ref{eq:RYGassump2}),  for sufficiently small $\delta>0$.
To see this, observe that the inequality in (\ref{eq:RYGassump2}) is  strict, and it implies that
\begin{align}
\sqrt{\gamma} \cdot (\sqrt{\gamma\plimit}+\rho\sqrt{\alpha\plimit})>\sqrt{(1-\rho^2)\alpha\plimit+\Lambda} \,.
\end{align}
Hence, for sufficiently small $\delta>0$, $\tau_1>0$ as
\begin{align}
\tau_1=\frac{\sqrt{\plimit-2\delta}}{1+\rho\sqrt{\alpha\gamma^{-1}}}\cdot
\left(
\sqrt{\gamma}(\sqrt{\gamma(\plimit-2\delta)}+\rho\sqrt{\alpha
(\plimit-2\delta)})
-\sqrt{\frac{\plimit-\delta}{\plimit-2\delta}[(1-\rho^2)\alpha\plimit+\Lambda]}
\right)%>0 
\,.
\end{align}
Furthermore,  if $\tau_k\leq  \frac{1}{n}\langle \xvec_1(\tm_{b}'), \xvec_1(M_{b}') \rangle < \tau_{k+1}$, then
\begin{align}
&\frac{\langle \xvec_1(\tm_{b}'), \beta\vvec(M_{b+1})+\svec  \rangle}{n(1+\rho\sqrt{\alpha\gamma^{-1}})}  \geq 
\gamma(\plimit-2\delta)-\tau_{k+1}
\geq \gamma(\plimit-3\delta)-\tau_{k} \,. 
\end{align}
Thus, 
\begin{align}
\prob{\widetilde{\Eset}_{2}(b)} %\nonumber\\
%=& 
%\prob{\widetilde{\Eset}_{2,4}(b)\cap\left\{ \frac{1}{n}\langle \xvec_1(\tm_{b}'), \xvec_1(M_{b}') \rangle<\eta \right\}} \nonumber\\&
%+\sum_{k=1}^{K-1} \prob{\widetilde{\Eset}_{2,4}(b)\cap\left\{\tau_k\leq \frac{1}{n}\langle \xvec_1(\tm_{b}'), \xvec_1(M_{b}') \rangle<\tau_{k+1} \right\}} \nonumber\\&
%+\prob{\widetilde{\Eset}_{2,4}(b)\cap\left\{ \frac{1}{n}\langle \xvec_1(\tm_{b}'), \xvec_1(M_{b}') \rangle\geq\tau_K \right\}}
%\nonumber\\
\leq& 
%\prob{\frac{\langle \xvec_1(\tm_{b}'), \beta\vvec(M_{b+1})+\svec  \rangle}{n(1+\rho\sqrt{\alpha})} 
%>   \plimit-2\delta-\eta} \nonumber\\&+
 \sum_{k=1}^{K-1} \Pr\bigg(\frac{1}{n}|\langle \xvec_1(\tm_{b}'), \xvec_1(M_{b}') \rangle| \geq \tau_{k} \,,\;  
\frac{|\langle \xvec_1(\tm_{b}'), \beta\vvec(M_{b+1})+\svec  \rangle|}{n(1+\rho\sqrt{\alpha\gamma^{-1}})}\geq
\gamma(\plimit-3\delta)-\tau_{k}
 \,,\; \text{for some $\tm_b'\neq M_b'$} \bigg) 
\nonumber\\&
+ \prob{ \frac{1}{n}| \langle \xvec_1(\tm_{b}'), \xvec_1(M_{b}') \rangle |\geq \tau_{K} 
\,,\;\text{for some $\tm_b'\neq M_b'$} } \,.
\label{eq:RYgE1b5}
\end{align}
By (\ref{eq:RYgE0i2}), this can be further bounded by
\begin{align}
\prob{\widetilde{\Eset}_{2}(b)} %\nonumber\\
\leq&  \sum_{k=1}^{K} \Pr\bigg(|\langle \avec(\tm_{b}'), \avec(M_{b}') \rangle| \geq \theta_k \,,\; %\nonumber\\&
|\langle \avec(\tm_{b}'), \cvec'(M_{b+1}) \rangle| \geq \mu_k
 \,,\;  \text{for some $\tm_b'\neq M_b'$} \bigg)  \,,
\label{eq:RYgE24b4}
\end{align}
where
\begin{align}
\cvec'(m_{b+1})\triangleq\frac{\beta\vvec(m_{b+1})+\svec}{\norm{\beta\vvec(m_{b+1})+\svec}} \,,
\end{align}
and 
\begin{align}
&\theta_k \triangleq \frac{\tau_{k}}{\gamma(\plimit-\delta)} \,,\;
\zeta_k\triangleq \frac{(1+\rho\sqrt{\alpha\gamma^{-1}})\left(\gamma(\plimit-3\delta)-\tau_{k}\right)}{\sqrt{\gamma(\plimit-\delta)((1-\rho^2)\alpha\plimit+\Lambda)}} \,,\;\text{for $k\in [1:K-1]$} \,;\;
%\nonumber\\
 \theta_K\triangleq \frac{\tau_{K}}{\plimit-\delta} \,,\; \zeta_K=0 \,.
\end{align}

By Lemma~\ref{lemm:RYgKey}, the RHS of (\ref{eq:RYgE24b4}) tends to zero as $n\rightarrow\infty$ provided that
\begin{align}
\theta_k \geq \eta \;\text{and }\; \theta_k^2+\zeta_k^2>1+\eta-e^{-2R'} \,,\;\text{for $k=[1:K]$}\,.
\end{align}
For sufficiently small $\eps$ and $\eta$, we have that $\eta\leq \theta_1=\frac{\tau_1}{\gamma(\plimit-\delta)}$, hence the first condition is met. Then,  observe that the second condition is equivalent to
$G(\tau_k)>1+\eta-e^{-2R'}$, for $k\in [1:K-1]$, where %we have defined the function below, 
\begin{align}
G(\tau)=(A\tau)^2+D^2(L-\tau)^2 \,,\;
%\end{align}
\text{with }
%\begin{align}
A=\frac{1}{\gamma(\plimit-\delta)} \,,\; D=\frac{1+\rho\sqrt{\alpha\gamma^{-1}}}{\sqrt{\gamma(\plimit-\delta)((1-\rho^2)\alpha\plimit+\Lambda)  }} \,,\; L=\gamma(\plimit-3\delta) \,.
\end{align}
By differentiation, we have that the minimum value of this function is given by
$%\begin{align}
\min_{\tau_1\leq \tau\leq\tau_K} G(\tau)=\frac{A^2 D^2 L^2}{A^2+D^2}= \frac{D^2}{A^2+D^2}-\delta_1 %\,,
$, %\end{align}
where $\delta_1\rightarrow 0$ as $\delta\rightarrow 0$. 
Thus, the RHS of (\ref{eq:RYgE24b4}) tends to zero as $n\rightarrow\infty$, provided that
\begin{align}
R'<&-\frac{1}{2}\log\left(1+\eta-\frac{D^2}{A^2+D^2}+\delta_1 \right)
=%\nonumber\\=&
-\frac{1}{2}\log\left(\eta+\delta_1+\frac{(1-\rho^2)\alpha\plimit+\Lambda}{(\gamma+\alpha+2\rho\sqrt{\alpha\gamma})\plimit+\Lambda-\delta\gamma(1+\rho\sqrt{\alpha\gamma^{-1}})^2} \right) \,.
\end{align}
This is satisfied for
$%\inR'_\alpha(\avrc)-2\zeta<
R'=\inR'_\alpha(\avrc)-\delta'$,  with
\begin{align}
\inR'_\alpha(\avrc)=\frac{1}{2}\log\left(\frac{(\gamma+\alpha+2\rho\sqrt{\alpha\gamma})\plimit+\Lambda}{(1-\rho^2)\alpha\plimit+\Lambda} \right)=-\frac{1}{2}\log\left(\frac{(1-\rho^2)\alpha\plimit+\Lambda}{(\gamma+\alpha+2\rho\sqrt{\alpha\gamma})\plimit+\Lambda} \right) \,.
\end{align}
and arbitrary  $\delta'>0$,
if $\eta$ and $\delta$ are sufficiently small.

Moving to the error event for $M_b''$, consider the events
\begin{align}
&\Eset_{3}(b)=\Big\{
  \norm{ \Yvec_b- \xvec_1(\hM_{b-1}')-\xvec'(\hM_b',\tm_b''|\hM_{b-1}')} \leq %\nonumber\\&\quad
\norm{ \Yvec_b- \xvec_1(\hM_{b-1}')-\xvec'(\hM_b',M_b''|\hM_{b-1}')} \,,\;
\text{for some $\tm_b''\neq M_b''$}
 \Big\} \,, \nonumber\\
&\Eset_{3,1}(b)=\{ |\langle \vvec(M_{b}),\cvec_0) \rangle| \geq \eta \} \nonumber\\
&\Eset_{3,2}(b)=\{ |\langle \avec(M_{b-1}'), \vvec(M_{b})) \rangle| \geq \eta \} \nonumber\\
&\widetilde{\Eset}_{3}(b)=\Eset_{3}(b) \cap \Eset_1^c(b)\cap \Eset_2^c(b) \cap \Eset_2^c(b-1) \cap
 \Eset_{3,1}^c(b)\cap \Eset_{3,2}^c(b) \,,
\label{eq:RYgE33T}
\end{align}
where $M_{b}=(M_{b}',M_{b}'')$. Then,
\begin{align}
\Eset_3(b) \cap \Eset_1^c(b-1)\cap \Eset_2^c(b) \cap \Eset_2^c(b-1) %\nonumber\\ 
\subseteq& 
 \Eset_{3,1}(b)\cup \Eset_{3,2}(b)\cup (\Eset_{3}(b)\cap\Eset_1^c(b)\cap\Eset_1^c(b)\cap \Eset_2^c(b) \cap \Eset_2^c(b-1)) \nonumber\\
=& \Eset_{3,1}(b)\cup\Eset_{3,2}(b)\cup \widetilde{\Eset}_{3}(b) \,.
\end{align}
Hence, by the union of events bound, we have that
\begin{align}
\prob{\Eset_3(b) \cap \Eset_1^c(b-1)\cap \Eset_2^c(b) \cap \Eset_2^c(b-1) }\leq
\prob{\Eset_{3,1}(b)}+\prob{\Eset_{3,2}(b)}+\prob{\widetilde{\Eset}_{3}(b)} \,.
\label{eq:RYgE3nE2c}
\end{align}
By Lemma~\ref{lemm:RYgKey}, given $R''>-\frac{1}{2}\log(1-\eta^2)$, the first term is bounded by
\begin{align}
\prob{\Eset_{3,1}(b)}=& \prob{\langle \vvec(M_{b}',M_b''),\cvec_0 \rangle \geq \eta}+
\prob{\langle \vvec(M_{b}',M_b''),-\cvec_0 \rangle \geq \eta}
\nonumber\\
\leq& 2\cdot\frac{1}{2^{nR''}} \cdot 2^{n\left(R''+\frac{1}{2}\log(1-\eta^2)+\eps\right)}
\leq 2\cdot%\frac{1}{2^{nR''}} \cdot 
2^{n\left(%R''
-\frac{1}{2}\eta^2+\eps\right)} \,,
%\leq  2^{n\left(R'-\frac{1}{2}\eta^2\eps\right)}
\label{eq:RYgE31}
\end{align}
%where the last inequality holds
 since $\log(1+t)\leq t$ for $t\in\mathbb{R}$. 
As $\eta^2\geq 8\eps$, the last expression tends to zero as $n\rightarrow\infty$. 
Similarly, $\prob{\Eset_{3,2}(b)}$ tends to zero as well.
As for the third term in the RHS of (\ref{eq:RYgE3nE2c}), observe that for a sufficiently small $\eps$ and $\eta$, the event $\Eset_{3,2}^c(b)$ implies that
 $\norm{\xvec'(M_{b}|M_{b-1}')}^2\leq n\alpha\plimit$, and as the event $\Eset_1^c(b)\cap\Eset_2^c(b)\cap\Eset_2^c(b-1)$ occurs, we have that $\tM_{b-1}'=M_{b-1}'$, $\hM_b'=M_b'$, and $\hM_{b-1}'=M_{b-1}'$. Hence, the encoder transmits
$(\xvec'(M_{b}|M_{b-1}'),\xvec''(M_{b+1}'))$, the relay transmits $\xvec_1(M_{b}')$, and we have that 
\begin{align}
& \norm{\Yvec_b- \xvec_1(\hM_{b-1}')-\xvec'(\hM_b',\tm_b''|\hM_{b-1}')}^2 -
\norm{ \Yvec_b- \xvec_1(\hM_{b-1}')-\xvec'(\hM_b',M_b''|\hM_{b-1}')}^2 \nonumber\\
=&\norm{\xvec'(M_{b}',M_b''|M_{b-1}')+\svec -\xvec'(M_b',\tm_b''|M_{b-1}')}^2-\norm{\svec}^2
\nonumber\\
=&\norm{\beta\vvec(M_{b}',M_b'')-\beta\vvec(M_b',\tm_b'')+\svec }^2-\norm{\svec}^2
\nonumber\\
=& 2\beta^2
+  2\beta\langle \vvec(M_{b}',M_b''),\svec \rangle
%\nonumber\\&
-  2\beta^2\langle  \vvec(M_b',\tm_b''),\vvec(M_{b}',M_b'') \rangle%-\norm{\xvec'(M_{b+1}|M_b')+\svec}^2
-  2\beta\langle \vvec(M_b',\tm_b''), \svec \rangle
\end{align}
Then,  we have that
\begin{align}
&\Eset_{3}(b)\cap \Eset_1^c(b)\cap\Eset_2^c(b)\cap\Eset_2^c(b-1)\cap   \Eset_{3,2}^c(b)
\nonumber\\
\subseteq& \{ \beta\langle  \vvec(M_b',\tm_b''),\vvec(M_{b}',M_b'') \rangle
+\langle \vvec(M_b',\tm_b''), \svec \rangle \geq 
 \beta+\langle \vvec(M_{b}',M_b''),\svec \rangle \,,\; %\nonumber\\&\quad 
\text{for some $\tm_b''\neq M_b''$}\} \,.
\label{eq:RYgE3equiv1}
\end{align}
Observe that for sufficiently small $\eps$ and $\eta$, the event 
$\Eset_{3,1}^c(b)$ implies that 
$%\begin{align}
\langle \vvec(M_{b}',M_b''),\svec \rangle \geq -\beta\cdot\delta
$. % \,,
%\label{eq:RYgE0i3}
%\end{align}
Hence, by (\ref{eq:RYgE3equiv1}), % and (\ref{eq:RYgE0i3}),
\begin{align}
\widetilde{\Eset}_{3}
\subseteq&\{ \langle  \vvec(M_b',\tm_b''),\vvec(M_{b}',M_b'') \rangle
+\frac{1}{\beta}\langle \vvec(M_b',\tm_b''), \svec \rangle \geq 
 1-\delta \,,\; \text{for some $\tm_b''\neq M_b''$}\}
\label{eq:RYgE2b1}
\end{align}
Next, we partition the set of values of $\langle  \vvec(M_b',\tm_b''),\vvec(M_{b}',M_b'') \rangle$. Let 
$\tau_1''<\tau_2''<\cdots<\tau_K''$ be such partition, where
\begin{align}
&\tau_1''=1-\delta-\frac{\sqrt{n\Lambda}}{\beta} \,,\;
\tau_K''= 1-2\delta \,,\\
&\tau_{k+1}''-\tau_k'' \leq \delta \,,\;\text{for $k=[1:K-1]$}\,.
\end{align}
By (\ref{eq:RYgE2b1}), given the event $\widetilde{\Eset}_{3}(b)$, we have that 
$
\langle  \vvec(M_b',\tm_b''),$ $\vvec(M_{b}',M_b'')  \rangle$ $ \geq$ $ \tau_1'' %\,.
$, where $\tau_1''>0$ due to (\ref{eq:RYGassump1}) and (\ref{eq:RYGbeta}). 
Now, if $%\tau_k''\leq
  \langle  \vvec(M_b',\tm_b''),\vvec(M_{b}',M_b'') \rangle %\leq \tau_{k+1}''
	$ is in the interval $[\tau_k'',\tau_{k+1}'']$, then it follows that
$%\begin{align}
\langle \vvec(M_b',\tm_b''), \svec \rangle  %\geq \beta(1-\delta-\tau_{k+1})
\geq \beta(1-2\delta-\tau_{k}'') %\,. 
$. %\end{align}
Hence, 
\begin{align}
\prob{\widetilde{\Eset}_{3}(b)} %\nonumber\\
\leq&  \sum_{k=1}^{K-1} \Pr\bigg( \langle  \vvec(M_b',\tm_b''),\vvec(M_{b}',M_b'') \rangle \geq \tau_{k}'' \,,\;
\langle \vvec(M_b',\tm_b''), \cvec_0 \rangle\geq
\frac{\beta(1-2\delta-\tau_{k}'')}{\sqrt{n\Lambda}}
 \,,\;  \text{for some $\tm_b''\neq M_b''$} \bigg) 
\nonumber\\&
+ \Pr\big(  \langle  \vvec(M_b',\tm_b''),\vvec(M_{b}',M_b'') \rangle \geq \tau_{K}'' 
\,,\;\text{for some $\tm_b''\neq M_b''$} \big) \,.
\label{eq:RYgE2b4}
\end{align}
By part 2 of Lemma~\ref{lemm:RYgKey}, the RHS of (\ref{eq:RYgE2b4}) tends to zero as $n\rightarrow\infty$ provided that
\begin{align}
\tau_k'' \geq \eta \;\text{and }\; \tau_k''^2+\zeta_k''^2>1+\eta-e^{-2R''} \,,\;\text{for $k=[1:K]$}\,,
\end{align}
where $\zeta_k''\triangleq \frac{\beta(1-2\delta-\tau_{k}'')}{\sqrt{n\Lambda}}$ for $k\in [1:K-1]$ and
$\zeta_K''=0$. 
For sufficiently small $\eps$ and $\eta$, we have that $\eta\leq \tau_1''$, hence the first condition is met. 
By differentiation, we have that the minimum value of the function $\tilde{G}(\tau)=
\tau^2+\frac{\beta^2}{n\Lambda}(1-2\delta-\tau)^2$
 is given by
$%\begin{align}
%\min_{\tau_1''\leq \tau\leq\tau_K''} \tilde{G}(\tau)=
\frac{\beta^2(1-2\delta)^2}{\beta^2+n\Lambda} %\,.
$. %\end{align}
Thus, the RHS of (\ref{eq:RYgE2b4}) tends to zero as $n\rightarrow\infty$, provided that
\begin{align}
R''<&-\frac{1}{2}\log\left(1+\eta-\frac{\beta^2(1-2\delta)^2}{\beta^2+n\Lambda} \right)
<%\nonumber\\=&
-\frac{1}{2}\log\left(\eta+\frac{\Lambda}{(1-\rho^2)\alpha(\plimit-\delta)+\Lambda} \right) \,.
\end{align}
This is satisfied for
$%\inR''_\alpha(\avrc)-2\zeta<
R''=\inR''_\alpha(\avrc)-\delta''$, with
\begin{align}
\inR''_\alpha(\avrc)=\frac{1}{2}\log\left(\frac{(1-\rho^2)\alpha\plimit+\Lambda}{\Lambda} \right)=-\frac{1}{2}\log\left(\frac{\Lambda}{(1-\rho^2)\alpha\plimit+\Lambda} \right)
\end{align}
for an arbitrary $\delta''>0$, 
if $\eta$ and $\delta$ are sufficiently small.

We have thus shown achievability of every rate
\begin{align}
R<\min\left\{  \inR'_\alpha(\avrc) +\inR''_\alpha(\avrc) , 
\frac{1}{2}\log\left( 1+\frac{(1-\alpha)\plimit}{\sigma^2} \right)+\inR''_\alpha(\avrc) \right\} \,,
\end{align}
where
\begin{align}
\inR'_\alpha(\avrc) +\inR''_\alpha(\avrc)=&\frac{1}{2}\log\left(\frac{(\gamma+\alpha+2\rho\sqrt{\alpha\gamma})\plimit+\Lambda}{(1-\rho^2)\alpha\plimit+\Lambda} \right)+\frac{1}{2}\log\left(\frac{(1-\rho^2)\alpha\plimit+\Lambda}{\Lambda} \right)
\nonumber\\
=&\frac{1}{2} \log\left( 1+  \frac{\plimit_1+\alpha\plimit+2\rho\sqrt{\alpha\plimit\cdot \plimit_1}}{\Lambda} \right) %\,.
\end{align}
(see (\ref{eq:RYGbeta})). 
This completes the proof of the lower bound. %direct part.
%Observe that
%\begin{align}
%&\norm{\xvec_1(m_{b-1}')}^2=\plimit-\delta \,,\nonumber\\
%&\norm{\xvec'(m_{b}',m_{b}''|m_{b-1}')}^2=\alpha(\plimit-\delta)\left[ 1+
%2\sqrt{\rho^2  (1-\rho^2)}\cdot \langle \avec(m_{b-1}'),\vvec(m_b',m_b'') \rangle \right] \,,\nonumber\\
%& \langle \xvec_1(m_{b-1}'),\xvec'(m_{b}',m_{b}''|m_{b-1}') \rangle=
%(\plimit-\delta)\left[ \rho \sqrt{\alpha}+\sqrt{(1-\rho^2)}
%\langle \avec(m_{b-1}'),\vvec(m_b',m_b'') \rangle \right]
%\end{align}
%Hence,
%\begin{align}
%&\norm{\xvec'(m_{b}',m_{b}''|m_{b-1}')+\svec}^2=
%\norm{\xvec'(m_{b}',m_{b}''|m_{b-1}')}^2 +\norm{\svec}^2+ 
%2\langle \xvec'(m_{b}',m_{b}''|m_{b-1}'),\svec \rangle
%\nonumber\\
%=&\norm{\xvec'(m_{b}',m_{b}''|m_{b-1}')}^2 +\norm{\svec}^2+ 
%2\sqrt{\rho^2\alpha(\plimit-\delta)}\cdot 
%\langle \avec(m_{b-1}'),\svec \rangle \nonumber\\&+
%2\sqrt{(1-\rho^2)\alpha(\plimit-\delta)}\cdot 
%\langle \vvec(m_b',m_b''),\svec \rangle \,.
%%\nonumber\\
%%\leq& \norm{\xvec'(m_{b}',m_{b}''|m_{b-1}')}^2+\Lambda
%%+ 2\sqrt{\Lambda\alpha(\plimit-\delta)} (\rho+\sqrt{(1-\rho^2)})
%\end{align}
%
\qed

\subsection{Upper Bound}%{Converse Proof}
Let $R>0$ be an achievable rate. Then, there exists a sequence of $(2^{nR},n,\eps_n^*)$ codes 
$\code_n=(\fvec,\fvec_1,g)$ for the Gaussian AVRC $\avrc$ with SFD such that $\eps_n^*\rightarrow 0$ as $n\rightarrow\infty$, where the encoder consists of a pair $\fvec=(\fvec',\fvec'')$, with $\fvec':[1:2^{nR}]\rightarrow\mathbb{R}^n$ and $\fvec'':[1:2^{nR}]\rightarrow\mathbb{R}^n$.
%for every $\eps_0>0$ and sufficiently large $n$, there exists a $(2^{nR},n,\eps_0)$ code $\code=(f,\fvec_1,g)$ for the Gaussian AVRC $\avrc$ with SFD, where the encoder consists of a pair $f=(f',f'')$, with $f':[1:2^{nR}]\rightarrow\mathbb{R}^n$ and $f'':[1:2^{nR}]\rightarrow\mathbb{R}^n$. 
Assume without loss of generality that the codewords have zero mean, \ie
\begin{align}
&\frac{1}{2^{nR}}\sum_{m=1}^{2^{nR}}\frac{1}{n} \sum_{i=1}^n \mathrm{f}_i(m)=0 \,,\;
\nonumber\\
&\int_{-\infty}^\infty \,d\yvec_1\cdot \frac{1}{2^{nR}}\sum_{m\in [1:2^{nR}]} 
P_{\Yvec_1|M}(\yvec_1|m)\cdot\frac{1}{n} \sum_{i=1}^n \mathrm{f}_{1,i}(y_{1,1},y_{1,2},\ldots,y_{1,i-1})=0 \,,
\end{align}
where
$%\begin{align}
P_{\Yvec_1|M}(\yvec_1|m)=\frac{1}{(2\pi \sigma^2)^{\nicefrac{n}{2}}}e^{-\norm{\yvec_1-\fvec''(m)}^2/2\sigma^2} %\,.
$. %\end{align}
If this is not the case, redefine the code such that the mean is subtracted from each codeword.
 Then, define
\begin{align}
&\alpha\triangleq \frac{1}{n\plimit}\cdot\frac{1}{2^{nR}}\sum_{m\in [1:2^{nR}]} %\frac{1}{n}
\norm{\fvec'(m)}^2 \,,
\nonumber\\
&\alpha_1\triangleq
 \frac{1}{n\plimit_1} \cdot\frac{1}{2^{nR}}\sum_{m\in [1:2^{nR}]}
\int_{-\infty}^\infty \,d\yvec_1\cdot %\frac{1}{2^{nR}}\sum_{m\in [1:2^{nR}]} 
P_{\Yvec_1|M}(\yvec_1|m)\cdot
%\frac{1}{n}
\norm{\fvec_{1}(\yvec_1)}^2  \,.
%\label{eq:RYgRhoA1def}
\nonumber\\
&\rho\triangleq
 \frac{1}{n\sqrt{\alpha\plimit\cdot\alpha_1\plimit_1}}
\int_{-\infty}^\infty \,d\yvec_1\cdot \frac{1}{2^{nR}}\sum_{m\in [1:2^{nR}]} 
P_{\Yvec_1|M}(\yvec_1|m)\cdot
%\frac{\langle \fvec'(m), f_{1}(y_1^n) \rangle}{\norm{\fvec'(m)}\norm{f_{1}(y_1^n)}} \,.
%\frac{1}{n}
\langle \fvec'(m), \fvec_{1}(\yvec_1) \rangle \,,
\label{eq:RYgRhoAdef}
\end{align}
Since the code satisfies the input constraints $\plimit$ and $\plimit_1$, we have that $\alpha$, $\alpha_1$ and $\rho$ are in the interval $[0,1]$. 

First, we show that if 
\begin{align}
\Lambda>\plimit_1+\alpha\plimit+2\rho\sqrt{\alpha\plimit\cdot\plimit_1}+\delta \,,
\label{eq:RYgConvAssumpL}
\end{align}
 then the capacity is zero, where $\delta>0$ is arbitrarily small.
Consider the following jamming strategy. The jammer draws a message $\tM\in [1:2^{nR}]$ uniformly at random, and then,  generates a sequence $\widetilde{\Yvec}_1\in\mathbb{R}^n$ distributed according to $P_{\Yvec_1|M}(\tilde{\yvec}_1|\tm)$. Let $\widetilde{\Svec}=\fvec'(\tM)+\fvec_1(\widetilde{\Yvec}_1)$. If $\frac{1}{n}\norm{\widetilde{\Svec}}^2\leq \Lambda$,  the jammer chooses $\widetilde{\Svec}$ to be the state sequence. Otherwise, let the state sequence consist of all zeros.
Observe that
\begin{align}
\E\norm{\widetilde{\Svec}}^2=&\E\norm{\fvec'(\tM)+\fvec_1(\widetilde{\Yvec}_1)}^2
\nonumber\\
=&\E\norm{\fvec'(\tM)}^2+\E\norm{\fvec_1(\widetilde{\Yvec}_1)}^2+2\E\langle \fvec(\tM), \fvec_{1}(\widetilde{\Yvec}_1) \rangle
\nonumber\\
=& n(\alpha\plimit+\alpha_1\plimit_1+2\rho\sqrt{\alpha\plimit\cdot \alpha_1\plimit_1})
\nonumber\\
\leq& n(\alpha\plimit+\plimit_1+2\rho\sqrt{\alpha\plimit\cdot \plimit_1}) < n(\Lambda-\delta) \,.
\label{eq:RYgtSnLd}
\end{align}
where the second equality is due to (\ref{eq:RYgRhoAdef}), %--(\ref{eq:RYgRhoA1def}), 
and the last inequality is due to (\ref{eq:RYgConvAssumpL}).
Thus, by Chebyshev's inequality, there exists $\kappa>0$ such that
\begin{align}
\prob{\frac{1}{n}\norm{\widetilde{\Svec}}^2\leq \Lambda}\geq \kappa \,.
\label{eq:RYgConvEps1}
\end{align}
The state sequence $\Svec$ is then distributed according to
\begin{align}
&P_{\Svec|\{\frac{1}{n}\norm{\widetilde{\Svec}}^2\leq \Lambda\}}(\svec)= \frac{1}{2^{nR}}\sum_{\tm\in [1:2^{nR}]} \int_{\tilde{\yvec}_1: \fvec'(\tm)+\fvec_1(\tilde{\yvec}_1)=\svec} \,d\yvec_1
P_{\Yvec_1|M}(\yvec_1|m) \,, \nonumber \\
&\cprob{\Svec=\mathbf{0}}{\frac{1}{n}\norm{\widetilde{\Svec}}^2> \Lambda}=1 \,.
\label{eq:RYgQsnSymm}
\end{align}
%where $0^n$ denotes a sequence of zeros.

Assume to the contrary that a positive rate can be achieved when the channel is governed by such state sequence, hence the size of the message set is at least $2$, \ie
$%\begin{align}
\dM\triangleq 2^{nR} \geq 2 %\,.
$. %\end{align}
The probability of error is then bounded by
\begin{align}
\err(q,\code)=&\int_{-\infty}^\infty \,d\svec\cdot q(\svec)P_{e|\svec}^{(n)}(\code)
%\nonumber\\
\geq \prob{\frac{1}{n}\norm{\widetilde{\Svec}}^2\leq \Lambda}\cdot  \int_{\svec:\frac{1}{n}\norm{\svec}^2\leq\Lambda} \,d\svec \cdot P_{\Svec|\{\frac{1}{n}\norm{\widetilde{\Svec}}^2\leq \Lambda\}}(\svec)\cdot P_{e|\svec}^{(n)}(\code)
\nonumber\\
\geq& \kappa\cdot \int_{\svec:\frac{1}{n}\norm{\svec}^2\leq\Lambda} \,d\svec \cdot P_{\Svec|\{\frac{1}{n}\norm{\widetilde{\Svec}}^2\leq \Lambda\}}(\svec)\cdot P_{e|\svec}^{(n)}(\code)
\label{eq:RYGconvB1}
\end{align}
where the inequality holds by (\ref{eq:RYgConvEps1}).
Next, we have that
\begin{align}
P_{e|\svec}^{(n)}(\code) %\nonumber\\
 %=&\frac{1}{\dM} \sum_{m=1}^{\dM}
%\int_{-\infty}^\infty \,d\yvec_1  \int_{\yvec:g(\yvec)\neq m} \,d\yvec\cdot W_{\Yvec,\Yvec_1|\Xvec,\Xvec_1,\Svec}(\yvec,\yvec_1|\fvec(m),\fvec_1(\yvec_1),\svec)
%\nonumber\\
=& \frac{1}{\dM} \sum_{m=1}^{\dM}
\int_{-\infty   }^{\infty} \,d\yvec_1\cdot P_{\Yvec_1|M}(\yvec_1|m) \cdot
\mathds{1}\left\{\yvec_1: g(\fvec'(m)+\fvec_1(\yvec_1)+\svec)\neq m \right\}  \,,
\label{eq:RYgConvCerr}
\end{align}
where we define the indicator function $G(\yvec_1)=\mathds{1}\{\yvec_1\in\Aset \}$ such that
$G(\yvec_1)=1$ if $\yvec_1\in\Aset$, and $G(\yvec_1)=0$ otherwise.
 Substituting (\ref{eq:RYgQsnSymm}) and (\ref{eq:RYgConvCerr}) into (\ref{eq:RYGconvB1}) yields
\begin{align}
\err(q,\code) %\nonumber\\
\geq& \kappa\cdot  \int_{\svec:\frac{1}{n}\norm{\svec}^2\leq\Lambda} \,d\svec\cdot \frac{1}{\dM}\sum_{\tm=1}^{\dM} \int_{\tilde{\yvec}_1: \fvec'(\tm)+\fvec_1(\tilde{\yvec}_1)=\svec} \,d\tilde{\yvec}_1\cdot P_{\Yvec_1|M}(\tilde{\yvec}_1|\tm)
\nonumber\\&\times
 \frac{1}{\dM} \sum_{m=1}^{\dM}
\int_{-\infty   }^{\infty} \,d\yvec_1\cdot P_{\Yvec_1|M}(\yvec_1|m) \cdot
\mathds{1}\left\{\yvec_1: g(\fvec'(m)+\fvec_1(\yvec_1)+\svec)\neq m \right\} 
%\nonumber\\
%=& \frac{1}{\dM^2}\sum_{m=1}^{\dM}\sum_{\tm=1}^{\dM} \int_{\tilde{\yvec}_1: \frac{1}{n}\norm{\fvec'(\tm)+\fvec_1(\tilde{\yvec}_1)}^2\leq\Lambda  } \,d\tilde{\yvec}_1\cdot P_{\Yvec_1|M}(\tilde{\yvec}_1|\tm)
%\int_{y_1^n \,:\; g(\fvec'(m)+\fvec_1(y_1^n)+\svec)\neq m  } P_{\Yvec_1|M}(y_1^n|m) \,dy_1^n
 \,.
\end{align}
Eliminating $\svec=\fvec'(\tm)+\fvec_1(\tilde{\yvec}_1)$, and adding the constraint $\norm{\fvec'(m)+\fvec_1(\yvec_1)}^2\leq\Lambda$, 
we obtain the following,
\begin{align}
\err(q,\code) %\geq  \frac{\eps_1}{\dM^2}\sum_{m=1}^{\dM}\sum_{\tm=1}^{\dM}
%\int_{\tilde{\yvec}_1: \frac{1}{n}\norm{
%\fvec'(\tm)+\fvec_1(\tilde{\yvec}_1)}^2\leq\Lambda  } \,d\tilde{\yvec}_1\cdot P_{\Yvec_1|M}(\tilde{\yvec}_1|\tm) 
%\nonumber\\&\times
%\int_{-\infty   }^{\infty} \,dy_1^n\cdot P_{\Yvec_1|M}(y_1^n|m)
%\nonumber\\
%&\times
%\mathds{1}\left\{y_1^n: g(\fvec'(m)+\fvec_1(y_1^n)+\fvec'(\tm)+\fvec_1(\tilde{\yvec}_1))\neq m \right\}
%\nonumber\\
\geq&\frac{\kappa}{\dM^2}\sum_{m=1}^{\dM}\sum_{\tm=1}^{\dM} \int_{\substack{(\yvec_1\,,\;\tilde{\yvec}_1) \,:\; 
 \frac{1}{n}\norm{
\fvec'(m)+\fvec_1(\yvec_1)}^2\leq\Lambda\,, \\
\frac{1}{n}\norm{
\fvec'(\tm)+\fvec_1(\tilde{\yvec}_1)}^2\leq\Lambda  
}} \,d \yvec_1 \,d\tilde{\yvec}_1\cdot P_{\Yvec_1|M}(\yvec_1|m) P_{\Yvec_1|M}(\tilde{\yvec}_1|\tm)
\nonumber\\& 
 %\int_{y_1^n: \frac{1}{n}\norm{
%\fvec'(m)+\fvec_1(y_1^n)}^2\leq\Lambda  } \,dy_1^n\cdot P_{\Yvec_1|M}(y_1^n|m) \cdot
\nonumber\\
&\times
\mathds{1}\left\{\yvec_1: g(\fvec'(m)+\fvec_1(\yvec_1)+\fvec'(\tm)+\fvec_1(\tilde{\yvec}_1))\neq m \right\} \,.
\end{align}
Now, by interchanging the summation variables $(m,\yvec_1)$ and $(\tm,\tilde{\yvec}_1)$, we have that
\begin{align}
\err(q,\code) %\nonumber\\
\geq&\frac{\kappa}{2\dM^2}\sum_{m=1}^{\dM}\sum_{\tm=1}^{\dM} \int_{\substack{(\yvec_1\,,\;\tilde{\yvec}_1) \,:\; 
 \frac{1}{n}\norm{
\fvec'(m)+\fvec_1(\yvec_1)}^2\leq\Lambda\,, \\
\frac{1}{n}\norm{
\fvec'(\tm)+\fvec_1(\tilde{\yvec}_1)}^2\leq\Lambda  
}} \,d \yvec_1 \,d\tilde{\yvec}_1\cdot P_{\Yvec_1|M}(\yvec_1|m) P_{\Yvec_1|M}(\tilde{\yvec}_1|\tm)
\nonumber\\& 
 %\int_{y_1^n: \frac{1}{n}\norm{
%\fvec'(m)+\fvec_1(y_1^n)}^2\leq\Lambda  } \,dy_1^n\cdot P_{\Yvec_1|M}(y_1^n|m) \cdot
\nonumber\\
&\times
\mathds{1}\left\{\yvec_1: g(\fvec'(m)+\fvec_1(\yvec_1)+\fvec'(\tm)+\fvec_1(\tilde{\yvec}_1))\neq m \right\}
\nonumber\\
+&\frac{\kappa}{2\dM^2}\sum_{m=1}^{\dM}\sum_{\tm=1}^{\dM} \int_{\substack{(\yvec_1\,,\;\tilde{\yvec}_1) \,:\; 
 \frac{1}{n}\norm{
\fvec'(m)+\fvec_1(\yvec_1)}^2\leq\Lambda\,, \\
\frac{1}{n}\norm{
\fvec'(\tm)+\fvec_1(\tilde{\yvec}_1)}^2\leq\Lambda  
}} \,d \yvec_1 \,d\tilde{\yvec}_1\cdot P_{\Yvec_1|M}(\yvec_1|m) P_{\Yvec_1|M}(\tilde{\yvec}_1|\tm)
\nonumber\\& 
 %\int_{y_1^n: \frac{1}{n}\norm{
%\fvec'(m)+\fvec_1(y_1^n)}^2\leq\Lambda  } \,dy_1^n\cdot P_{\Yvec_1|M}(y_1^n|m) \cdot
\nonumber\\
&\times
\mathds{1}\left\{\yvec_1: g(\fvec'(m)+\fvec_1(\yvec_1)+\fvec'(\tm)+\fvec_1(\tilde{\yvec}_1))\neq \tm \right\}
 \,.
\end{align}
Thus,
\begin{align}
\err(q,\code) %\nonumber\\
\geq&\frac{\kappa}{2\dM^2}\sum_{m=1}^{\dM}\sum_{\tm\neq m} \int_{\substack{(\yvec_1\,,\;\tilde{\yvec}_1) \,:\; 
 \frac{1}{n}\norm{
\fvec'(m)+\fvec_1(\yvec_1)}^2\leq\Lambda\,, \\
\frac{1}{n}\norm{
\fvec'(\tm)+\fvec_1(\tilde{\yvec}_1)}^2\leq\Lambda  
}} \,d \yvec_1 \,d\tilde{\yvec}_1\cdot P_{\Yvec_1|M}(\yvec_1|m) P_{\Yvec_1|M}(\tilde{\yvec}_1|\tm)
 %\int_{y_1^n: \frac{1}{n}\norm{
%\fvec'(m)+\fvec_1(y_1^n)}^2\leq\Lambda  } \,dy_1^n\cdot P_{\Yvec_1|M}(y_1^n|m) \cdot
\nonumber\\
&\times \bigg[
\mathds{1}\left\{\yvec_1: g(\fvec'(m)+\fvec_1(\yvec_1)+\fvec'(\tm)+\fvec_1(\tilde{\yvec}_1))\neq m \right\}
%\nonumber\\&\quad
+ \mathds{1}\left\{\yvec_1: g(\fvec'(m)+\fvec_1(\yvec_1)+\fvec'(\tm)+\fvec_1(\tilde{\yvec}_1))\neq \tm \right\} \bigg]
 \,.
\end{align}
As the sum in the square brackets is at least $1$ for all $\tm\neq m$, it follows that
\begin{align}
\err(q,\code) %\nonumber\\
\geq&\frac{\kappa}{2\dM^2}\sum_{m=1}^{\dM}\sum_{\tm\neq m} \int_{\substack{(\yvec_1\,,\;\tilde{\yvec}_1) \,:\; 
 \frac{1}{n}\norm{
\fvec'(m)+\fvec_1(\yvec_1)}^2\leq\Lambda\,, \\
\frac{1}{n}\norm{
\fvec'(\tm)+\fvec_1(\tilde{\yvec}_1)}^2\leq\Lambda  
}} \,d \yvec_1 \,d\tilde{\yvec}_1\cdot P_{\Yvec_1|M}(y_1^n|m) P_{\Yvec_1|M}(\tilde{\yvec}_1|\tm)
 %\int_{y_1^n: \frac{1}{n}\norm{
%\fvec'(m)+\fvec_1(y_1^n)}^2\leq\Lambda  } \,dy_1^n\cdot P_{\Yvec_1|M}(y_1^n|m) \cdot
\nonumber\\
\geq& \frac{\kappa}{4}\cdot \prob{ \begin{array}{l}
\frac{1}{n}\norm{\fvec'(M)+\fvec_1(\Yvec_1)}^2\leq\Lambda \,,\\
\frac{1}{n}\norm{\fvec'(\tM)+\fvec_1(\widetilde{\Yvec}_1)}^2\leq\Lambda \,,\; \tM\neq M
 \end{array}
  }
\,.
\end{align}
Then, recall that by (\ref{eq:RYgtSnLd}), the expectation of $\frac{1}{n}\norm{\fvec'(M)+\fvec_1(Y_1^n)}^2$ is strictly lower than $\Lambda$, and for a sufficiently large $n$, the conditional expectation of  $\frac{1}{n} || \fvec'(\tM)+\fvec_1(\widetilde{\Yvec}_1) ||^2$ given $\tM\neq M$ is also strictly lower than $\Lambda$. Thus, by Chebyshev's inequality, the probability of error is bounded from below by a positive constant. Following this contradiction, we deduce that if the code is reliable, then $\Lambda\leq (1+\alpha+2\rho\sqrt{\alpha})\plimit$.

It is left for us to show that for $\alpha$ and $\rho$ as defined in (\ref{eq:RYgRhoAdef}), we have that  $R<\mathsf{F}_G(\alpha,\rho)$ (see (\ref{eq:Fg})).
%Assume to the contrary that $R>\mathsf{F}_G(\alpha,\rho)$, yet the
For a $(2^{nR},n,\eps_n^*)$ code, %$\code$ satisfies 
\begin{align}
P_{e|\svec}^{(n)}(\code)\leq \eps_n^* \,,
\label{eq:RYgConvCerrM}
\end{align}
for all $\svec\in\mathbb{R}^n$ with $\norm{\svec}^2\leq n\Lambda$.
Then, consider using the code $\code$ over the Gaussian relay channel $W^{\oq}_{Y,Y_1|X,X_1}$, specified by
\begin{align}
&Y_1=X''+Z \,, \nonumber\\
&Y=X'+X_1+\oS \,,
\end{align}
 where the sequence $\overline{\Svec}$ is i.i.d. $\sim\oq= %\oS_i\sim
\mathcal{N}(0,\Lambda-\delta)$.
First, we show that the code $\code$ is reliable for this channel, and then we show that $R<\mathsf{F}_G(\alpha,\rho)$.
Using the code $\code$ over the channel $W^{\oq}_{Y,Y_1|X,X_1}$, the probability of error is bounded by
\begin{align}
\err(\oq,\code)=\prob{\frac{1}{n}\norm{\overline{\Svec}}>\Lambda}+\int_{\svec:\frac{1}{n}\norm{\overline{\Svec}}\leq \Lambda}\, d\svec\cdot P_{e|\svec}^{(n)}(\code) \leq \eps_n^*+\eps_n^{**} \,,
\end{align}
where we have bounded the first term by $\eps_n^{**}$ using the law of large numbers  and the second term using (\ref{eq:RYgConvCerrM}), where $\eps_n^{**}\rightarrow 0$ as $n\rightarrow \infty$. 
Since  $W^{\oq}_{Y,Y_1|X,X_1}$ is a channel without a state, we can now show that $R<\mathsf{F}_G(\alpha,\rho)$ by following the lines of \cite{CoverElGamal:79p} and
\cite{ElGamalZahedi:05p}. By Fano's inequality and \cite[Lemma 4]{CoverElGamal:79p}, we have that
\begin{align}
R\leq&\,\frac{1}{n}\sum_{i=1}^n I_{\oq}(X_i',X_i'',X_{1,i};Y_i)+\eps_n  \,,\nonumber\\% \intertext{and}
R\leq&\,\frac{1}{n}\sum_{i=1}^n I_{\oq}(X_i',X_i'';Y_i,Y_{1,i}|X_{1,i})+\eps_n 
%\nonumber\\ =& \frac{1}{n}\sum_{i=1}^n I(X_i'';Y_{1,i})+ \frac{1}{n}\sum_{i=1}^n I_{\oq}(X_i';Y_i|X_{1,i})+\eps_n 
\,,
\label{eq:RYGorsUpperb1}
\end{align}
where $\oq=\mathcal{N}(0,\Lambda-\delta)$, $\Xvec'= \fvec'(M)$, $\Xvec''= \fvec''(M)$, $\Xvec_1=
\fvec_1(\Yvec_1)$, and $\eps_n\rightarrow 0$ as $n\rightarrow\infty$. %The last equality holds since $(X',X_1,Y)\Cbar X'' \Cbar Y_1$ and $(X'',Y_1)\Cbar X_1$ form a Markov chain.
For the Gaussian relay channel with SFD, we have the following Markov relations,
\begin{align}
&Y_{1,i}\Cbar X_i''\Cbar (X_i',X_{1,i},Y_{1,i}) \,, \label{eq:RYGorsY1markov}\\
&(X_i'',Y_{1,i})\Cbar (X_i',X_{1,i})\Cbar Y_{i} \,. \label{eq:RYGorsYmarkov} 
\end{align}
Hence,  by (\ref{eq:RYGorsYmarkov}), 
$%\begin{align}
I_{\oq}(X_i',X_i'',X_{1,i};Y_i)=I_{\oq}(X_i',X_{1,i};Y_i)  %\,.
$. %\end{align}
Moving to the second bound in the RHS of (\ref{eq:RYGorsUpperb1}), we follow the lines of \cite{ElGamalZahedi:05p}. Then, by the mutual information chain rule, we have
\begin{align}
&I_{\oq}(X_i',X_i'';Y_i,Y_{1,i}|X_{1,i}) \nonumber\\ =& I(X_i'';Y_{1,i}|X_{1,i})+I(X_i';Y_{1,i}|X_i'',X_{1,i})+
I_{\oq}(X_i',X_i'';Y_i|X_{1,i},Y_{1,i})
\nonumber\\
\stackrel{(a)}{=}& I(X_i'';Y_{1,i}|X_{1,i})+I_{\oq}(X_i',X_i'';Y_i|X_{1,i},Y_{1,i})
\nonumber\\
\stackrel{(b)}{=}& [H(Y_{1,i}|X_{1,i})-H(Y_{1,i}|X_i'')]+[H_{\oq}(Y_i|X_{1,i},Y_{1,i})-H_{\oq}(Y_i|X_i',X_{1,i})]
\nonumber\\
\stackrel{(c)}{\leq}& I_{q_1}(X_i'';Y_{1,i})+I(X_i';Y_i|X_{1,i})%H_q(Y|X_1)-H_q(Y|X',X_1)
\end{align}
where $(a)$ is due to (\ref{eq:RYGorsY1markov}), $(b)$ is due to (\ref{eq:RYGorsYmarkov}),  and
$(c)$ holds since conditioning reduces entropy.
Introducing a time-sharing random variable $K\sim \text{Unif}[1:n]$, which is independent of 
$\Xvec'$, $\Xvec''$, $\Xvec_1$, $\Yvec$, $\Yvec_1$, we have that 
\begin{align}
R-\eps_n\leq&\,  I_{\oq}(X_K',X_{1,K};Y_K|K)
 %=h_{\oq}(Y_K|K)-\frac{1}{2}\log(2\pi e \Lambda)  \nonumber\\\leq& h_{\oq}(Y_K)-\frac{1}{2}\log(2\pi e \Lambda) \,, 
\nonumber\\
R-\eps_n\leq&  I(X_K'';Y_{1,K}|K)+I_{\oq}(X_K';Y_K|X_{1,K},K) \,.
\label{eq:RYgConvIneqR}
\end{align}
%where the last equality holds since conditioning reduces entropy.
Now, by the maximum differential entropy lemma (see \eg \cite[Theorem 8.6.5]{CoverThomas:06b}), 
\begin{align}
 I_{\oq}(X_K',X_{1,K};Y_K|K)%= h_q(Y_i)-\frac{1}{2}\log(2\pi e\Lambda) 
\leq&  \frac{1}{2} \log\left( \frac{ \E[(X_K'+X_{1,K})^2]+(\Lambda-\delta)}{\Lambda-\delta} \right)
%\nonumber\\
=\frac{1}{2} \log\left(1+ \frac{ \alpha\plimit+\alpha_1\plimit_1+2\rho\sqrt{\alpha\plimit\cdot\alpha_1\plimit_1}}{\Lambda-\delta} \right)
 \,,
\end{align}
and
\begin{align}
%I_{\oq}(X_K',X_K'';Y_K,Y_{1,K}|X_{1,K})=
I(X_K'';Y_{1,K}|K)+I_{\oq}(X_K';Y_K|X_{1,K},K)
%\nonumber\\
%=& h(Y_{1,i})-\frac{1}{2}\log(2\pi e\sigam^2)+h(Y_i|X_{1,i})-\frac{1}{2}\log(2\pi e\sigam^2) 
%\nonumber\\
\leq& \frac{1}{2}\log\left( \frac{ \E X_K''^2+\sigma^2}{\sigma^2}  \right) %\nonumber\\
+ \frac{1}{2}\log\left( \frac{
\left[ 1-\frac{(\E(X_K'\cdot X_{1,K}))^2 }{\E X_K'^2 \cdot\E X_{1,K}^2}   \right]\E X_K'^2
+(\Lambda-\delta)}{\Lambda-\delta}  \right)
\nonumber\\
=& \frac{1}{2}\log\left( 1+\frac{ (1-\alpha)\plimit}{\sigma^2}  \right) %\nonumber\\
+ \frac{1}{2}\log\left(1+ \frac{(1-\rho^2)\alpha\plimit}{\Lambda-\delta}  \right) \,,
\label{eq:RYgConvIneqR3}
\end{align}
where $\alpha$, $\alpha_1$ and $\rho$ are given by (\ref{eq:RYgRhoAdef}). %--(\ref{eq:RYgRhoA1def}).
Since $\delta>0$ is arbitrary, and $\alpha_1\leq 1$,  the proof follows from (\ref{eq:RYgConvIneqR})--(\ref{eq:RYgConvIneqR3}).
\qed

%%% Primitive Relay Channel
\section{Proof of Lemma~\ref{lemm:pRYcompoundDF}}
\label{app:pRYcompoundDF}
\subsection{Partial Decode-Forward Lower Bound}

We use superposition decode-forward coding, where the decoder uses joint typicality  with respect to a state type, which is ``close" to some  $q\in\Qset$.
For simplicity, assume that $C_1\leq\pRYdIRcompound$. The proof can be easily adjusted otherwise.
 Let $\delta>0$ be arbitrarily small, and
%
%We begin with the following definitions. Basic method of types concepts are defined as in \cite[Chapter 2]{CsiszarKorner:82b}; including the definition of a type $\hP_{x^n}$ of a sequence $x^n$; a joint type $\hspace{-0.1cm}\hP_{x^n,y^n}$ and a conditional type $\hP_{x^n|y^n}$ of a pair of sequences $(x^n,y^n)$; and 
 %a $\delta$-typical set $\tset(P_{X,Y})$ with respect to a distribution $P_{X,Y}(x,y)$. 
 %We also 
define a set of state types $\tQ$ by  
%As in Appendix~\ref{subsec:ALCompoundPC}, define a set of state types,
\begin{align}
\label{eq:pRYtQ}
\tQ=\{ \hP_{s^n} \,:\; s^n\in\Aset^{ \delta_1 %\nicefrac{\delta}{2}
}(q) \;\text{ for some  $q\in\Qset%\pLSpaceS
$}\, \} \,,
\end{align}
where 
\begin{align}
\label{eq:pRYdelta1}
\delta_1 \triangleq
\frac{\delta}{2\cdot |\Sset|} \,.
\end{align} 
Namely, $\tQ$ is the set of types that are $\delta_1$-close %$\nicefrac{\delta}{2}$-close
 to some state distribution $q(s)$ in $\Qset$. %Note that for any fixed $\delta$ (or $\delta_1$), for a sufficiently large $n$, the set $\tQ$ covers the set $\Qset$, and it is in fact a $\delta_1$-blowup of $\Qset$.
A code $\code$ for the compound relay channel %with causal SI
 is constructed as follows. Each message is divided into two parts, $m=(m_1,m_2)$, where
$m_1\in [1:2^{nR_1}]$, $m_2\in [1:2^{nR_2}]$, and $R_1+R_2=R$ with $R_1\geq C_1$. 

\emph{Codebook Generation}: 
 Fix the distribution $P_{U,X}(u,x)$, and let 
\begin{align}
\label{eq:pRYCompoundDistUY}
P^q_{X,Y,Y_1|U}(x,y,y_1|u)= P_{X|U}(x|u)\cdot \sum_{s\in\Sset} q(s)  \prc(y,y_1|x,s) \,.
\end{align}
Generate $2^{n R_1}$ independent sequences $u^n(m_1)$, $m_1\in [1:2^{nR_1}]$, at random, each according to $\prod_{i=1}^n P_{U}(u_{i})$. 
Then, for every $m_1\in [1:2^{nR_1}]$, generate $2^{nR_2}$ sequences,
%\begin{align}
$x^n(m_1,m_2) \sim \prod_{i=1}^n P_{X|U}(x_i|u_{i}(m_1))$, % \,,\; 
where $m_2\in [1:2^{nR_2}]$, % \,,
%\end{align}
 conditionally independent given $u^n(m_1)$. 
Partition the set of indices $[1:2^{nR_1}]$ into  $2^{nC_1}$ bins of equal size,
$\Dset(\ell)=[ (\ell-1)2^{n(R_1-C_1)}+1: \ell 2^{n(R_1-C_1)} ]$ for $\ell\in [1:2^{nC_1}]$.

\emph{Encoding}:
To send $m=(m_1,m_2)$, transmit $x^n(m_1,m_2)$. 

\emph{Relay Encoding}:
The relay receives $y_{1}^n$, and finds a unique $\tm_1\in [1:2^{nR_1}]$ such that 
\begin{align}
(u^n(\tm_1),y_{1}^n)\in\tset(P_{U} P^{q}_{Y_1|U}) \,,\; 
\text{for some $q\in\tQ$}\,.
\end{align}
If there is none or there is more than one such, set $\tm_1=1$. 
The relay sends the associated bin index $\ell$, for which $\tm_1\in\Dset(\ell)$. 

\emph{Decoding}: The decoder receives $\ell$ and $y^n$. First, the decoder finds a unique $\hm_1\in
\Dset(\ell)$ such that
\begin{align}
 (u^n(\hm_1),y^n)\in\tset(P_{U} P^{q}_{Y|U}) \,,\;\text{for some $q\in\tQ$}\,.
\end{align}
If there is none, or more than one such $\hm_1\in[1:2^{nR_1}]$, declare an error.
Then, the decoder finds a unique $\hm_2\in[1:2^{nR_2}]$ such that
\begin{align}
 (u^n(\hm_1),x^n(\hm_1,\hm_2),y^n)\in\tset(P_{U,X} P^{q}_{Y|X}) \,,\; \text{for some $q\in\tQ$}\,.
\end{align}  
If there is none, or more than one such $\hm_2\in[1:2^{nR_2}]$, declare an error.
We note that using the set of types $\tQ$ instead of the original set of state distributions $\Qset$ alleviates the analysis, since
 $\Qset$ is not necessarily finite nor countable.

\emph{Analysis of Probability of Error}: 
Assume without loss of generality that the user sent  $(M_1,M_2)=(1,1)$, and let $q^*(s)\in\Qset$ denote the \emph{actual} state distribution chosen by the jammer.
The error event is bounded by the union of the  events
\begin{align}
\Eset_{1}=&\{ \tM_1\neq 1 \} \,,\; \Eset_{2}=\{ \hM_1\neq 1 \} \,,\; \Eset_{3}=\{ \hM_2\neq 1 \} \,.
\end{align}
Hence,  the probability of error is bounded by
\begin{align}
 \err(q,\code) \leq&  \prob{\Eset_1}+ \prob{\Eset_2\cap\Eset_1^c }+ \prob{\Eset_3\cap\Eset_1^c\cap \Eset_2^c}  \,,
\label{eq:pRYCompoundCerrBound}
\end{align}
 where the conditioning on $(M_1,M_2)=(1,1)$ is omitted for convenience of notation.

We begin with the probability of erroneous relaying, 
\begin{align}
\label{eq:pRYcompoundDFE1bound}
\prob{\Eset_1}\leq \prob{ \Eset_{1,1}}+\prob{ \Eset_{1,2}} \,,
\end{align}
where
\begin{align}
\Eset_{1,1}=& \{ (  U^n(1), Y_{1}^n  )\notin \tset(P_{U} P^{q'}_{Y_1|U})  \;\text{ for all $q'\in\tQ$} \} \nonumber\\
\Eset_{1,2}=& \{ 
(  U^n(m_1), Y_{1}^n  )\in \tset(P_{U} P^{q'}_{Y_1|U}) \,,\; %\nonumber\\ \qquad\qquad& 
\text{for some $m_1\neq 1$, $q'\in\tQ$} 
\} \,.
\label{eq:pRYcompuondE11}
\end{align}
Consider the first term on the RHS of (\ref{eq:pRYcompoundDFE1bound}).
We now claim that    the event $\Eset_{1,1}$ implies that 
$(  U^n(1), Y_{1}^n  )\notin \Aset^{\nicefrac{\delta}{2}}(P_{U} P^{q^*}_{Y_1|U})$.	
%This claim is due to the following.
 Assume to the contrary that $\Eset_{1,1}$ holds, but $(  U^n(1), Y_{1}^n  )\in \Aset^{\nicefrac{\delta}{2}}(P_{U} P^{q^*}_{Y_1|U})$.  
Then, for a sufficiently large $n$, there exists a type $q'(s)$ such that 
$%\begin{align}
%\label{eq:pRYqdelta1}
|q'(s)-q^*(s)|\leq \delta_1 %\,,  
$ %\end{align}
 for all $s\in\Sset$, and by the definition in (\ref{eq:pRYtQ}), $q'\in\tQ$.  
We also have that %Then, (\ref{eq:pRYqdelta1}) implies that
\begin{align}
|P_{Y_1|U}^{q'}(y_1|u)-P_{Y_1|U}^{q^*}(y_1|u)|\leq |\Sset|\cdot \delta_1=\frac{\delta}{2} \,,
\end{align}
for all $u\in\Uset$ and $y_1\in\Yset_1$ (see (\ref{eq:pRYCompoundDistUY}) and (\ref{eq:pRYdelta1})). Hence, 
$(  U^n(1), Y_{1}^n  )\in \tset(P_{U} P^{q'}_{Y_1|U})$, and this contradicts the first assumption.
It follows that 
	\begin{align}
	\label{eq:pRYRYSllnRL}
	&\prob{ \Eset_{1,1}}	\leq \prob{(  U^n(1), Y_{1}^n  )\notin \Aset^{\nicefrac{\delta}{2}}(P_{U} P^{q^*}_{Y_1|U})  } \,,
	\end{align}
which  tends to zero exponentially as $n\rightarrow\infty$ by the law of large numbers and  Chernoff's bound.

	We move to the second term in the RHS of (\ref{eq:pRYcompoundDFE1bound}). 
	By the union of events bound and since the number of type classes in $\Sset^n$ is bounded by $(n+1)^{|\Sset|}$,   we have that  
\begin{align}
\label{eq:pRYSE2poly}
\prob{ \Eset_{1,2}} 
\leq& (n+1)^{|\Sset|}\cdot \sup_{q'\in\tQ} \prob{
(  U^n(m_1), Y_{1}^n  )\in \tset(P_{U} P^{q'}_{Y_1|U}) \;\text{ for some $m_1\neq 1$} 
} \nonumber\\
\leq& (n+1)^{|\Sset|}\cdot 2^{nR_1}  \cdot
 \sup_{q'\in\tQ}\left[  \sum_{u^n} P_{U^n}(u^n) \cdot \sum_{y_1^n \,:\; (u^n,y_1^n)\in \tset(P_{U} P^{q'}_{Y_1|U})} P_{Y_1^n}^{q^*}(y_1^n)
\right] \,,
\end{align}
where the last line follows since $U^n(m_1)$ is independent of $Y_{1}^n$, for every $m_1\neq 1$. 
 Let $y_1^n$ satisfy $(u^n,y_1^n)\in \tset(P_{U} P^{q'}_{Y_1|U})$. Then, $\,y_1^n\in\Aset^{\delta_2}(P_{Y_1}^{q'})$ with $\delta_2\triangleq |\Uset| \cdot\delta$. By Lemmas 2.6 and 2.7 in \cite{CsiszarKorner:82b},
\begin{align*}
%\label{eq:RYSpYbound1}
P_{Y_1^n}^{q^*}(y_1^n)=2^{-n\left(  H(\hP_{y_1^n})+D(\hP_{y_1^n}||P_{Y_1}^{q^*})
\right)}\leq 2^{-n H(\hP_{y_1^n})}
\leq 2^{-n\left( H_{q'}(Y_1) -\eps_1(\delta) \right)} \,,
\end{align*}
where $\eps_1(\delta)\rightarrow 0$ as $\delta\rightarrow 0$. Therefore, by (\ref{eq:pRYSE2poly})
%-$(\ref{eq:pRYSpYbound}), 
along with  \cite[Lemma 2.13]{CsiszarKorner:82b},
\begin{align}
& \prob{ \Eset_{1,2}}         																									
\leq%&
 \;(n+1)^{|\Sset|}\cdot \sup_{q'\in\Qset} 
2^{-n[ I_{q'}(U;Y_1) 
-R_1-\eps_2(\delta) ]} \label{eq:pRYSLexpCR} \,,
\end{align}
with $\eps_2(\delta)\rightarrow 0$ as $\delta\rightarrow 0$.
 %The RHS of (\ref{eq:RYSLexpCR})
 %exponentially tends to zero as $n\rightarrow\infty$, provided that $R'<\inf_{q'\in\Qset} I_{q'}(U;Y_1|X_1)
%-\eps_3(\delta)$.
We now have by (\ref{eq:pRYcompoundDFE1bound}) that $\prob{\Eset_1}$  tends to zero exponentially as $n\rightarrow\infty$, provided that $R_1<\inf_{q'\in\Qset} I_{q'}(U;Y_1)
-\eps_2(\delta)$.

As for the erroneous decoding of $M_1$ at the receiver,  define the events,
\begin{align}
\Eset_{2,1}=& \{ (  U^n(1), Y^n  )\notin \tset(P_{U} P^{q'}_{Y|U})  \;\text{ for all $q'\in\tQ$} \} \nonumber\\
\Eset_{2,2}=& \{ (  U^n(m_1), Y^n  )\in \tset(P_{U} P^{q'}_{Y|U}) \,,\; %\nonumber\\ &\quad\quad 
\text{for some $m_1\in\Dset(L)$, $m_1\neq 1$, $q'\in\tQ$} \} \,,
\end{align}
where $L$ is the index sent by the relay.
 Then,
\begin{align}
\prob{\Eset_2\cap\Eset_1^c}\leq& 
\prob{\Eset_{2,1}} %\nonumber\\& 
+\prob{\Eset_{2,2}\cap\Eset_1^c} \,.
\label{eq:pRYcompoundE2b}
\end{align}
By similar arguments to those used above, we have that $\prob{\Eset_{2,1}}$
%\begin{align}
%\prob{\Eset_{2,1}}\leq&
%\prob{(  U^n(1), Y^n  )\notin \Aset^{\nicefrac{\delta}{2}}(P_{U} P^{q^*}_{Y|U}) } 
%\,,
%\label{eq:pRYcompoundE21b}
%\end{align}
%which
  tends to zero exponentially as $n\rightarrow\infty$ by the law of large numbers and Chernoff's bound.
Moving to the second term on the RHS of (\ref{eq:pRYcompoundE2b}),
observe that given $\Eset_1^c$, the relay sends the index $L$ for which $M_1\in \Dset(L)$, \ie the decoder receives $L=1$.
 Thus, by similar arguments to those used for the bound on $\prob{ \Eset_{1,2}}$, we have that
\begin{align}
\label{eq:pRYSE22poly}
\prob{\Eset_{2,2}\cap\Eset_1^c} %\nonumber\\
%\leq& (n+1)^{|\Sset|}\cdot \sup_{q'\in\tQ} \prob{
%(  U^n(m_1), Y_{1}^n  )\in \tset(P_{U} P^{q'}_{Y_1|U}) \;\text{ for some $m_1\neq 1$} 
%} \nonumber\\
\leq& (n+1)^{|\Sset|}\cdot 2^{n(R_1-C_1)}  \cdot
 \sup_{q'\in\tQ}\left[  \sum_{u^n} P_{U^n}(u^n) \cdot \sum_{y^n \,:\; (u^n,y^n)\in \tset(P_{U} P^{q'}_{Y|U})} P_{Y^n}^{q^*}(y^n)
\right]
\nonumber\\
\leq&
 \;(n+1)^{|\Sset|}\cdot \sup_{q'\in\Qset} 
2^{-n[ I_{q'}(U;Y) 
-R_1+C_1-\eps_3(\delta) ]} 
 \,,
\end{align}
with $\eps_3(\delta)\rightarrow 0$ as $\delta\rightarrow 0$. By (\ref{eq:pRYcompoundE2b}), we have that
the second term in the RHS of (\ref{eq:pRYCompoundCerrBound}) tends to zero exponentially as $n\rightarrow\infty$, provided that 
$R_1<\inf_{q'\in\Qset} I_{q'}(U;Y)+C_1-\eps_3(\delta)$

Moving to the error event for $M_2$, define
\begin{align}
\Eset_{3,1}=& \{ (  U^n(\hM_1), X^n(\hM_1,1), X_{1,b}(\hM_1),
 Y^n  )\notin \tset(P_{U,X} P^{q'}_{Y|X}) \,,\; %\nonumber\\ &\quad\quad
\text{for all $q'\in\tQ$} \} \nonumber\\
\Eset_{3,2}=& \{ (  U^n(\hM_1), X^n(\hM_1,m_2),  Y^n  )\in \tset(P_{U,X} P^{q'}_{Y|X}) \,,\; 
%\nonumber\\ &\quad\quad 
\text{for some $m_2\neq 1$, $q'\in\tQ$} \} \,.
\end{align}
Given $\Eset_2^c$, we have that $\hM_1=1$. Then, by similar arguments to those used above,
\begin{align}
&\prob{\Eset_3\cap\Eset_1^c\cap \Eset_2^c} %\nonumber\\
\leq  \prob{\Eset_{3,1}\cap\Eset_1^c\cap \Eset_2^c}
 +\prob{\Eset_{3,2}\cap\Eset_1^c\cap \Eset_2^c} \nonumber\\
\leq& e^{-a_0 n}
% \cprob{(  U_b^n(1|1), X_{b}^n(1,1|1), X_{1,b}^n(1), Y_{b}^n  )\notin \Aset^{\nicefrac{\delta}{2}}(P_{U,X} P^{q}_{Y|U,X})}{\Eset_1(b)^c}
+(n+1)^{|\Sset|}\cdot \sup_{q'\in\Qset} %\nonumber\\&
 \sum_{m_2\neq 1}\cprob{(  U^n(1), X^n(1,m_2|1), Y^n  )\in \tset(P_{U,X} P^{q'}_{Y|X})}{\Eset_1^c}
\nonumber\\
\leq&e^{-a_0 n}+  (n+1)^{|\Sset|}\cdot \sup_{q'\in\Qset} 
2^{-n[ I_{q'}(X;Y|U) -R_2-\eps_4(\delta) ]}
\label{eq:pRYcompoundE3b}
\end{align}
where $a_0>0$ and $\eps_4(\delta)\rightarrow 0$ as $\delta\rightarrow 0$. 
The second inequality holds by the law of large numbers and Chernoff's bound, and the last inequality holds as $X^n(1,m_2)$ is conditionally independent of $Y^n$ given $U^n(1)$ for every $m_2\neq 1$.
Thus, the third term on the RHS of (\ref{eq:pRYCompoundCerrBound})  tends to zero exponentially as $n\rightarrow\infty$, provided that $R_2< \inf_{q'\in\Qset} I_{q'}(X;Y|U)-\eps_4(\delta)$.
Eliminating $R_1$ and $R_2$, we conclude that the probability of error, averaged over the class of the codebooks, exponentially decays to zero  as $n\rightarrow\infty$, provided that $R<\pRYdIRcompound$. Therefore, there exists a $(2^{nR},n,\eps)$ deterministic code, for a sufficiently large $n$.
 %The proof is completed by taking $n\rightarrow\infty$ and $\delta\rightarrow 0$.
\qed

\subsection{Cutset Upper Bound}
This is a straightforward consequence of the cutset bound in \cite[Proposition 1]{Kim:07c}.
Assume to the contrary that there exists an achievable rate $R>\pRYICcompound$. Then, for some 
$q^*(s)$ in the closure of $\Qset$, 
\begin{align}
\label{eq:pcutsetCompound1}
R>\max_{p(x)} 
\min \left\{ I_{q^*}(X;Y)+C_1 \,,\; I_{q^*}(X;Y,Y_1)
\right\} \,.
\end{align}
By the achievability assumption, we have that for every $\eps>0$ and sufficiently large $n$, there exists a $(2^{nR},n)$ random code $\code^\Gamma$ such that $\err(q,\code)\leq\eps$ for every i.i.d. state distribution
 $q\in\Qset$, and in particular for $q^*$. This holds even if $q^*$ is in the closure of $\Qset$ but not in $\Qset$ itself, since $\err(q,\code)$ is continuous in $q$.
Consider using this code over a primitive relay channel $W_{Y,Y_1|X}$ without a state, where
$%\begin{align}
W_{Y,Y_1|X}(y,y_1|x)=\sum_{s\in\Sset} q^*(s) \prc(y,y_1|x,s) %\,.
$. %\end{align}
It follows that the rate $R$ as in (\ref{eq:pcutsetCompound1}) can be achieved over the relay channel $W_{Y,Y_1|X,X_1}$, in contradiction to \cite{Kim:07c}. We deduce that the assumption is false, and $R>\pRYICcompound$ cannot be achieved.
\qed

\section{Proof of Corollary~\ref{coro:pRYcompoundDeg}}
\label{app:pRYcompoundDeg}
This is a straightforward consequence of Lemma~\ref{lemm:pRYcompoundDF}, which states that
the capacity of the compound primitive relay channel is bounded by 
$\pRYdIRcompound\leq\pRYCcompound\leq \pRYICcompound$.
%lower bounded by $\RYCcompound \geq \RYdIRcompound$ and upper bounded by  
%$\RYCcompound\leq \RYrCcompound\leq \RYICcompound$.
Thus, if $\prc$ is reversely strongly degraded,  $I_q(X;Y,Y_1)=I_q(X;Y)$, and
the bounds coincide by the minimax theorem \cite{sion:58p}, \cf (\ref{eq:pRYICcompound}) and (\ref{eq:pRYcompoundDirectTran}).
Similarly, if $\prc$ is strongly degraded, then  $I_q(X;Y,Y_1)=I(X;Y_1)$, and by (\ref{eq:pRYICcompound}) and (\ref{eq:pRYcompoundFullDF}), %along with the minimax theorem \cite{sion:58p}, the bounds coincide.
\begin{align}
\pRYICcompound=& \min_{q(s)\in\Qset}  \max_{p(x)} 
\min \left\{ I_q(X;Y)+C_1 \,,\; I(X;Y_1)
\right\} \ \,, 
\label{eq:pRYdegradedICcompound}
\\
\pRYdIRcompound=&   \max_{p(x)} \min_{q(s)\in\Qset} 
\min \left\{ I_q(X;Y)+C_1 \,,\; I(X;Y_1)
\right\} \,.
\label{eq:pRYdegradedIRcompound}
\end{align} 
%If $\Qset$ is a convex, then the bounds (\ref{eq:RYdegradedICcompound}) and (\ref{eq:RYdegradedIRcompound}) coincide by the minimax theorem \cite{sion:58p}.
Observe that $\min \left\{ I_q(X;Y)+C_1 \,,\; I(X;Y_1)
\right\}$ is concave in $p(x)$ and quasi-convex in $q(s)$ (see \eg \cite[Section 3.4]{BoydVandenbergh:04b}), hence  the bounds (\ref{eq:pRYdegradedICcompound}) and (\ref{eq:pRYdegradedIRcompound}) coincide by the minimax theorem \cite{sion:58p}.
\qed

\section{Proof of Theorem~\ref{theo:pRYmain}}
\label{app:pRYmain}
Consider a primitive AVRC $\pavrc$.
%First, we explain the general idea. %Before we step into the formal proof, we explain the general idea. 
%We modify Ahlswede's Robustification Technique (RT)  \cite{Ahlswede:86p} to the relay channel. Namely, we use codes for the compound  relay channel to construct a random code for the AVRC using randomized permutations. However, in our case, the strictly causal nature of the relay imposes a difficulty, and the application of the RT is not straightforward.

%In \cite{Ahlswede:86p}, there is noncausal state information  and a random code is defined via permutations of the codeword symbols and the received sequence. Here, however, the relay cannot apply permutations to  its transmission $x_1^n$, because it depends on the received sequence $y_1^n$ in a strictly causal manner. %At time $i\in [1:n]$, the encoder cannot compute $\enc_j(m_0,m_1,s^j)$, for $j>i$, since the encoder does not have access to $s_{i+1},\ldots,s_j$, but only to $s_1,\ldots,s_i$. 
%%
%We resolve this difficulty using block Markov codes for the block-compound  relay channel to construct a random code for the AVRC, applying $B$ %independent
 %in-block permutations to the relay transmission, which depends only on the sequence received in the \emph{previous block}. The details are given below.

\subsection{Partial Decode Forward Lower Bound}
The proof is based on Ahlswede's RT \cite{Ahlswede:86p},
%We show that every rate $R<\pRYdIRavc$  (see (\ref{eq:pRYIRcompoundP})) can be achieved by random codes over the primitive AVRC $\pavrc$,  \ie $\RYCavc \geq \RYdIRavc$.
%We start with Ahlswede's RT \cite{Ahlswede:86p}, 
stated below. Let $h:\Sset^n\rightarrow [0,1]$ be a given function. If $%\begin{align}
%\label{eq:pRYRTcondC}
\sum_{s^n\in\Sset^n} \qn(s^n)h(s^n)\leq \alpha_n %\,,
$, %\end{align}, 
 for 
all $ \qn(s^n)=\prod_{i=1}^n q(s_i)$, %with 
$q\in\pSpace(\Sset)$, with %for some fixed
 $\alpha_n\in(0,1)$,
then,
\begin{align}
\label{eq:pRYRTresC}
\frac{1}{n!} \sum_{\pi\in\Pi_n} h(\pi s^n)\leq \beta_n \,,\quad\text{for all $s^n\in\Sset^n$} \,,
\end{align}
where $\Pi_n$ is the set of all $n$-tuple permutations $\pi:\Sset^n\rightarrow\Sset^n$, and 
$\beta_n=(n+1)^{|\Sset|}\cdot\alpha_n$. 

Let $R<\pRYdIRavc$. According to %Corollary~\ref{coro:RYcompoundDFb}, 
Lemma~\ref{lemm:pRYcompoundDF}, there exists a  $(2^{nR},n,e^{-2\theta n})$ %$(2^{nR(B-1)},$  $nB,$ $e^{-2\theta n})$ block Markov
 code for the compound primitive relay channel %block-compound relay channel $\avrc^{\pSpace(\Sset)\times B}$ 
$\pRYcompoundP$,  for some $\theta>0$ and sufficiently large $n$. %, where $B>0$ is arbitrarily large. 
Given such a  code $\code=(f,f_1,g)$ for %the compound primitive relay channel
 $\pavrc^{\pSpace(\Sset)}$,  
 we have that  
%\begin{align}
%&
$\sum_{s^n\in\Sset^n} q(s^n) h(s^n) \leq e^{-2\theta n} %\,,\;
%\label{eq:pRYrCh1eps}
$, %\end{align}
for all $q\in\pSpace(\Sset)$, where
\begin{align}
&h(s^n)=  \frac{1}{2^{nR}}\sum_{m\in [1:2^{nR}]}  
\sum_{(y^n,y_1^n) : g(y^n,f_1(y_1^n))\neq m } W_{Y^n,Y_1^n|X^n,S^n}(y^n,y_1^n| 
f(m),s^n  )\,.
\label{eq:pRYrch1}
\end{align}
Hence, applying  Ahlswede's RT, we have that for a sufficiently large $n$,
\begin{align}
&\frac{1}{n!} \sum_{\pi\in\Pi_n} h(\pi s^n)\leq 
(n+1)^{|\Sset|}e^{-2\theta n} 
\leq e^{-\theta n}  \,,\;\text{for all $s^n\in\Sset^{n}$} \,.
\label{eq:pRYdetErrC}
\end{align} 
 %and sufficiently large $n$, such that $(n+1)^{|\Sset|}\leq e^{\theta n}$.  %, due to Ahlswede's RT (see (\ref{eq:RTcondC}) and (\ref{eq:RTresC})).

On the other hand, for every $\pi\in\Pi_n$, 
\begin{align}
h(\pi s^n)=& \frac{1}{2^{nR}}\sum_{m\in [1:2^{nR}]}  \sum_{(y^n,y_1^n) : g(y^n,f_1(y_1^n))\neq m } W_{Y^n|X^n,S^n}%W^n
(y^n,y_1^n| f(m),\pi s^n  )\nonumber\\
 \stackrel{(a)}{=}& \frac{1}{2^{nR}}\sum_{m\in [1:2^{nR}]}  
\sum_{(y^n,y_1^n) : g(\pi y^n,f_1(\pi y_1^n))\neq m } W_{Y^n|X^n,S^n}%W^n
(\pi y^n| f(m),\pi s^n  )\nonumber\\
 \stackrel{(b)}{=}& \frac{1}{2^{nR}}\sum_{m\in [1:2^{nR}]}  
\sum_{(y^n,y_1^n) : g(\pi y^n,f_1(\pi y_1^n))\neq m } W_{Y^n|X^n,S^n}%W^n
( y^n| \pi^{-1} f(m),  s^n  )  \,,
\label{eq:pRYcerrpi1}
\end{align}
%with $W^n\equiv W_{Y^n|X^n,S^n}$ for short notation,
where in $(a)$ we change %is obtained by changing
 the order of summation over $(y^n,y_{1}^n)$, and $(b)$ holds because the  channel is memoryless.
%with $W^n\equiv \nRC$ for short notation.
%The last expression suggests the use of permutations applied to the encoding \emph{strategy sequence} and the channel output sequences.
%
Then, consider the $(2^{nR},n)$ random code $\code^\Pi$, specified by 
%\begin{subequations}
%\label{eq:pRYCpi}
%\begin{align}
$%&
f_{\pi}(m)= \pi^{-1} f(m) %\,,\quad\nonumber\\&
$, $
f_{1,\pi}(y_{1}^n)= f_{1}(\pi y_{1}^n) %\,,
$, and %\intertext{and}
$%&
g_{\pi}(y^n,\ell)=g(\pi y^n,\ell)
$, %\end{align}
%\end{subequations}
for $\pi\in\Pi_n$,
with a uniform distribution $\mu(\pi)=\frac{1}{n!}$. 
 By (\ref{eq:pRYcerrpi1}), %and (\ref{eq:RYcerrpi2}), %(\ref{eq:BcorrErrC}), we see that 
 the probability of error of this random code %for $\code^\Pi$  
is bounded by 
$
\err(\qn,\code^\Pi)\leq e^{-\theta n} %\,,
$, 
for every $\qn(s^{n})\in\pSpace(\Sset^{n})$. %That is, $\code^\Pi$ is a $(2^{nR},n,e^{-\theta n})$ random  code for the primitive AVRC $\pavrc$.
\qed %\end{proof}

\subsection{Cutset Upper Bound}
The proof immediately follows from Lemma~\ref{lemm:pRYcompoundDF},
since the random code capacity of the primitive AVRC is bounded by the random code capacity  of the compound primitive relay channel, \ie
$%\begin{align}
%\label{eq:RYOutercomp2}
\pRYrCav \leq \pRYrCcompoundP %\,.
$. %\end{align}
\qed %\end{proof}

\section{Proof of Lemma~\ref{lemm:pRYcorrTOdetC}}
\label{app:pRYcorrTOdetC}
 We follow the lines of \cite{Ahlswede:78p}, with the required adjustments. We use the random code constructed in the proof of  Theorem~\ref{theo:pRYmain}. Let $R<\pRYrCav$, and consider the case where
$C_1>0$ and the marginal sender-relay  AVC has positive capacity, \ie
\begin{align}
\label{eq:pRYrpos}
\opC(\avc_1)>0 \,,
\end{align}
(see (\ref{eq:pRYmarginAVCs})).
%Following Ahlswede's Elimination Technique (see Lemma~\ref{lemm:BcorrSizeC}),
%By Lemma~\ref{lemm:RYcorrSizeC},
By Ahlswede's Elimination Technique \cite{Ahlswede:78p}, % Lemma~\ref{lemm:BcorrSizeC},
  for every $\eps>0$ and sufficiently large $n$, 
 there exists a $(2^{nR},n,\eps)$ random  code  
$
\code^\Gamma=\big(\mu(\gamma)=\frac{1}{k},\Gamma=[1:k],\{\code_\gamma \}_{\gamma\in \Gamma}\big) 
$, 
where
$\code_\gamma=(\encn_\gamma,\enc_{1,\gamma},\dec_{\gamma})$, for $\gamma\in\Gamma$, 
 and 
$k=|\Gamma|\leq n^2 $. 
Following (\ref{eq:pRYrpos}),  we have that for every $\eps_1>0$ and sufficiently large $\nu$, the code index $\gamma\in [1:k]$ can be sent through the relay channel $W_{Y_1|X,S}$ using a $(2^{\nu\bR},\nu,\eps_1)$ deterministic code 
$ 
\code_{\text{i}}=(\tf^{\nu},\gnu)  
$, where $\bR>0$. Since $k$ is at most polynomial, %Due to (\ref{eq:BpolySize}), 
 the encoder can reliably convey $\gamma$ to the relay with a negligible blocklength, \ie
$ %\begin{align}
%\label{eq:BlittleO} 
\nu=o(n) %\,.
$. %\end{align} 

Now, consider a code %of length $\nu+n$
 formed by the concatenation of $\code_{\text{i}}$ as a prefix to a corresponding code in the code collection $\{\code_\gamma\}_{\gamma\in\Gamma}$. 
That is, the encoder first sends the index $\gamma$ to the relay, and then it sends the message $m\in [1:2^{nR}]$ to the receiver. Specifically, the encoder first transmits  $\tx^\nu=\tf^{\nu}(\gamma)$ in order to convey the index $\gamma$ to the relay. At the end of this transmission, the relay uses the first $\nu$ symbols it received to  estimate the code index as $\hgamma=\tg(\ty_1^{\nu})$. Since $C_1>0$ and  $k$ is at most polynomial in $n$,
 the relay can reliably convey its estimation $\hgamma$ to the receiver with a negligible blocklength
$ %\begin{align}
%\label{eq:BlittleO} 
\nu'=o(n) %\,.
$. %\end{align}

Then, the message $m$ is transmitted by the codeword $x^n=\enc_\gamma(m)$. The decoder uses the estimated index $\hgamma$ received from the relay, and the message is estimated by  
$\widehat{m}=$ $g_{\hgamma}(y^n)$.  
 By the union of events bound, the probability of error %of $\tcode$
 is then bounded by $\eps_c=\eps+\eps_1$, for every joint distribution in $\pSpace(\Sset^{\nu+\nu'+n})$. 
That is, the concatenated code is a $(2^{(\nu+\nu'+n)\tR_n},\nu+\nu'+n,\eps_c)$ code over the primitive AVRC $\pavrc$, where %with causal SI, where $\nu=o(n)$. Hence, %By (\ref{eq:BlittleO}),
  the blocklength is $n+o(n)$, and  %by (\ref{eq:BRtotal}) we have that
the rate   $\tR_{n}=\frac{n}{\nu+\nu'+n}\cdot R$  approaches $R$ as $n\rightarrow \infty$. 
\qed

\section{Proof of Theorem~\ref{theo:pRYmainDbound}}
\label{app:pRYmainDbound}
Consider part 1.
If $W_{Y_1|X,S}$ is non-symmetrizable, then  $\opC(\avc_1)>0$ by \cite[Theorem 1]{CsiszarNarayan:88p}. Hence, by Lemma~\ref{lemm:pRYcorrTOdetC},
$\pRYCavc=\pRYrCav$, and by Theorem~\ref{theo:pRYmain}, $\pRYdIRavc \leq \pRYCavc\leq\pRYrICav$. 
Part 2 and part 3 follow from part 1 and Corollary~\ref{coro:pRYavrcDeg}.
Part 4 follows by the arguments in \cite[Appendix G]{PeregSteinberg:17a2}. %the proof of Lemma~\ref{lemm:RYzeroCsymm} in Appendix~\ref{app:RYzeroCsymm}, thus, the details are omitted.
\qed
\end{appendices}

%\IEEEtriggeratref{5}
%\vspace{-0.35cm}
\printbibliography
 
\end{document}